\documentclass[12pt]{article}
\usepackage{graphicx} 
\usepackage[utf8]{inputenc}
\usepackage{amsmath}
\usepackage{amsthm}
\usepackage{natbib}
\usepackage{color}
\usepackage{ulem}
\usepackage{algorithm}
\usepackage{algpseudocode}
\usepackage{url}

\newtheorem{subprob}{SP}
\newtheorem{thm}{Theorem}
\newtheorem{prop}{Proposition}

\begin{document}
\title{Statistical equilibria of two-dimensional turbulent flows for generic initial vorticity fields on a sphere, calculated on the basis of the original Miller-Robert-Sommeria theory}
\author{K Ryono\footnote{Current affiliation: Research Institute for Applied Mechanics, Kyushu University.}, K Ishioka\\
Graduate School of Science, Kyoto University,\\ Kitashirakawa-Oiwake-cho, Sakyo-ku, Kyoto 606-8502, Japan}
\date{revised: 22 October 2024, accepted: 3 December 2024}

\begin{abstract}
Based on the original Miller-Robert-Sommeria theory, we explicitly compute a statistical equilibrium of two-dimensional turbulent flow on a sphere for a generic initial vorticity field introduced in a previous study. The macroscopic vorticity field corresponding to the obtained statistical equilibrium has a quadrupole structure. The resulting quadrupole structure is topologically consistent with the final state of the long-term time integration of the vorticity equation. However, the statistical equilibrium does not predict the formation of concentrated vortices as seen in the time integration. We also calculate statistical equilibria for the initial vorticity field with a planetary vorticity term, and find a {change of statistical equilibria} from quadrupole states to zonally symmetric states as the angular velocity of the sphere increases. The quadrupole statistical equilibria show nearly linear relations between the macroscopic vorticity and the macroscopic stream function, implying that higher-order Casimir invariants are virtually ineffective even when all Casimir invariants are considered. The discrepancy between the equilibria and the time integration results emphasizes the importance of mixing barriers, which prevent the relaxation of the evolving vorticity field to the statistical equilibria and allow the point-vortex-like dynamics of coherent vortices to persist.
\end{abstract}

\noindent{\it keywords\/}: two-dimensional turbulence, Miller-Robert-Sommeria theory, statistical equilibrium, Casimir invariant, rotating sphere

\maketitle

\textit{This is the Accepted Manuscript version of an article accepted for publication in Fluid Dynamics Research. IOP Publishing Ltd is not responsible for any errors or omissions in this version of the manuscript or any version derived from it. The Version of Record is available online at \url{https://doi.org/10.1088/1873-7005/ad9a75}.} 

\section{Introduction}\label{section_introduction}
What patterns of vorticity or velocity fields arise in the time evolution of two-dimensional turbulence, and why such patterns arise, has long been an open problem in fluid mechanics. One category of theories for pattern formation from two-dimensional turbulence is based on statistical mechanics. The Miller-Robert-Sommeria theory (MRS theory, \citealt{robert1991statistical}, \citealp{miller1990statistical}) is one of the theories in this category. The MRS theory defines the statistical equilibrium state as the state that maximizes the mixing entropy of the vorticity field under the hypothesis of ergodicity about vorticity mixing. Soon after the proposal of the theory, it was shown that statistical equilibria were consistent with the patterns in the vorticity field resulting from long-term time evolutions in cases where the initial vorticity fields consisted of one or two vortex strips in a channel geometry (\citealp{sommeria1991final}, \citealp{thess1994inertial}). For a spherical domain, \cite{herbert2012statistical} and \cite{herbert2013additional} considered MRS-like theory simplified by using generalized entropy \citep{bouchet2008simpler} and showed that the energy condenses into the total wavenumber two modes of spherical harmonics expansion of the macroscopic vorticity field. The MRS theory for a spherical domain was also studied by \cite{qi2014hyperviscosity}. They compared the MRS theory that considered only second-order Casimir invariants (MRS-2) with the MRS theory that partially considered Casimir invariants up to the fourth order (perturbative MRS-4), and found that the perturbative MRS-4 predicted more intense coherent vortices as the statistical equilibria, which was more consistent with the result of long-term time evolution than the prediction by the MRS-2.

Recently, new time integration methods have been employed to perform longer time integrations for two-dimensional Euler flows, providing a clearer picture to solve this problem. \cite{dritschel2015late} used the Combined Lagrangian Advecting Method (CLAM) and the Geodesic Grid Method (GGM) and showed that a randomly generated initial vorticity field evolved to a long-term stable state with four coherent vortices, which exhibited quasi-periodic motion. By considering a finite-dimensional approximation of the geometric structure of the Euler equation, \cite{modin2020casimir} devised a numerical integration method that conserves many Casimir invariants, the number of which is proportional to the discretization resolution. They used this method to reexamine the time integration carried out by \cite{dritschel2015late}, and confirmed the robustness of the four-vortex regime in the final state. In addition, \cite{modin2020casimir} performed time integrations from initial vorticity fields with non-zero angular momentum. It was found that with increasing angular momentum, the number of coherent vortices in the final state tends to gradually decrease from four, to three, to two. In the above and similar studies, a two-dimensional sphere is often used as the flow domain. This choice is based on the following reasons. (1) A sphere is the simplest closed surface, and there is no need to consider the effects of boundary conditions. (2) Two-dimensional turbulence on a rotating sphere is sometimes regarded as a simple model of the atmospheric flows of the Earth and other planets, and therefore it is a subject of interest in geophysical fluid dynamics. (3) Because of the geophysical interest, numerical methods for flows on a spherical surface are well developed. (4) There are conserved quantities (angular momentum) associated with rotational symmetries on a sphere, and the relationship between these conserved quantities and the pattern structure of the flow is of interest. However, the final state obtained by time integration of the vorticity field in \cite{dritschel2015late} and \cite{modin2020casimir} did not satisfy the relation between the vorticity and the stream function that statistical equilibrium states should satisfy in the MRS theory. On the basis of this result, they concluded that there was an inconsistency between the statistical equilibrium state and the final state of the time evolution. Before \cite{dritschel2015late} and \cite{modin2020casimir}, the existence of such an inconsistency had already been discussed in the case of some special initial vorticity fields on a flat torus. \cite{segre1998late} showed that statistical equilibria were not necessarily reached in the time evolution of the vorticity field from an initial vorticity field consisting of two values of vorticity patches, casting doubt on correspondence between statistical equilibria and patterns in time integration results. \cite{morita2011collective} showed that oscillating states emerged from two-dimensional turbulence that did not reach stationary or translational states. While such previous studies had existed, \cite{dritschel2015late} was new in that it showed the existence of such an inconsistency even when the initial vorticity field was randomly generated, which \cite{modin2020casimir} termed ``generic''.  

{How does the prediction of the MRS theory differ from the coherent states that appear in the time integrations? One possibility is that the MRS theory successfully predicts the number and shapes of the coherent vortices, but fails to predict the asymmetric distribution of vorticity among the four vortices. In fact, \cite{qi2014hyperviscosity} also performed a time integration similar to that of \cite{dritschel2015late}, and noted that the resulting state of four coherent vortices shows two-branched relation between the vorticity and the stream function, where each branch corresponds to a pair of positive and negative vortices. Each branch appears as a ``sinh-type'' curve. The sinh-type functional relation of the vorticity and the stream function is typically seen in the statistical equilibrium in the MRS theory, representing a pair of anti-signed two vortices with concentrated cores. Such two-branched relations are repeatedly observed in \cite{dritschel2015late} and \cite{modin2020casimir}. Hence, one can imagine a statistical equilibrium consisting of two stationary pairs of anti-signed two concentrated vortices, with vorticity equally distributed to each pair, where the sinh-type curves corresponding to each pair match. If this assumption is true, then the quasi-periodic states observed after long time evolution can be considered as a perturbed state of the statistical equilibrium. There is another possibility that the MRS theory totally fails, and the statistical equilibrium is quite different from the observed coherent states.} 

The previous studies cited above, however, did not calculate the statistical equilibria explicitly, nor did they compare the statistical equilibria with the final states of the time evolution. One reason for this seems to be that the calculation of the statistical equilibrium based on the original MRS theory is very difficult, especially when the initial vorticity field consists of many levels of vorticity patches.  
Recently, a new numerical method to compute the statistical equilibria strictly based on the MRS theory was proposed by \cite{ryono2022new}, where the spherical harmonic expansion coefficients of the macroscopic vorticity field were treated as variables in the entropy maximization problem. They then proposed a method to search for the maximum mixing entropy using the gradient method on a quadratic surface of constant energy. By using the proposed method, they computed the statistical equilibrium state for an initial vorticity field consisting of 32 levels of vorticity patches, which imitated an analytically defined zonal initial vorticity field, and showed that the final state of the time evolution and the statistical equilibrium had similar vorticity field structures.

In this study, we compute the statistical equilibrium state for the initial vorticity field introduced by \cite{dritschel2015late} and investigate the correspondence between the statistical equilibrium and the final state of the long-term time evolution of the vorticity field. In the calculation of the statistical equilibrium, we use a newly proposed method to approximate continuous initial vorticity fields with a finite number of vorticity patches. In addition, we examine the effect of introducing a non-zero planetary vorticity term to the initial vorticity field, with focusing on how the planetary vorticity affects the number of coherent vortices in the statistical equilibrium to compare with the results of \cite{modin2020casimir}. To the best of the author's knowledge, the statistical equilibria for such generic initial vorticity fields are computed for the first time. The paper is organized as follows. In section \ref{section_theory}, a brief description of the MRS theory is given. Section \ref{section_method} describes the calculation method of \cite{ryono2022new} and a new method to approximate the initial vorticity field by vorticity patches. Section \ref{section_initialsetting} describes the definition of the initial vorticity fields, the time integration results from them, and the settings of the statistical equilibrium computation. In section \ref{section_results}, the statistical equilibria are computed and compared with the final state of time evolutions of the vorticity field. Finally, a discussion and conclusions are given in section \ref{section_discussion}.

\section{MRS theory on a sphere}\label{section_theory}
In this section, we describe a basic concept of the MRS theory on a sphere. For details of the theory, see \cite{miller1990statistical} and \cite{robert1991statistical}, as well as a review by \cite{bouchet2012statistical}, which provides an extensive history of statistical theories of two-dimensional turbulence, including the two-decade evolution of the MRS theory.

We consider a sphere \(S\) of radius \(1\), and {a} two-dimensional incompressible inviscid flow that is subject to the Euler equation:
\begin{align*}
    \frac{\partial q}{\partial t} + J(\psi, q) = 0.
\end{align*}
Here, \(q\) represents the vorticity and \(\psi\) represents the stream function, where they are related by the equation \(\Delta \psi = q\). The operator \(J(\cdot,\cdot)\) denotes Jacobian operator. In this paper, we use the longitude \(\lambda\) and the sine-latitude \(\mu\) as the coordinates of the sphere \(S\). By using these coordinates, the Jacobian operator can be written as,
\begin{align*}
    J(f, g) = \frac{\partial f}{\partial \lambda}\frac{\partial g}{\partial \mu}-\frac{\partial g}{\partial \lambda}\frac{\partial f}{\partial \mu},
\end{align*}
and the Laplacian operator \(\Delta\) is written as,
\begin{align*}
    \Delta = \frac{1}{1-\mu^2} \frac{\partial^2}{\partial \lambda^2} + \frac{\partial}{\partial \mu} \left\{(1-\mu^2)\frac{\partial}{\partial \mu}\right\}.
\end{align*}
On the sphere, the area element is \({\rm d} S = {\rm d}\lambda {\rm d}\mu\). Since the fluid is incompressible, it is obvious that the Casimir invariants: 
\begin{align*}
    C_f = \frac{1}{4\pi} \int_S f(q) {\rm d} S,
\end{align*}
are conserved for any function \(f\). Note that, in this system the three components of angular momentum are conserved due to the rotational symmetry of the domain, which is the main difference from the system on a flat torus.

Assuming that the initial vorticity field consists of a finite number, \(K\), of vorticity patches. Let \(Q_k\) and \(S_k\,(k=1,\cdots, K)\) represent the vorticity value and the area of each vorticity patch, respectively. The integral of vorticity over \(S\) must be zero, and the sum of the areas of all vorticity patches must equal to the area of \(S\), i.e., 
\begin{align*}
    \sum_{k=1}^K Q_k S_k = 0,\qquad \sum_{k=1}^K S_k = 4\pi.
\end{align*}
As the vorticity field evolves from this initial state and the flow becomes highly turbulent, the vorticity patches are deformed to have complex boundaries and filamentous shapes. Thus, the vorticity value at each point of \(S\) can be viewed as a probabilistic variable. Let \(r_k(x)\) denote the probability of observing the \(k\)--th vorticity patch at a point \(x\in S\). From the definition of probability, 
\begin{align}
    \sum_{k=1}^K r_k(x) = 1\label{sum_be_unity}
\end{align}
stands. In addition, the area of \(k\)--th patch is conserved due to the incompressibility of the flow:
\begin{align}
    \int_S r_k(x) {\rm d} S = S_k.\label{area_preservation}
\end{align}
In the Euler system, the vorticity field evolves with conservation of the energy and the angular momentum \(M=(M_1,M_2,M_3)\), therefore, 
\begin{align}
    E &= -\frac{1}{4\pi} \int_S \frac{1}{2}\overline{\psi}\overline{q} {\rm d} S=E^{\rm ini} ,\\
    M_1 &= \frac{1}{4\pi} \int_S \overline{q} \mu {\rm d} S=M_1^{\rm ini},\label{AM_constraint1}\\
    M_2 &= \frac{1}{4\pi} \int_S \overline{q} \sqrt{1-\mu^2} \cos \lambda {\rm d} S=M_2^{\rm ini},\label{AM_constraint2}\\
    M_3 &= \frac{1}{4\pi} \int_S \overline{q} \sqrt{1-\mu^2} \sin \lambda {\rm d} S=M_3^{\rm ini}\label{AM_constraint3} 
\end{align}
must be fulfilled. Here, 
\begin{align}
    \overline{q}(x) = \sum_{k=1}^K Q_k r_k(x)\label{qbar_def}
\end{align}
is the local expected value of the vorticity \(q\), which is called the macroscopic vorticity, and \(\overline{\psi}\) is the macroscopic stream function defined by \(\overline{q}=\Delta\overline{\psi}\). Here, \(E^{\rm ini}\) and \(M^{\rm ini}=(M_1^{\rm ini}, M_2^{\rm ini}, M_3^{\rm ini})\) are the values of the energy and the three components of angular momentum, respectively, that the initial vorticity field has. The statistical equilibrium is defined as the state which maximizes the mixing entropy:
\begin{align*}
    S_{\rm mix} := -\frac{1}{4\pi} \int_S r_k(x) \log r_k(x) {\rm d} S,
\end{align*}
under the constraints \eqref{sum_be_unity}--\eqref{AM_constraint3}.

\section{Computational methods}\label{section_method}
In this section, after briefly describing the numerical method for computing statistical equilibria introduced by \cite{ryono2022new}, we propose a new technique to determine a finite set of vorticity patches that approximate continuous initial vorticity fields. 

\subsection{Discretization of field variables}\label{subsec_discretize}
We take \(I\times J\) grid points \((\lambda_i, \mu_j)\, (i=1,\cdots, I; j=1,\cdots, J)\). Here, \(\lambda_i = 2\pi i/I\) and \(\mu_j\) is the \(j\)-th Gaussian node, which is the \(j\)-th zero of the Legendre polynomial \(P_J(\mu)\). Each \(r_k (\lambda,\mu)\) is represented by its values at the grid points: \(r_{ijk} = r_k(\lambda_i, \mu_j)\). Furthermore, we consider the expansion of the macroscopic vorticity field \(\overline{q}\) by spherical harmonics \(Y_{m,n}(\lambda,\mu)\) with the truncation wavenumber \(N\):
\begin{align}
    \overline{q}(\lambda_i, \mu_j) = \sum_{k=1}^K Q_k r_{ijk} = \sum_{n=1}^N \sum_{m=-n}^n (\hat{\xi}_{m,n}+\sqrt{-1}\hat{\eta}_{m,n}) Y_{m,n}(\lambda_i, \mu_j).\label{expansion}
\end{align}
Here, note that the spherical harmonics \(Y_{m,n}\) are normalized so that
\begin{align*}
    \frac{1}{4\pi}\int_{-1}^1\int_{0}^{2\pi} |Y_{m,n}(\lambda,\mu)|^2 d\lambda d\mu = 1.
\end{align*} For the definitions of the spherical harmonics, the Gaussian nodes and the Gaussian weights, see Appendix A and B of \cite{ryono2022new}. {We choose the value of \(I,J\) and \(N\) so that \(I\geq 3N+1\) and \(I=2J\), according to the conventional setting of the spectral method.} The expansion coefficients \(\hat{\xi}_{m,n}\in \mathbf{R}\) and \(\hat{\eta}_{m,n} \in \mathbf{R}\) must fulfill \(\hat{\xi}_{-m,n}=\hat{\xi}_{m,n}\) and \(\hat{\eta}_{-m,n}=-\hat{\eta}_{m,n}\). Therefore, the number of independent real coefficients is \((N+1)^2-1\). Furthermore, imposing the angular momentum constraints \eqref{AM_constraint1}, \eqref{AM_constraint2}, and \eqref{AM_constraint3} is equivalent to fixing \(\hat{\xi}_{0,1}, \hat{\xi}_{1,1}\) and {\(\hat{\eta}_{1,1}\)}. We denote a vector of the independent \((N+1)^2-4\) real coefficients by \(Z=(\hat{\xi}_{0,2}, \cdots, \hat{\xi}_{0,N}, \hat{\xi}_{1,2}, \hat{\eta}_{1,2}, \cdots, \hat{\xi}_{N,N}, \hat{\eta}_{N,N})\). Then, the energy constraint is written by a quadratic form of \(Z\):
\begin{align}
    \sum_{n=2}^N \frac{\hat{\xi}_{0,n}^2}{2n(n+1)} + \sum_{n=2}^N \sum_{m=1}^n \frac{\hat{\xi}_{m,n}^2+\hat{\eta}_{m,n}^2}{n(n+1)}=E_0.\label{energy_surface}
\end{align}
Note that \(E_0\) is the initial energy value minus the contribution of \(n=1\) components of the initial vorticity field. That is, 
\begin{align*}
    E_{0} = E^{\rm ini} - \frac{3}{4}\{{(M^{\rm ini}_1)}^2+{(M^{\rm ini}_2)}^2 +{(M^{\rm ini}_3)}^2\}.
\end{align*}

\subsection{Definition of a subproblem and change of the variable}\label{subsec_subprob}
We introduce a subproblem in the following way so that we can use \(Z\), which has \((N+1)^2-4\) components, as the variable of the entropy maximization problem instead of \((r_{ijk})_{1\leq i\leq I,1 \leq j\leq J, 1\leq k\leq K}\), which has \(IJK\) components. The subproblem is a problem to find \((r_{ijk})\in \mathbf{R}^{IJK}\) which maximizes \(S_{\rm mix}\) for a given \(Z\). The formulation of the subproblem can be written as:
\begin{subprob}\label{subproblem}
    On the set of \(r_{ijk}\geq 0\) which fulfills linear constraints 
        \begin{align}
        &\sum_{k=1}^K r_{ijk} = 1\qquad (i=1,\cdots, I; j=1,\cdots, J),\label{sum_be_unity_discrete}\\
        &\sum_{i,j} w_{j} r_{ijk} = \frac{S_{k}}{2\pi}\label{area_preservation_discrete}\qquad (k=1,\cdots, K)
    \end{align}
     and \eqref{expansion}, find an \((r_{ijk})\in \mathbf{R}^{IJK}\) which maximizes
    \begin{align*}
        S_{\rm mix} = - \frac{1}{2}\sum_{i,j,k}w_j r_{ijk}\log r_{ijk}.
    \end{align*}
\end{subprob}
\noindent Note that, equations \eqref{sum_be_unity_discrete} and \eqref{area_preservation_discrete} are the discretized forms of \eqref{sum_be_unity} and \eqref{area_preservation}. The Gaussian weights \(w_j\) are normalized so that their sum is equal to \(2/I\).

As explained in detail in \cite{ryono2022new} and its supplementary material, by introducing the subproblem SP 1 the entropy maximization problem becomes two-folded. Solving SP 1 and obtaining the maximizer \((r_{ijk})\) for a given \(Z\), we can consider \((r_{ijk})\) as a function of \(Z\). We then use \(Z\) as the variable and {maximize the mixing entropy as a function of \(Z\)}:
\begin{align*}
    S_{\rm mix}(r_{ijk}(Z)) = -\frac{1}{2}\sum_{i,j,k} w_k r_{ijk}(Z) \log r_{ijk}(Z).
\end{align*}

\noindent We can search for the statistical equilibrium by using a gradient method on the energy surface (e.g., the projection gradient method of \citealp{tanabe1974algorithm}), given an initial search point that belongs to the intersection of the surface {specified by \eqref{energy_surface}} and the interior of the feasible region \(P\). Here, the feasible region \(P\) is defined as the set of such \(Z\) that {equations \eqref{expansion}, \eqref{sum_be_unity_discrete}, and \eqref{area_preservation_discrete}} have a solution with \(r_{ijk} \geq 0\) for all \(i,j\) and \(k\).

\subsection{Approximation of the initial vorticity field by vorticity patches and determination of the initial search point}\label{subsec_patchgenerate}
We need a set of vorticity patches that approximate the initial vorticity field for the numerical computation of the statistical equilibrium. We also need an initial search point which satisfies the energy constraint and belongs to the feasible region in order to apply the method of \cite{ryono2022new}. \cite{ryono2022new} used some geometric tricks to generate initial search points, but we propose here a new method to determine a set of vorticity patches that approximate a given initial vorticity field. For the determined vorticity patches, the point \(Z\) corresponding to the initial vorticity field is guaranteed to belong to the interior of the feasible region \(P\).

In the following, we consider an approximation of a initial vorticity field \(q_{\rm ini} (\lambda,\mu)\) by \(K\) vorticity patches. Suppose that \(q_{\rm ini}\) is given in the expansion form:
\begin{align*}
    q_{\rm ini}(\lambda, \mu)  = \sum_{n=1}^N \sum_{m=-n}^n (\hat{\xi}_{m,n}^{\rm ini}+\sqrt{-1}\hat{\eta}_{m,n}^{\rm ini}) Y_{m,n}(\lambda, \mu),
\end{align*}
and let 
\begin{align*}
    Z_{\rm ini} = (\hat{\xi}_{0,2}^{\rm ini}, \cdots, \hat{\xi}_{0,N}^{\rm ini}, \hat{\xi}_{1,2}^{\rm ini}, \hat{\eta}_{1,2}^{\rm ini}, \cdots, \hat{\xi}_{N,N}^{\rm ini}, \hat{\eta}_{N,N}^{\rm ini})\in \mathbf{R}^{(N+1)^2-4}
\end{align*}
be the point \(Z\) corresponding to \(q_{\rm ini}\). Let
\begin{align*}
    q_{ij} = q_{\rm ini}(\lambda_i, \mu_j)
\end{align*}
be the values of \(q_{\rm ini}\) at the computational grids defined in subsection \ref{subsec_discretize}, and let \(q_{\rm min}\) and \(q_{\rm max}\) be the minimum and maximum values of \(q_{ij}\), respectively. We partition the interval \([q_{\rm min}, q_{\rm max}]\) into \(K-1\) segments of equal length and let \(Q_k\) be the \(k\)-th node of the partition. For the case of \(k=1\) and \(k=K\), however, the value of \(Q_k\) is modified by a small parameter \(\alpha\), i.e., 
\begin{align*}
    Q_1 &= q_{\rm min}-\alpha, \\
    Q_k &= q_{\rm min} + \frac{q_{\rm max}-q_{\rm min}}{K-1}(k-1) \quad (k=2,\cdots, K-1), \\
    Q_K &= q_{\rm max}+\alpha.
\end{align*}
The modification of the values of \(Q_1\) and \(Q_K\) is introduced to ensure that the point \(Z\) corresponding to the initial vorticity field \(q_{ij}\) is represented as an interior point of the feasible region \(P\). Indeed, for a point \(Z\) to belong to the interior of \(P\), it is necessary and sufficient that there exists a set of positive \(r_{ijk}\) which fulfills \eqref{sum_be_unity_discrete}, \eqref{area_preservation_discrete}, and \eqref{expansion}. If we simply use \(Q_K=q_{\rm max}\), then there cannot be such a set of \(r_{ijk}\), because
\begin{align*}
    q_{ij} = \sum_{k=1}^K Q_k r_{ijk} \quad {\rm and} \quad r_{ijk}\geq 0
\end{align*}
means \(r_{ijK}= 1\) and \(r_{ijk}=0\,(k=1,\cdots,K-1)\), for such \((i,j)\) that \(q_{ij}=q_{\rm max}\). We use \(K\) vorticity patches, each with one of the \(Q_k\, (k=1,\cdots,K)\) values defined above. The areas of vorticity patches, \(S_k\), are determined as 
\begin{align*}
    S_{k} = 2\pi\sum_{i,j} w_j \tilde{r}_{ijk},
\end{align*}
where \(\tilde{r}_{ijk}\) represents the fraction that the \(k\)--th vorticity patch occupies the neighborhood of the grid \((\lambda_i,\mu_j)\). These fractions \(\tilde{r}_{ijk}\) are determined as follows. For fixed \(i\) and \(j\), we take an integer \(k_0\) such that \(Q_{k_0}< q_{ij}\leq Q_{k_0+1}\). Then, for \(k\neq k_0, k_0+1\), we let \(\tilde{r}_{ijk} = w_j^{-1} \varepsilon\), where \(\varepsilon >0\) is a small parameter. For \(k=k_0\) and \(k_0+1\), we determine \(\tilde{r}_{ijk}\) so that the following equations are satisfied:
\begin{align*}
   & \tilde{r}_{ij,k_0} + \tilde{r}_{ij,k_0+1} = 1 - \sum_{k\neq k_0, k_0+1} \tilde{r}_{ijk},\\
   & Q_{k_0}\tilde{r}_{ij,k_0} + Q_{k_0+1}\tilde{r}_{ij,k_0+1} = q_{ij} - \sum_{k\neq k_0,k_0+1} Q_k\tilde{r}_{ijk}.
\end{align*}
In other words, we give only small area \(2\pi\varepsilon\) to each vorticity patch whose vorticity value fulfills \(|Q_k-q_{ij}|>\Delta Q\) {, where \(\Delta Q = (q_{\rm max}-q_{\rm min})/(K-1)\)}, and then we determine \(\tilde{r}_{ijk}\) for \(k=k_0\) and \(k_0+1\) so that the equations \eqref{sum_be_unity_discrete} and \eqref{expansion} are satisfied. Note that the parameters \(\varepsilon >0\) and \(\alpha>0\) must be small enough so that all \(\tilde{r}_{ijk}\) are positive. Determining \(S_k\) as described above ensures that the point \(Z_{\rm ini}\in \mathbf{R}^{(N+1)^2-4}\) corresponding to the initial vorticity field is an interior point of the feasible region \(P\). Indeed, the set of \(\tilde{r}_{ijk}\) fulfills \(\tilde{r}_{ijk}>0\) because \(\varepsilon>0\), and the equations \eqref{sum_be_unity_discrete}, \eqref{area_preservation_discrete}, and \eqref{expansion} from definition. 

We note that errors of Casimir invariants arise by the approximation of the initial vorticity field \(q_{ij}\) by a finite set of vorticity patches. However, by choosing small values of \(\varepsilon\) and \(\alpha\), and setting \(K\) to a large value, the error can be limited to a small value. Indeed, for a Lipschitz function \(f\), the error {is} bounded as:
\begin{align}
    |C_{f, \rm grid}-C_{f, \rm patch}| \leq IJ(K-2)M\varepsilon + c (\Delta Q +\alpha), \label{casimir_error}
\end{align}
where \(C_{f, \rm grid}\) is the value of Casimir invariant for \(f\) computed by using the grid values of the initial vorticity field, \(q_{ij}\), whilst \(C_{f, \rm patch}\) is that computed on the basis of the patch approximation of the initial vorticity field. On the right-hand side of \eqref{casimir_error}, \(M\) is the maximum value of \(f(q)\) on the sphere, \(c>0\) is the Lipschitz constant for \(f\). The proof of the inequality \eqref{casimir_error} is given in {\ref{sec_appendix_casimir_error}}. {The whole procedure of the computation of statistical equilibria is shown in Algorithm \ref{alg_stat_equi}.}
\begin{algorithm}[htbp]

\caption{Pesudocode of the computation of statistical equilibria}\label{alg_stat_equi}
\begin{algorithmic}[1]
\State input :: $I, J, K, q_{ij}, \alpha, \varepsilon$
\State input :: $\eta$ (step size of the gradient method)
\State output :: $Q_k, S_k$
\State output :: $Z_{\rm eqb}$ ($Z$ corresponding to the statistical equilibrium)
\Statex
\State $Q_1 \gets \min q_{ij}-\alpha$ \Comment{Determination of \(Q_k\)}
\State $Q_K \gets \max q_{ij}+\alpha$
\State $\Delta Q \gets (\max q_{ij}-\min q_{ij})/(K-1)$
\ForAll{$k\gets 2,K-1 $}
    \State $Q_k \gets \min q_{ij} + (k-1)\Delta Q $
\EndFor
\Statex
\ForAll {$i\gets 1, I$}\Comment{Determination of \(S_k\)}
    \ForAll{$ j\gets 1,J $}
        \State ${\rm Determine}\,\,k_0 $ s.t. $ Q_{k_0}<q_{ij} \leq Q_{k_0+1}$
        \ForAll{$ k\gets 1,K $}
        \If {$k \neq k_0$ .and. $k \neq k_0+1 $}
            \State $\tilde{r}_{ijk} \gets \varepsilon/ w_j $
        \EndIf
        \EndFor
        \State Determine $\tilde{r}_{ij{k_0}}$ and $\tilde{r}_{ij{k_0+1}}$
        \ForAll{$ k\gets 1, K $}
            \State $S_k \gets S_k + w_j \tilde{r}_{ijk}$
        \EndFor
    \EndFor
\EndFor
\Statex
\State Determine \(Z\) from \(q_{ij}\) by FFT
\While{(.true.)}\Comment{maximization of the mixing entrooy by the gradient method}
    \State Solve SP1 to get $ r_{ijk}$ from $Z$
    \State Compute $\nabla S_{\rm mix}(Z)$   
    \State Compute $(P\nabla S_{\rm mix})(Z)$: the projection of $\nabla S_{\rm mix}(Z)$ to the energy surface
    \State $Z \gets Z+ \eta (P\nabla S_{\rm mix})(Z)$
    \If{$\|(P\nabla S_{\rm mix})(Z)\|< criterion $}
        \State exit
    \EndIf
\EndWhile
\Statex
\State $Z_{\rm eqb} \gets Z$
\end{algorithmic}
\end{algorithm}

\section{Initial vorticity fields, their time evolution and computational settings}\label{section_initialsetting}
In this section, we describe the setting of the initial vorticity fields and the resolution of the discretization for the computation of the statistical equilibrium. In the numerical calculations, ispack-3.0.1 (www.gfd-dennou.org/arch/ispack/), which is designed based on \cite{ishioka2018new}, is used for the spherical harmonics transform. 

\subsection{Definition of the initial vorticity fields}\label{subsec_setting}
As initial vorticity fields, we consider not only the one defined by \cite{dritschel2015late} but also those with an additional planetary vorticity term. That is, the initial vorticity fields can be represented as,
\begin{align*}
    q_{\rm ini}(\lambda,\mu) = \sqrt{4\pi}\, q_{\rm DQM} (\lambda,\mu) +2 \Omega \mu,
\end{align*}
where \(q_{\rm DQM}\) is the initial vorticity field defined by \cite{dritschel2015late}, which is given in Appendix A of \cite{dritschel2015late}, in the form of spectral coefficients of stream function. The coefficient \(\sqrt{4\pi}\) arises from the difference in the normalization of the spherical harmonics between this paper and \cite{dritschel2015late}. The difference in the normalization does not change the time evolution of the vorticity fields if we change the time unit appropriately. Note that the vorticity value \(\sqrt{4\pi}\) in this study corresponds to the unity in \cite{dritschel2015late} and the unit time interval in this study corresponds to the time interval of \(\sqrt{4\pi}\) in \cite{dritschel2015late}. {The initial vorticity field is composed of the spherical harmonics components up to a total wavenumber of 10, where the amplitude of each component is given by random numbers. Physically, the initial vorticity field contains a number of positive and negative vortex blobs that induce global turbulent mixing. \cite{modin2020casimir} also used this initial vorticity field as a test case of time evolution, and they called it a ``generic'' vorticity field.} As for the parameter \(\Omega\), we examine the following eight values: \(\Omega=0\), \(1.0\times 10^{-2}\), \(2.5\times 10^{-2}\), \(5.0\times 10^{-2}\), \(7.5\times 10^{-2}\), \(1.0\times 10^{-1}\), \(1.8\), and \(2.5\). The vorticity fields corresponding to the eight values of \(\Omega\) are shown in figure \ref{initial_a-d}. We compute the statistical equilibrium for each of these settings.

\begin{figure}[htbp]
    \centering
    \begin{tabular}{cc}
        \begin{minipage}[t]{0.45\hsize}
            \includegraphics[scale=0.25]{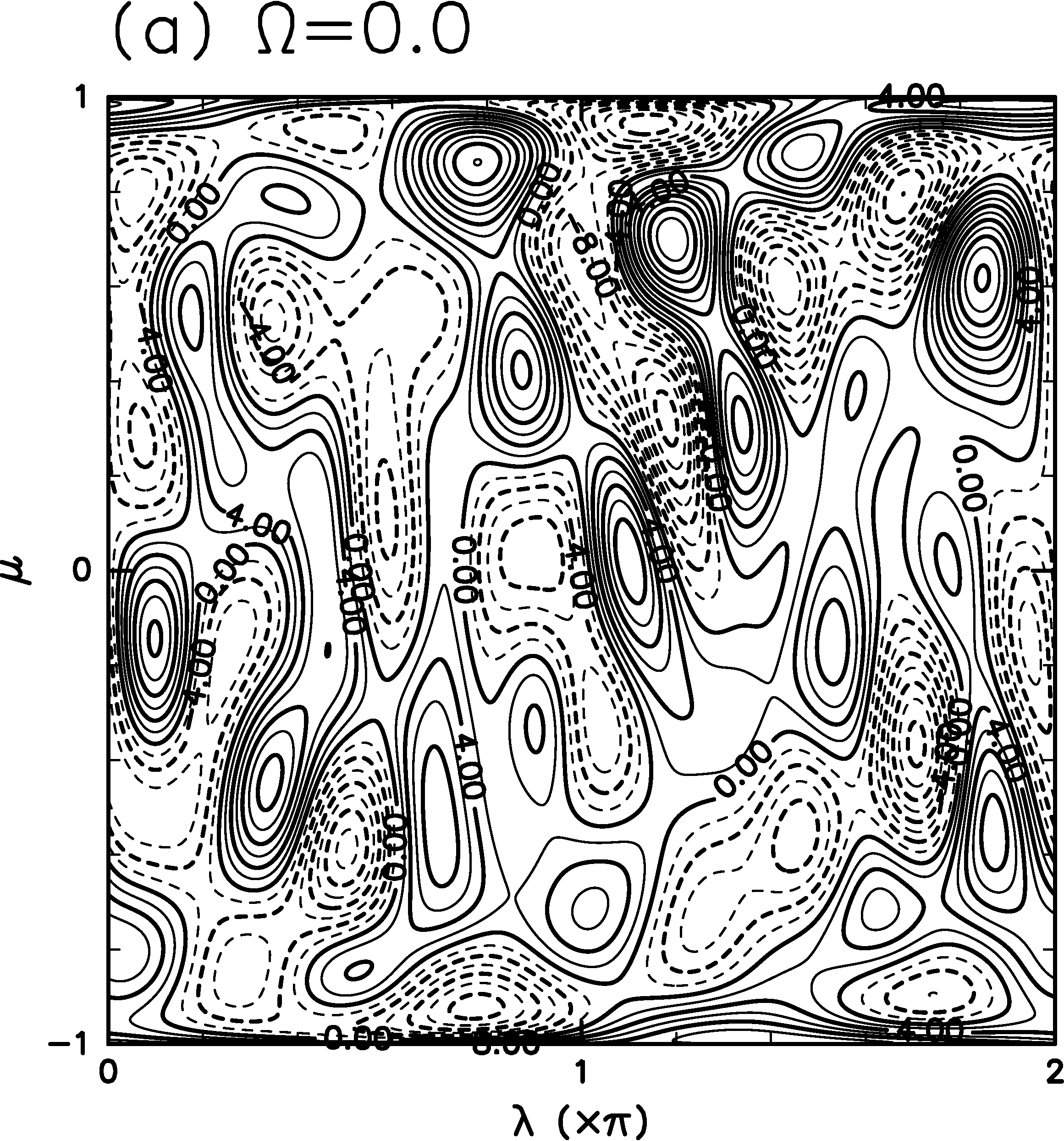}       
            \centering
        \end{minipage}
        &
        \begin{minipage}[t]{0.45\hsize}
            \includegraphics[scale=0.25]{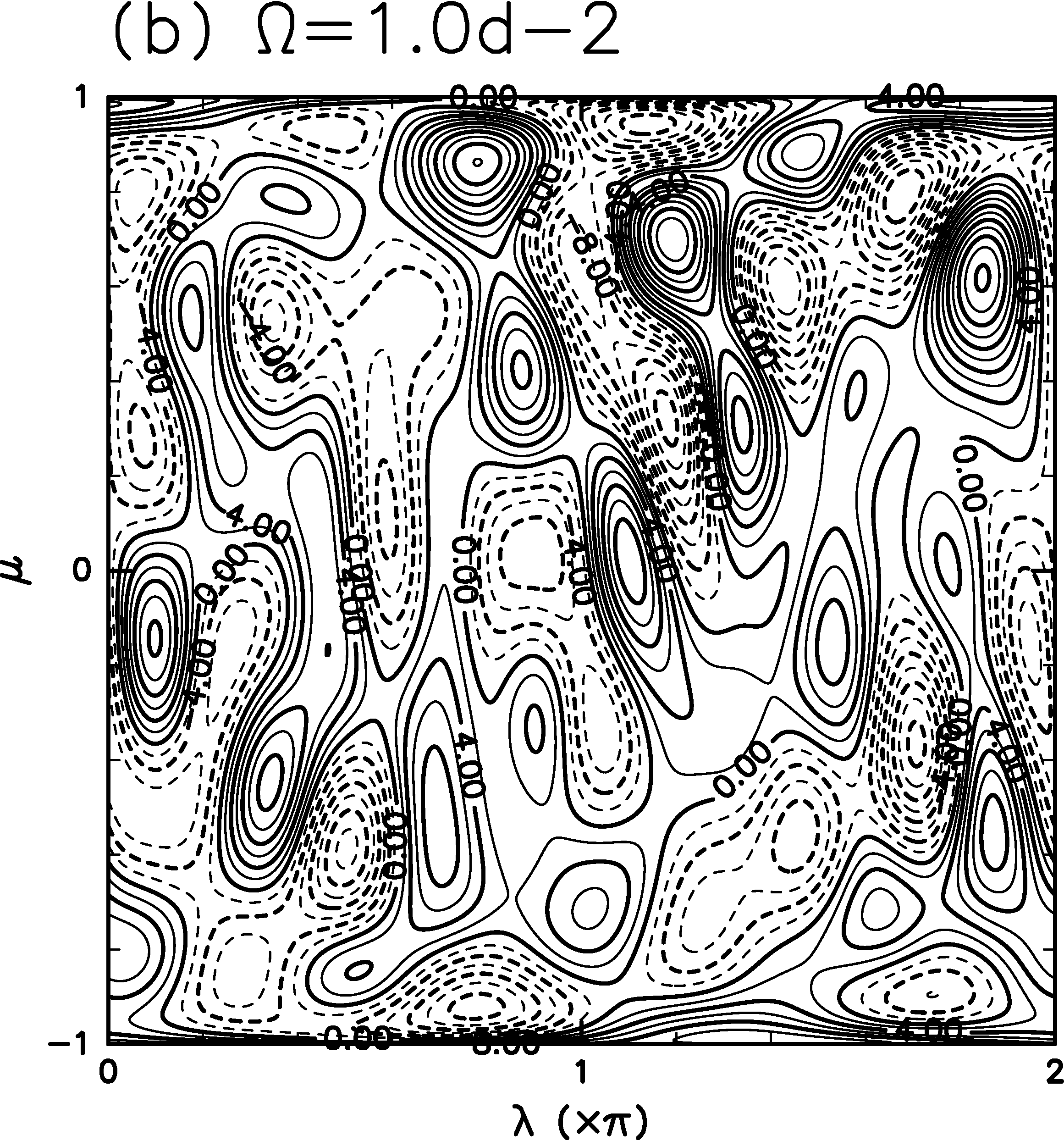}       
            \centering
        \end{minipage}
        \\
        \begin{minipage}[t]{0.45\hsize}
            \includegraphics[scale=0.25]{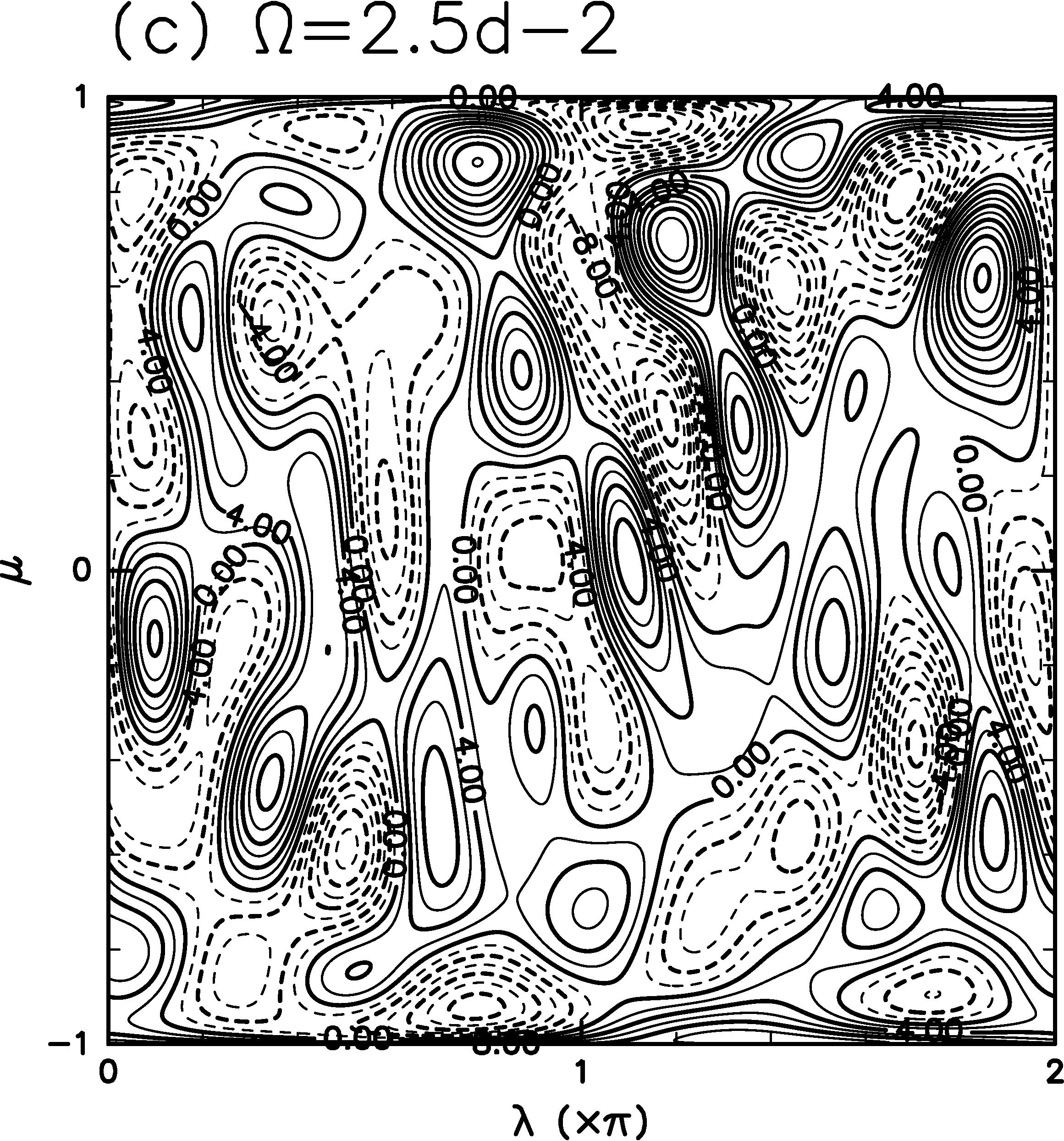}       
            \centering
        \end{minipage}
        &
        \begin{minipage}[t]{0.45\hsize}
            \includegraphics[scale=0.25]{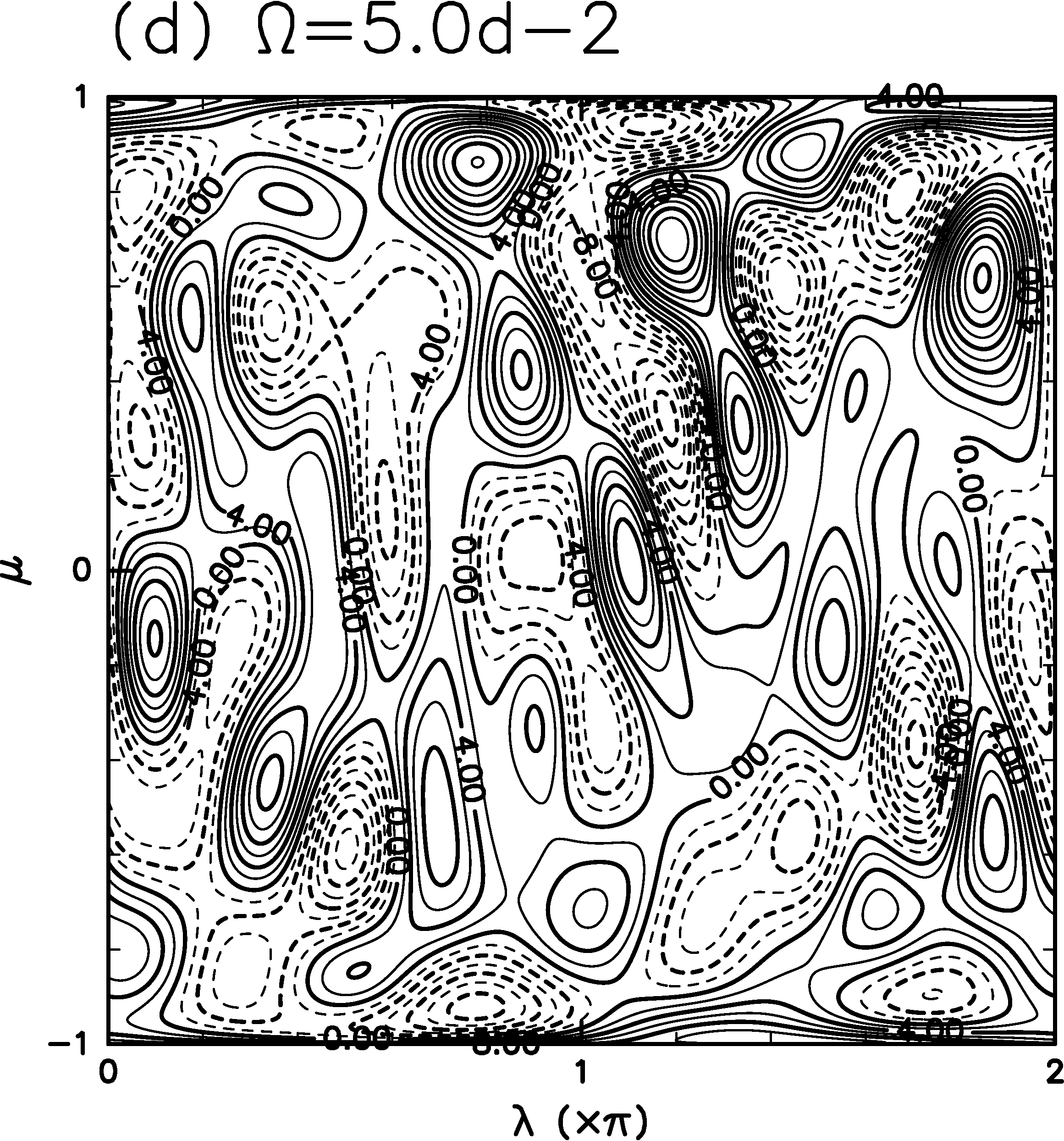}       
            \centering
        \end{minipage}
    \end{tabular}
    \caption{Initial vorticity fields \(q_{\rm ini}\) for the eight values of \(\Omega\). (a) \(\Omega=0\), (b) \(\Omega=1.0\times 10^{-2}\), (c) \(\Omega=2.5\times 10^{-2}\), (d) \(\Omega=5.0\times 10^{-2}\), (e) \(\Omega=7.5\times 10^{-2}\), (f) \(\Omega=1.0\times 10^{-1}\), (g) \(\Omega=1.8\), and (h) \(\Omega=2.5\). Panels for (e), (f), (g), and (h) are shown in the continued figure. In each panel, the corresponding \(\Omega\) value is shown at the top of the panel, the horizontal axis is the longitude \(\lambda/(2\pi)\), and the vertical axis is the sine-latitude \(\mu\). Contour interval is set to 2.0.}
    \label{initial_a-d}
\end{figure}
\addtocounter{figure}{-1}

\begin{figure}[htbp]
    \centering
    \begin{tabular}{cc}
        \begin{minipage}[t]{0.40\hsize}
            \includegraphics[scale=0.25]{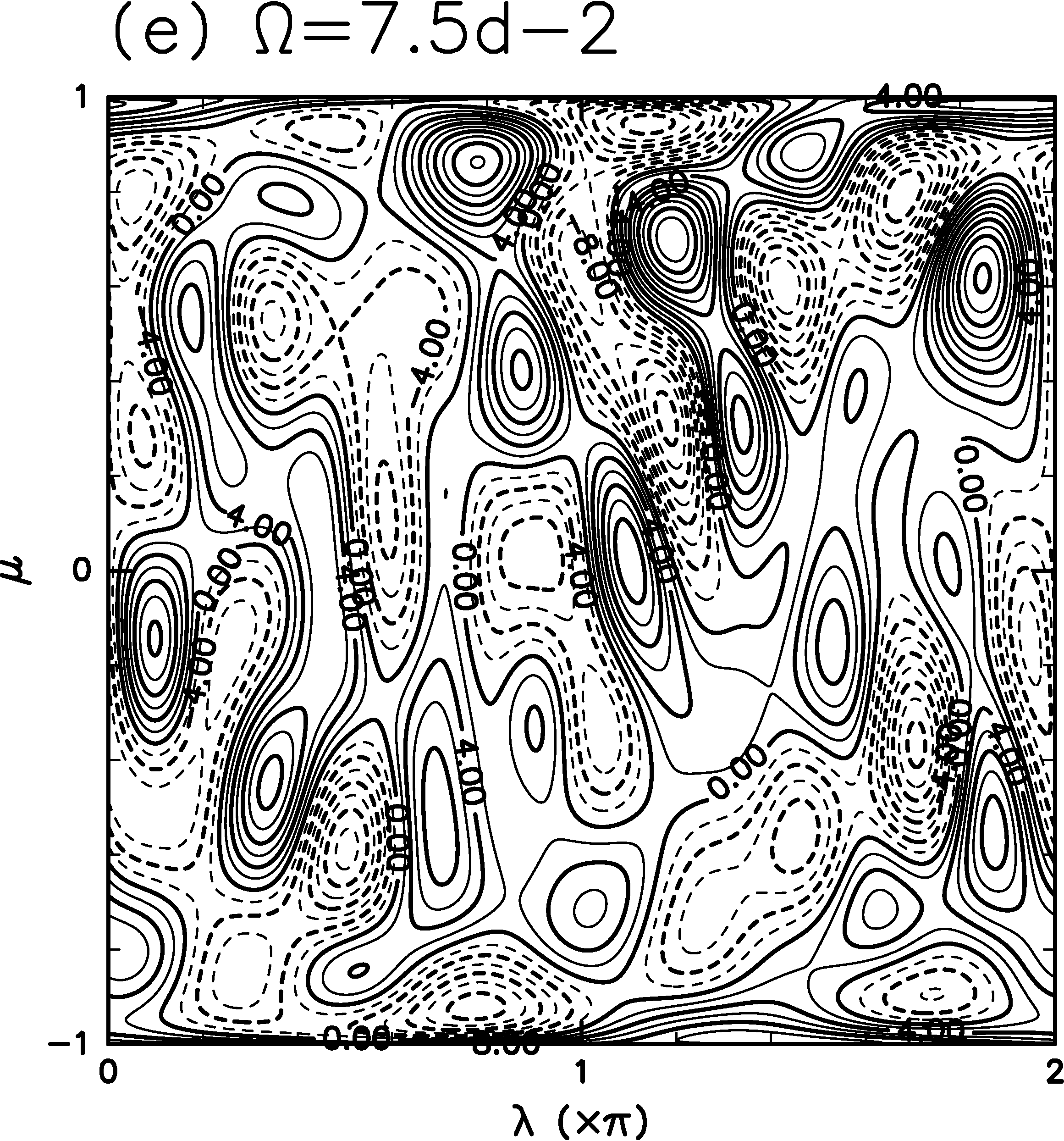}     
            \centering
        \end{minipage}
        &
        \begin{minipage}[t]{0.40\hsize}
            \includegraphics[scale=0.25]{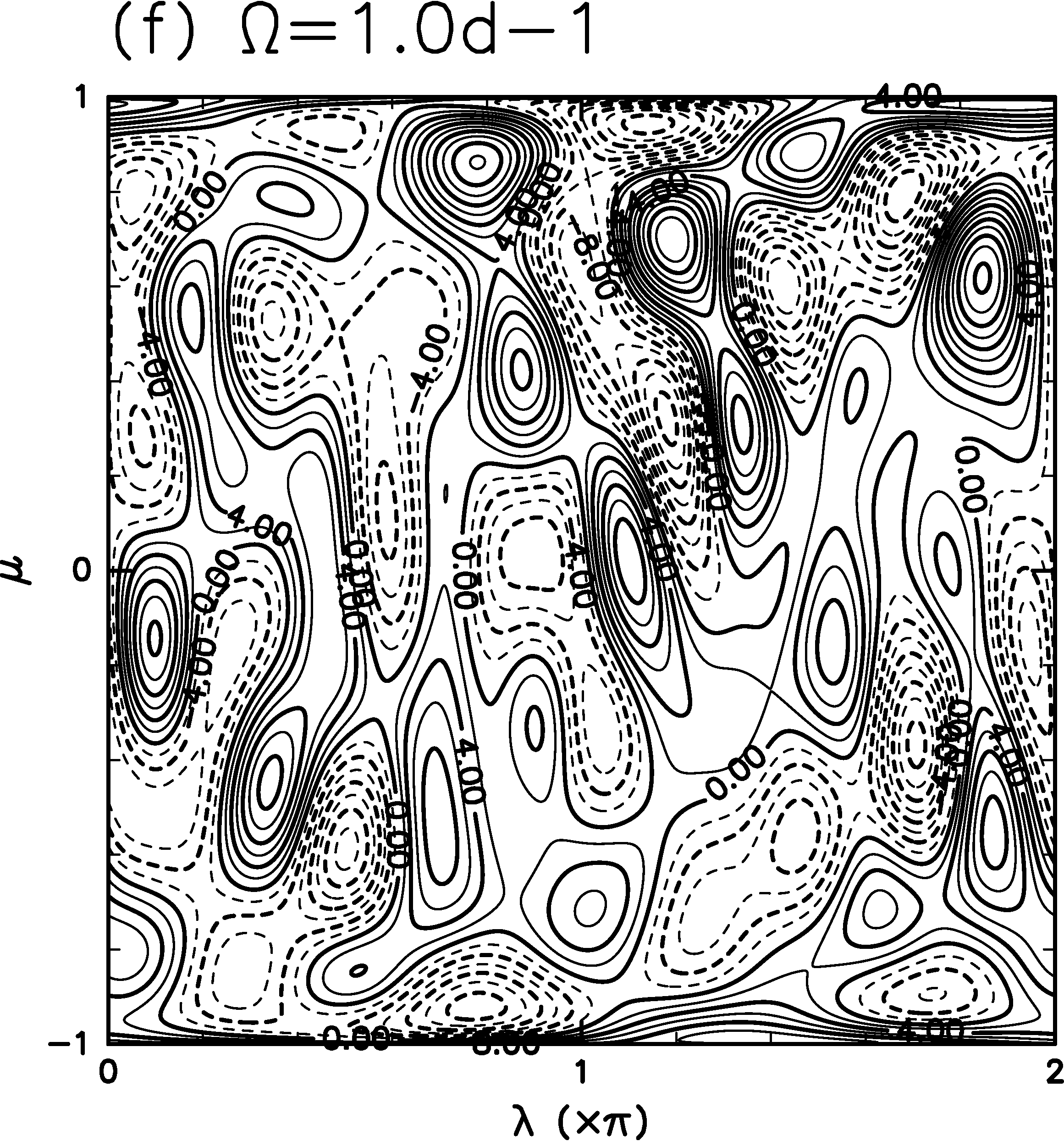}       
            \centering
        \end{minipage}
        \\
        \begin{minipage}[t]{0.40\hsize}
            \includegraphics[scale=0.25]{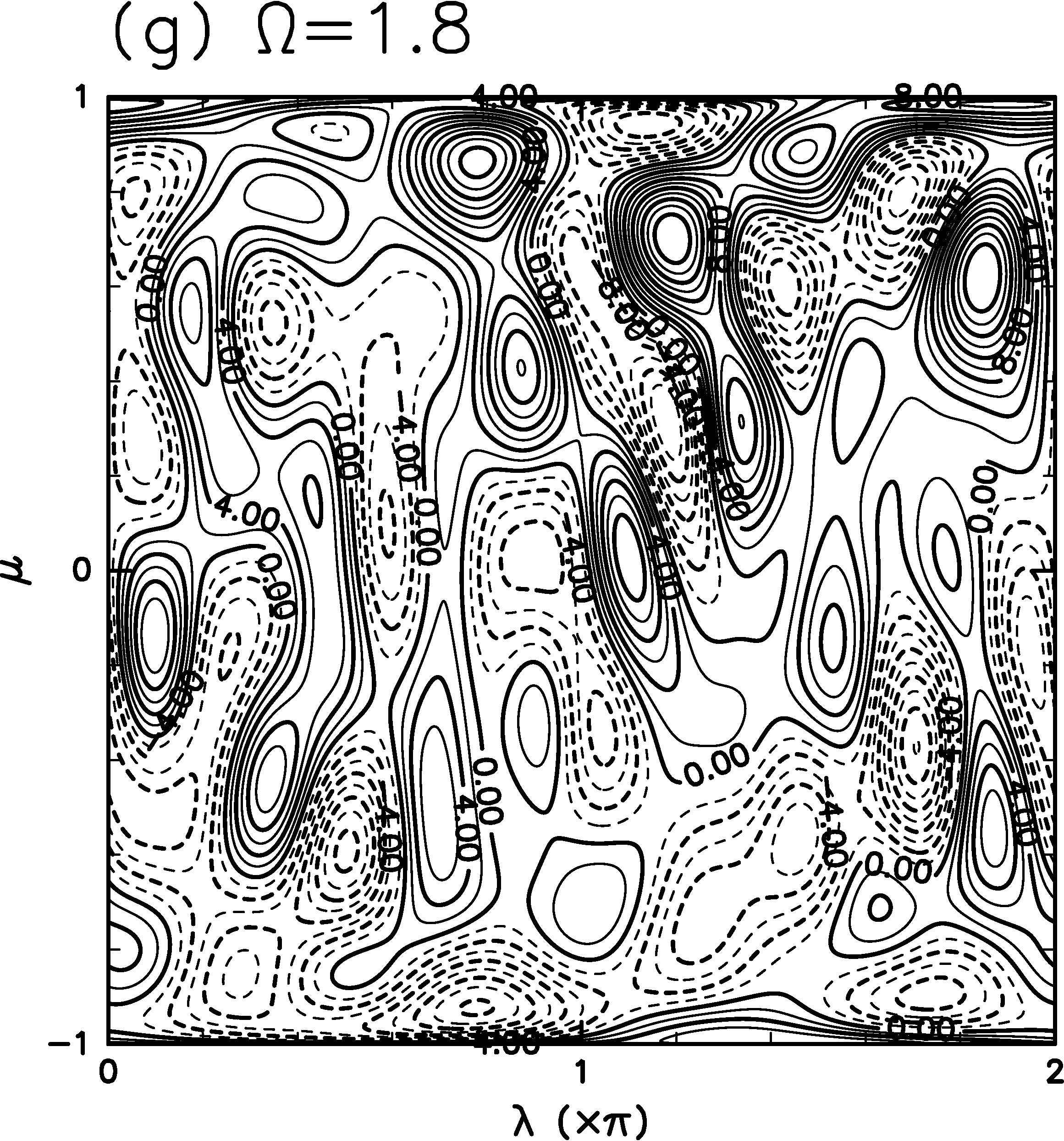}      
            \centering
        \end{minipage}
        &
        \begin{minipage}[t]{0.40\hsize}
            \includegraphics[scale=0.25]{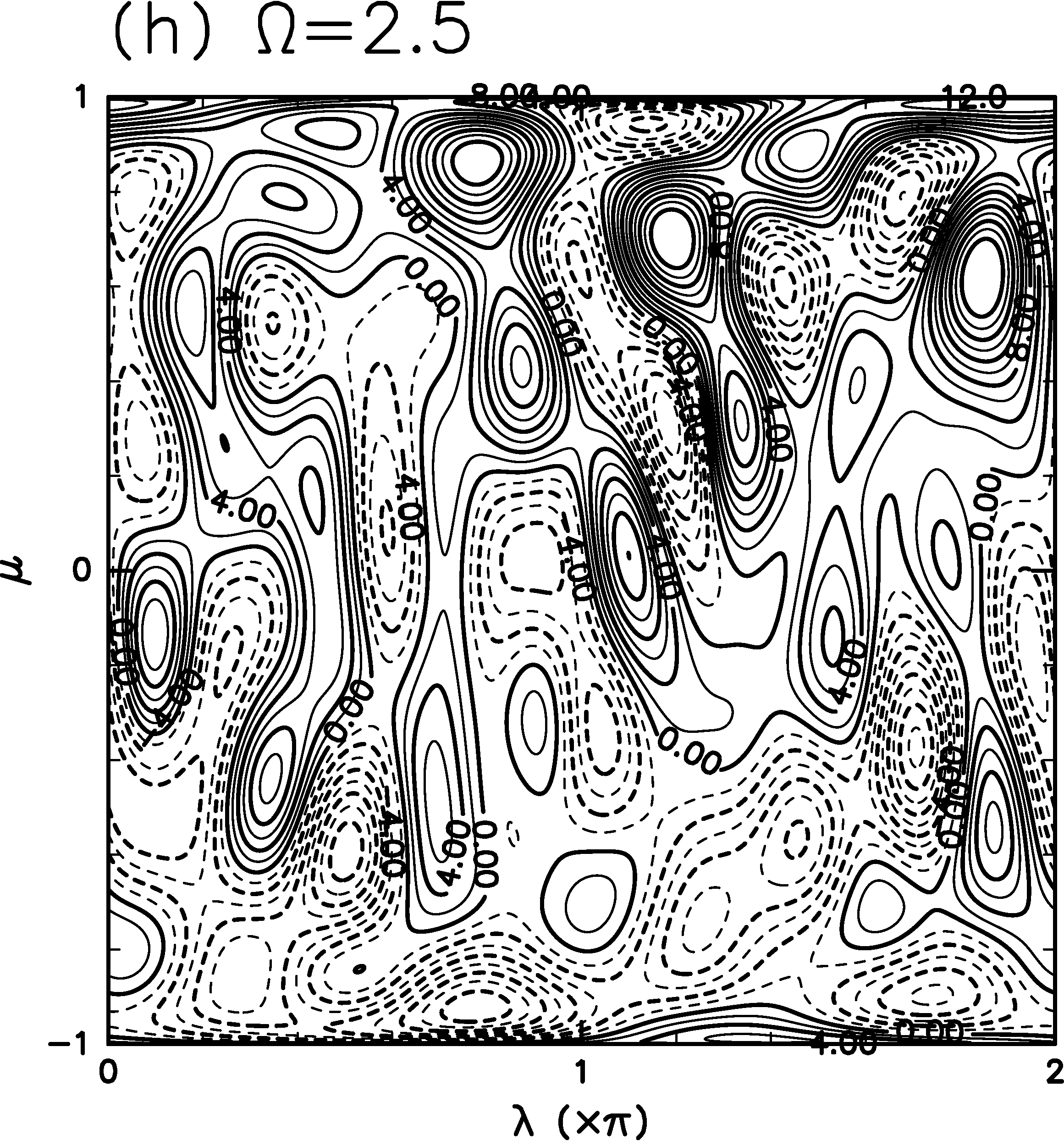}      
            \centering
        \end{minipage}

    \end{tabular}
    \caption{(continued.)}
    \label{initial_e-h}
\end{figure}

\subsection{Time integration from the initial vorticity fields}\label{subsection_timeevo}
To compare with statistical equilibria, we perform long-time integrations from the initial vorticity fields. We numerically integrate the Navier-Stokes equation with a small viscosity term:
\begin{align}
    \frac{\partial q}{\partial t} + J(\psi,q) = \nu (\Delta +2) q. \label{NS_equation}
\end{align}
The parameter \(\nu\) is the viscosity coefficient. The addition of a viscosity term is necessary in numerical integration because without it enstrophy would accumulate at the grid scale. We take the viscosity term to be of the form \(\nu (\Delta +2) q\) so that the conservation of angular momentum is exactly satisfied. The equation \eqref{NS_equation} is discretized by using the spectral method with the truncation wavenumber \(N=682\). The nonlinear term \(J(\psi, q)\) is evaluated via the transform method using \(2048 \times 1024\) grids on the sphere. The viscosity coefficient is set to \(\nu = 1/[N(N+1)-2]\). The classical 4th-order Runge-Kutta method is used for the time integration. The time step \(\Delta t\) is \(1/1000\), except for the case \(\Omega=2.5\) where \(\Delta t\) is set to \(1/2000\). 

{Since the time integrations are performed with a small viscosity, the symplectic structure of the Euler system is violated. We show the values of the ratio of remaining energy and enstrophy at \(t=200\) for each case in Table \ref{tab:energy_enstrophy}. In each case, at least 99\% of the initial energy is conserved, but the enstrophy is well dissipated and only 19\%--30\% of the initial value remains. The loss of the enstrophy is inevitable. It is ``fragile'' \citep[e.g.,][]{naso2010statistical} in the sense that it is sensitive to local high wavenumber fluctuations of the vorticity field and decays with viscosity. However, we believe that qualitative results are robust in the presence of a small viscosity. The results of \cite{modin2020casimir}, which performed Casimir-conserving time integrations, support this. Furthermore, the observation that local fluctuations of the vorticity are smoothed out by small viscosity is consistent with the coarse-graining process adopted in the MRS theory \citep{sommeria1991final}. }

\begin{table}[htbp]
    
    \centering
    \caption{The values of the ratio of the energy and the enstrophy at \(t=200\) to the initial values in the time integrations for the eight values of \(\Omega\). }
	\begin{tabular}{@{}ccc}

          \hline
          \(\Omega\)   &  energy & enstrophy \\ 
          \hline\hline
         \(0\)  & 0.9906 & 0.2014 \\
         \(1.0\times 10^{-2}\) & 0.9906 & 0.2007 \\
         \(2.5\times 10^{-2}\) & 0.9904 & 0.2047 \\
         \(5.0\times 10^{-2}\) & 0.9907 & 0.1995 \\
         \(7.5\times 10^{-2}\) & 0.9906 & 0.2017 \\
         \(1.0\times 10^{-1}\) & 0.9909 & 0.1948 \\
         \(1.8\) & 0.9982 & 0.2406 \\
         \(2.5\) & 0.9990 & 0.3069 \\ 
         \hline
    \end{tabular}
    \label{tab:energy_enstrophy}
\end{table}

We show the vorticity fields at \(t=200\), which are obtained by the numerical time integration from the initial vorticity fields, in figure \ref{t200_a-d}. For the initial vorticity field with \(\Omega=0\), the quadrupole state consisting of two positive vortices and two negative ones appeared (figure \ref{t200_a-d}(a)). This quadrupole state has been confirmed to be stable, but with a quasi-periodic motion, by the longer time integrations of \cite{dritschel2015late} and \cite{modin2020casimir}. For the initial vorticity field with non-zero angular momentum, the initial vorticity field with \(\Omega=1.0\times 10^{-2}, 2.5\times 10^{-2}, 5.0\times 10^{-2}\) and \(7.5\times 10^{-2}\) evolved into quadrupole states (figure \ref{t200_a-d}(b)--\ref{t200_a-d}(e)). The initial vorticity field with \(\Omega=1.0\times 10^{-1}\) and \(1.8\) evolved to the state with three coherent vortices (figure \ref{t200_a-d}(f) and \ref{t200_a-d}(g)), and the initial vorticity field with \(\Omega=2.5\) evolved to the state with two vortices (figure \ref{t200_a-d}(h)). This is consistent with the finding of \cite{modin2020casimir} that the number of coherent vortices decreases from four to three, and to two as the ratio of the angular momentum to the square root of the enstrophy increases. {Note that, the above results are not sensitive to the values of viscosity coefficient at least in the range from \(0.1\nu\) to \(10\nu\).}

\begin{figure}[htbp]
    \centering
    \begin{tabular}{cc}
        \begin{minipage}[t]{0.45\hsize}
            \includegraphics[scale=0.25]{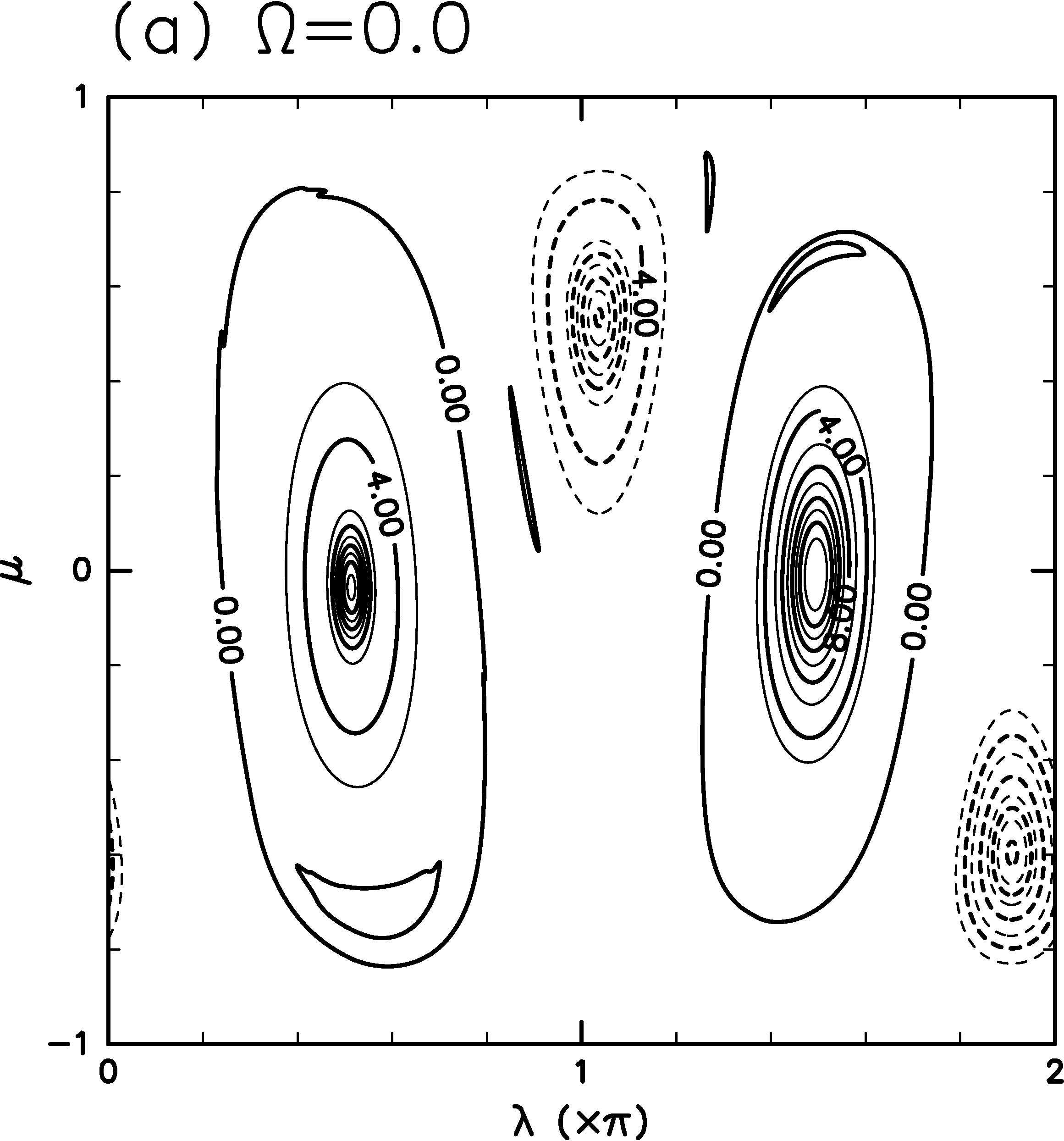}       
            \centering
        \end{minipage}
        &
        \begin{minipage}[t]{0.45\hsize}
            \includegraphics[scale=0.25]{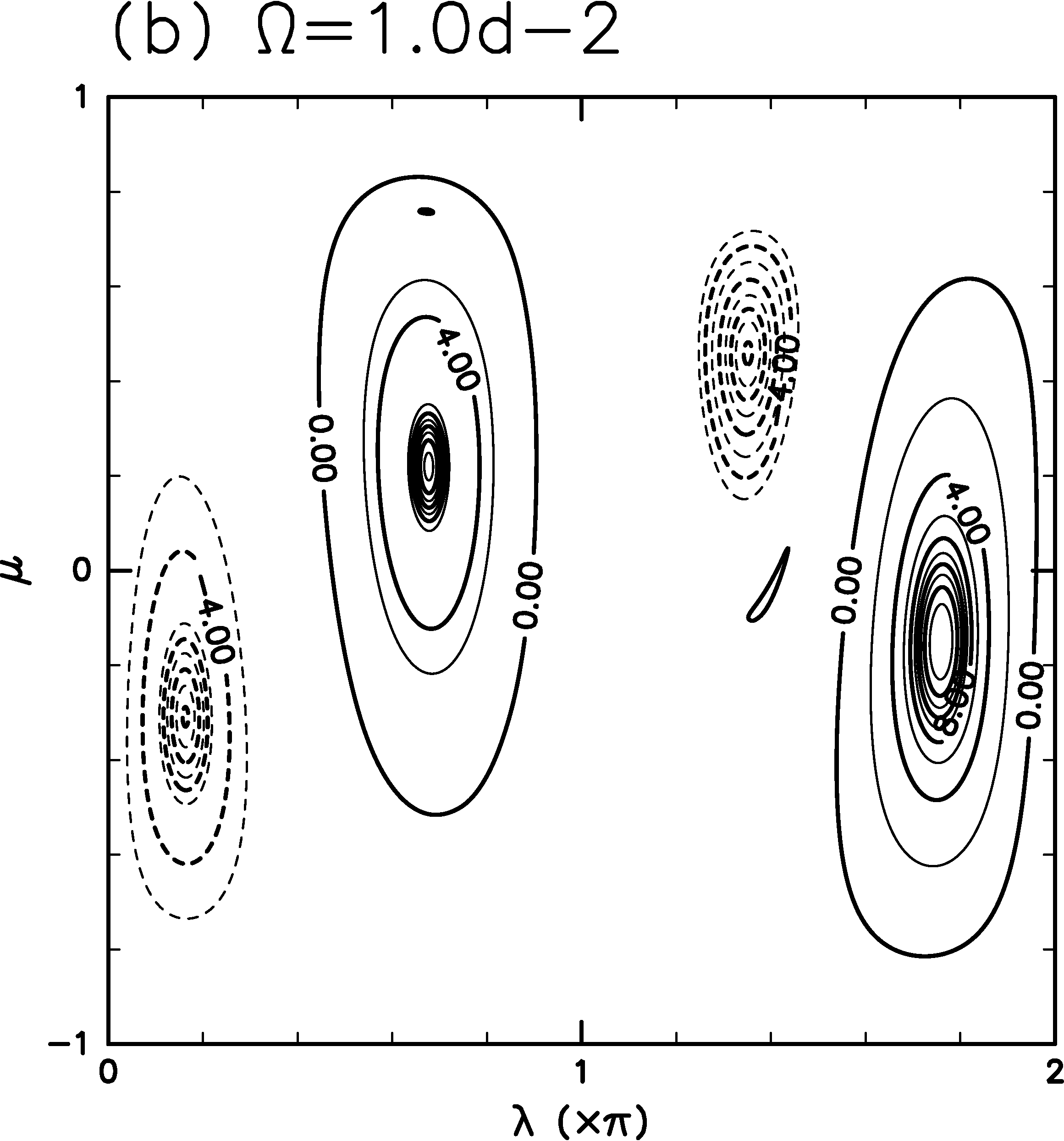}       
            \centering
        \end{minipage}
        \\
        \begin{minipage}[t]{0.45\hsize}
            \includegraphics[scale=0.25]{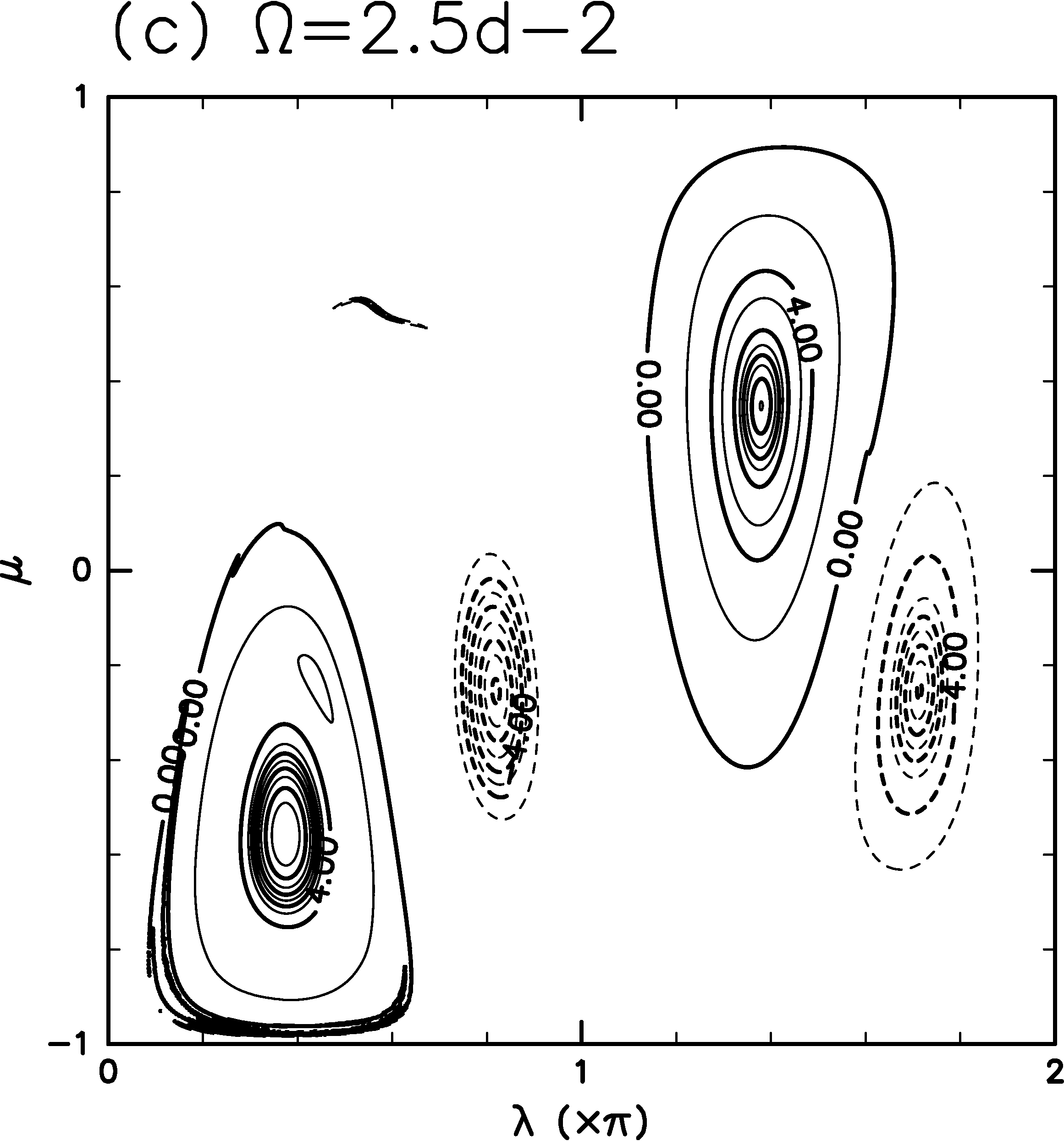}       
            \centering
        \end{minipage}
        &
        \begin{minipage}[t]{0.45\hsize}
            \includegraphics[scale=0.25]{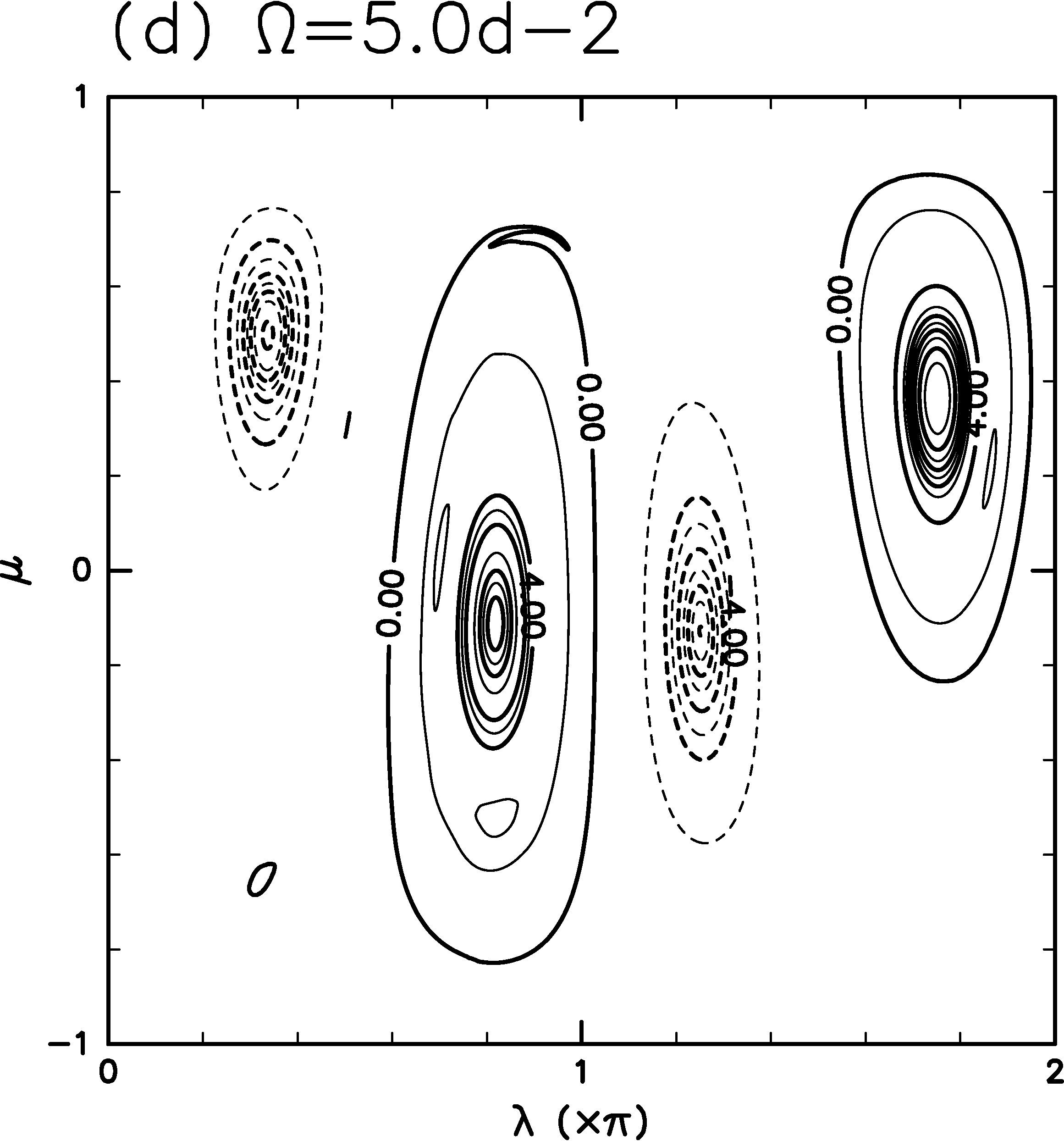}       
            \centering
        \end{minipage}
    \end{tabular}
    \caption{Vorticity fields \(q\) at \(t=200\) obtained by numerical integration from the initial vorticity fields corresponding to \(\Omega=0, 1.0\times 10^{-2}, 2.5\times 10^{-2}\), and \(5.0\times 10^{-2}\) (the cases of \(\Omega=7.5\times 10^{-2}, 1.0\times 10^{-1}\), and \(2.5\) are shown in the continued figure). In each panel, the corresponding \(\Omega\) is shown at the top of the panel, the horizontal axis is the longitude \(\lambda/(2\pi)\), and the vertical axis is the sine-latitude \(\mu\). Contour interval is set to 2.0.}
    \label{t200_a-d}
\end{figure}

\addtocounter{figure}{-1}
\begin{figure}[htbp]
    \centering
    \begin{tabular}{cc}
        \begin{minipage}[t]{0.45\hsize}
            \includegraphics[scale=0.25]{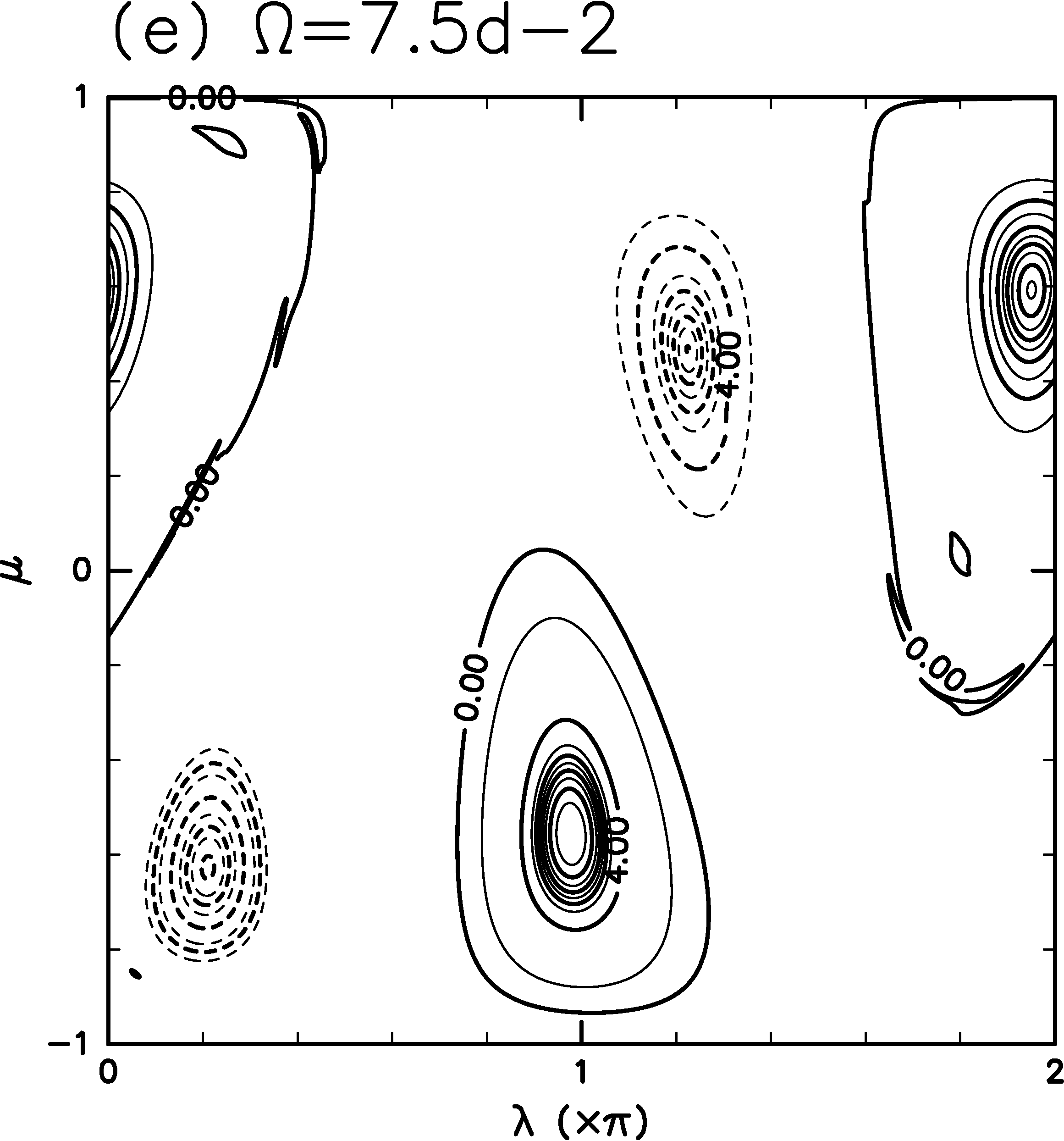}       
            \centering
        \end{minipage}
        &
        \begin{minipage}[t]{0.45\hsize}
            \includegraphics[scale=0.25]{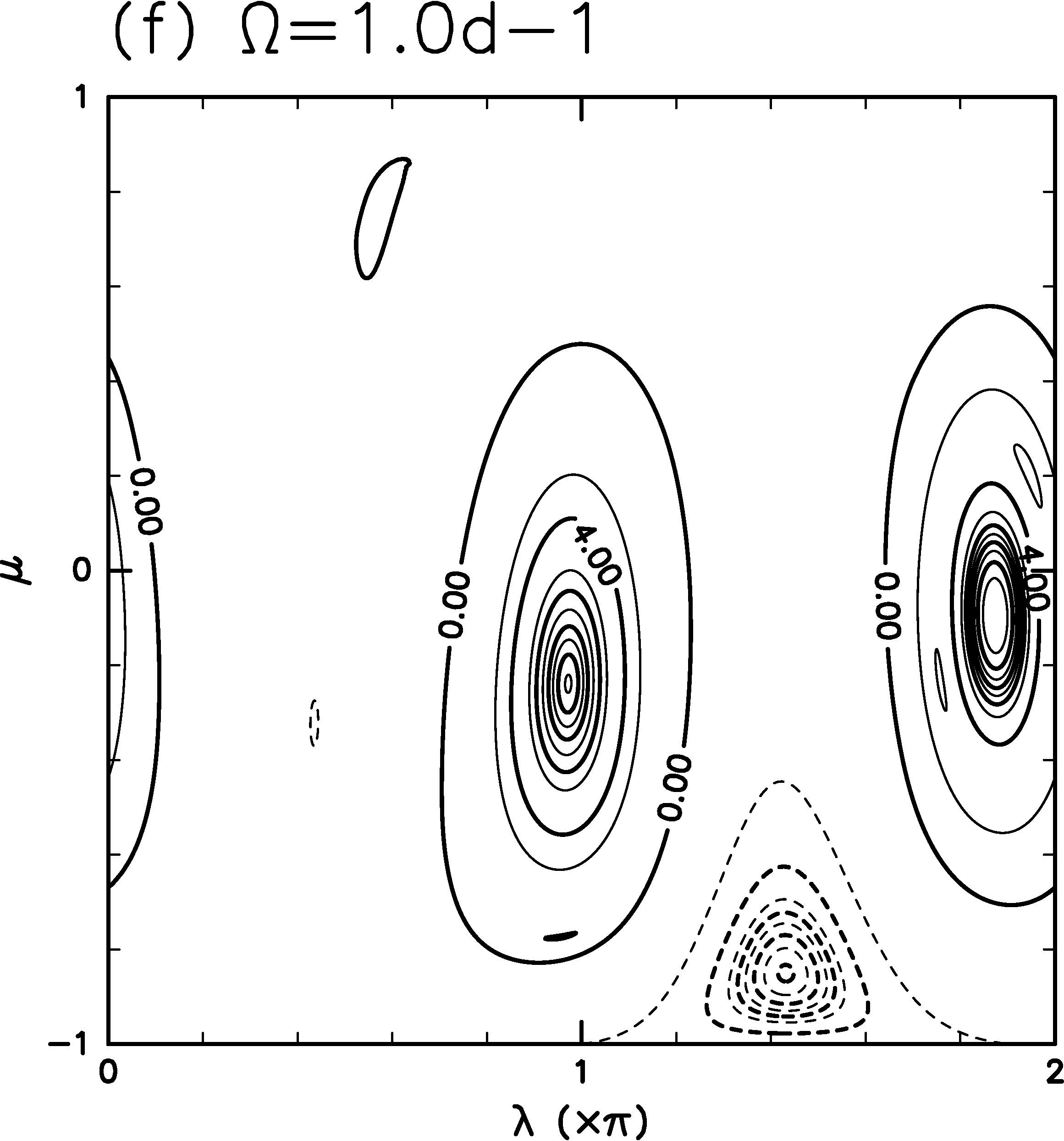}       
            \centering
        \end{minipage}
        \\
        \begin{minipage}[t]{0.45\hsize}
            \includegraphics[scale=0.25]{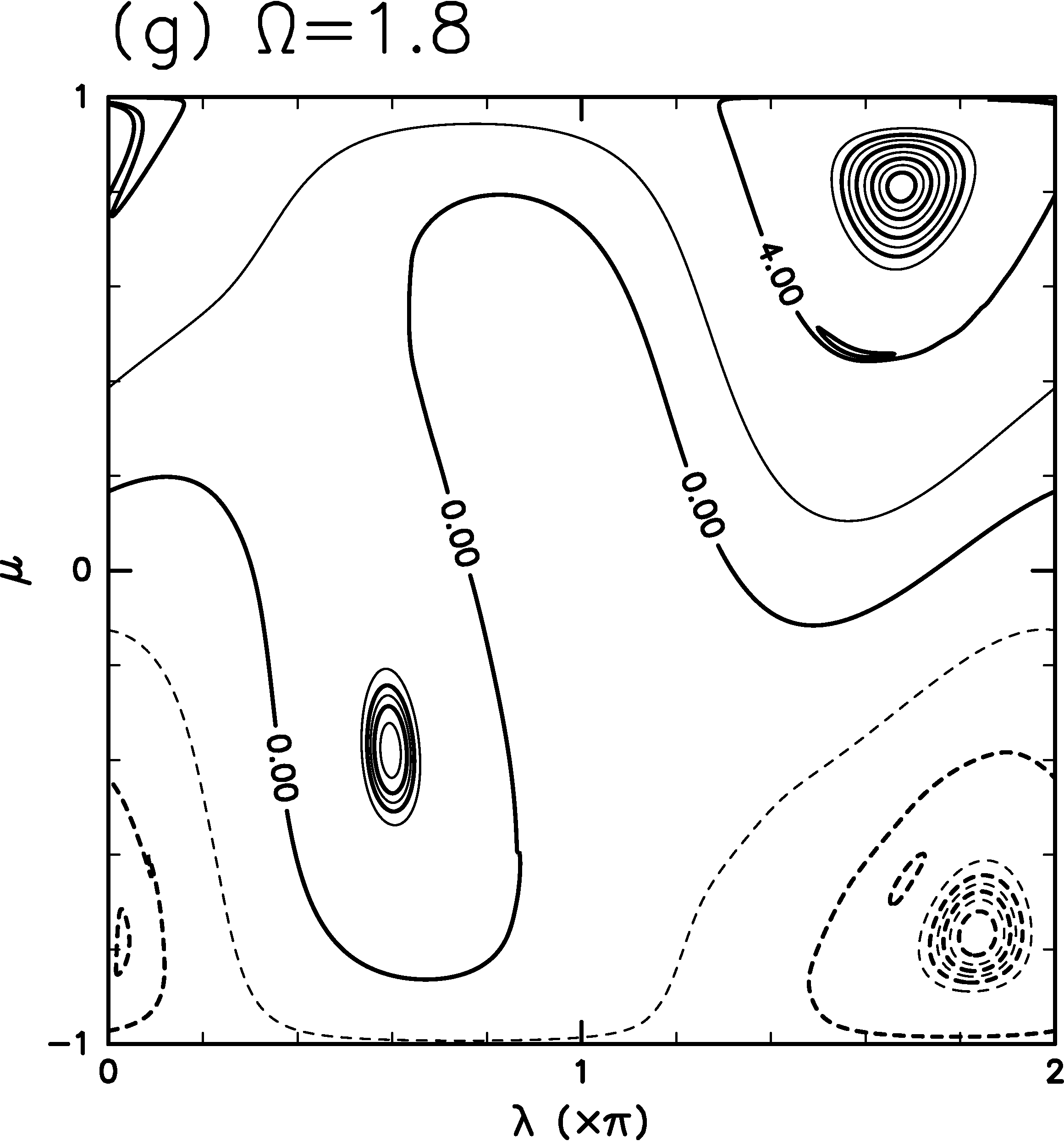}       
            \centering
        \end{minipage}
        &
        \begin{minipage}[t]{0.45\hsize}
            \includegraphics[scale=0.25]{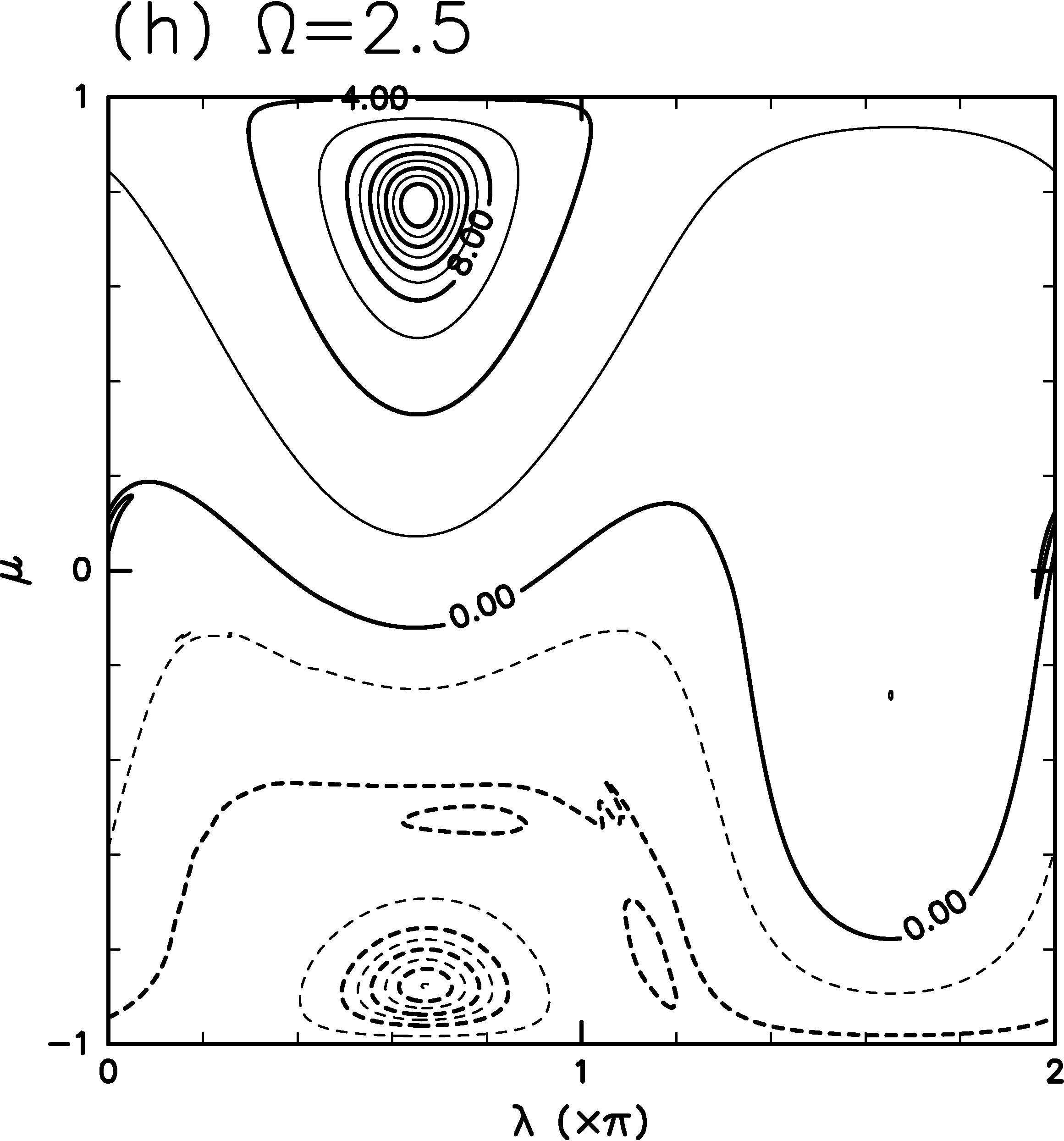}       
            \centering
        \end{minipage}
    \end{tabular}
    \caption{(continued.)}
    \label{t200_e-h}
\end{figure}

\subsection{Settings of statistical equilibria computation}\label{subsec_setting_computation}
The number of the computational grids for the computation of the statistical equilibria is set as \(I=64\) and \(J=32\), and the truncation wavenumber of the spherical harmonic expansion is set as \(N=21\). The vorticity patch approximation for each case is determined by the method of section \ref{subsec_patchgenerate}. {The number of vorticity patches is set to \(K=32\).}
The small parameters \(\alpha\) and \(\varepsilon\) are set as \(\alpha =1.0\times 10^{-4}\) and \(\varepsilon = 1.0\times 10^{-10}\), respectively, except for the case of \(\Omega=5.0\times 10^{-2}\), where \(\varepsilon\) is set as \(\varepsilon = 1.0\times 10^{-11}\) because {in this case the initial search point fails to be in the interior of the feasible region if we set \(\varepsilon=1.0\times 10^{-10}\), and smaller value of \(\varepsilon\) is presumed.} 
The determined areas and vorticity values corresponding to the vorticity patch approximation for each case are shown in figure \ref{fig_vorticitypatch}.

\begin{figure}[htbp]
    \centering
    \begin{tabular}{cc}
        \begin{minipage}[t]{0.50\hsize}
            \includegraphics[scale=0.32, angle=270]{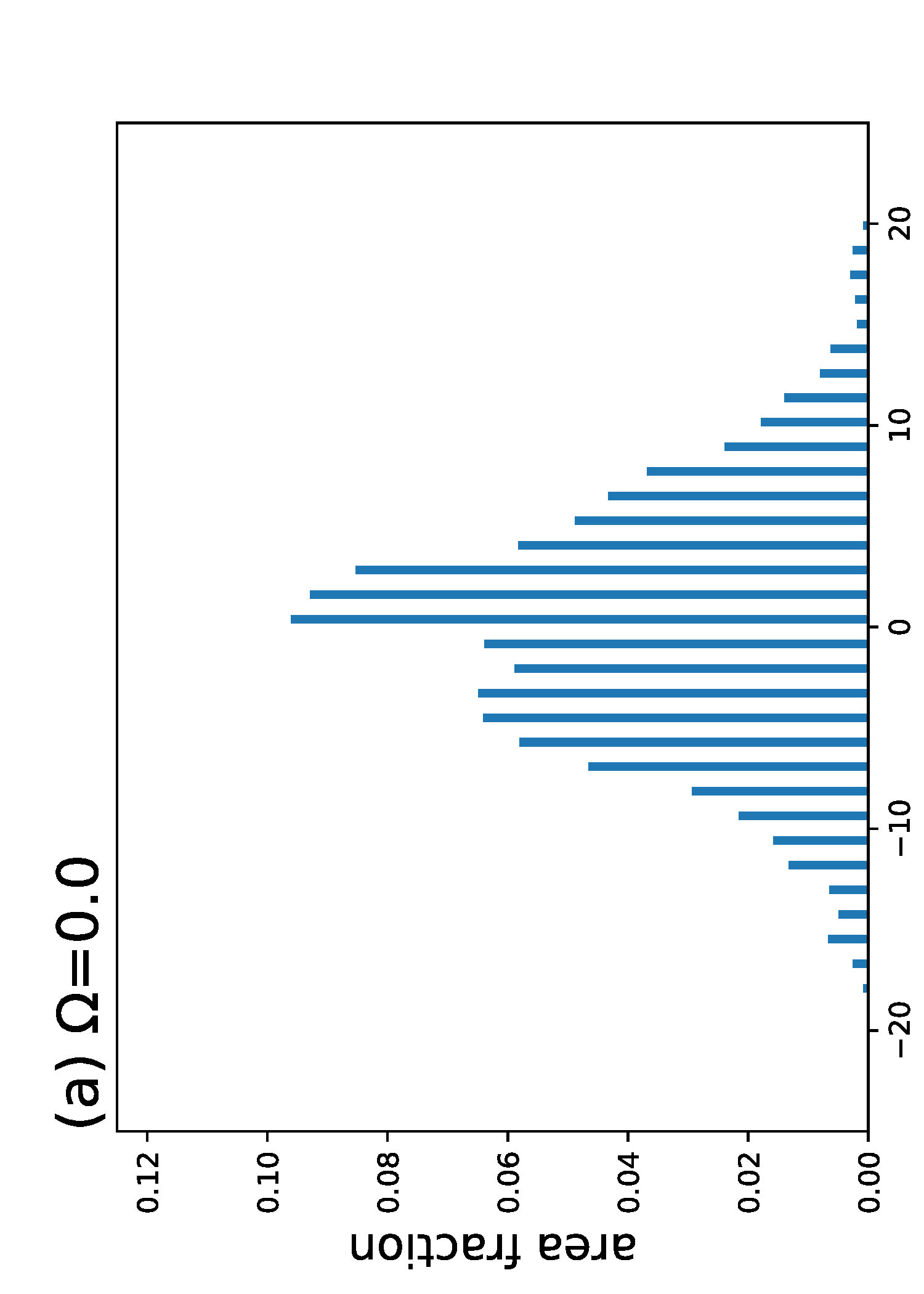}       
            \centering
        \end{minipage}
         &  
        \begin{minipage}[t]{0.50\hsize}
            \includegraphics[scale=0.32, angle=270]{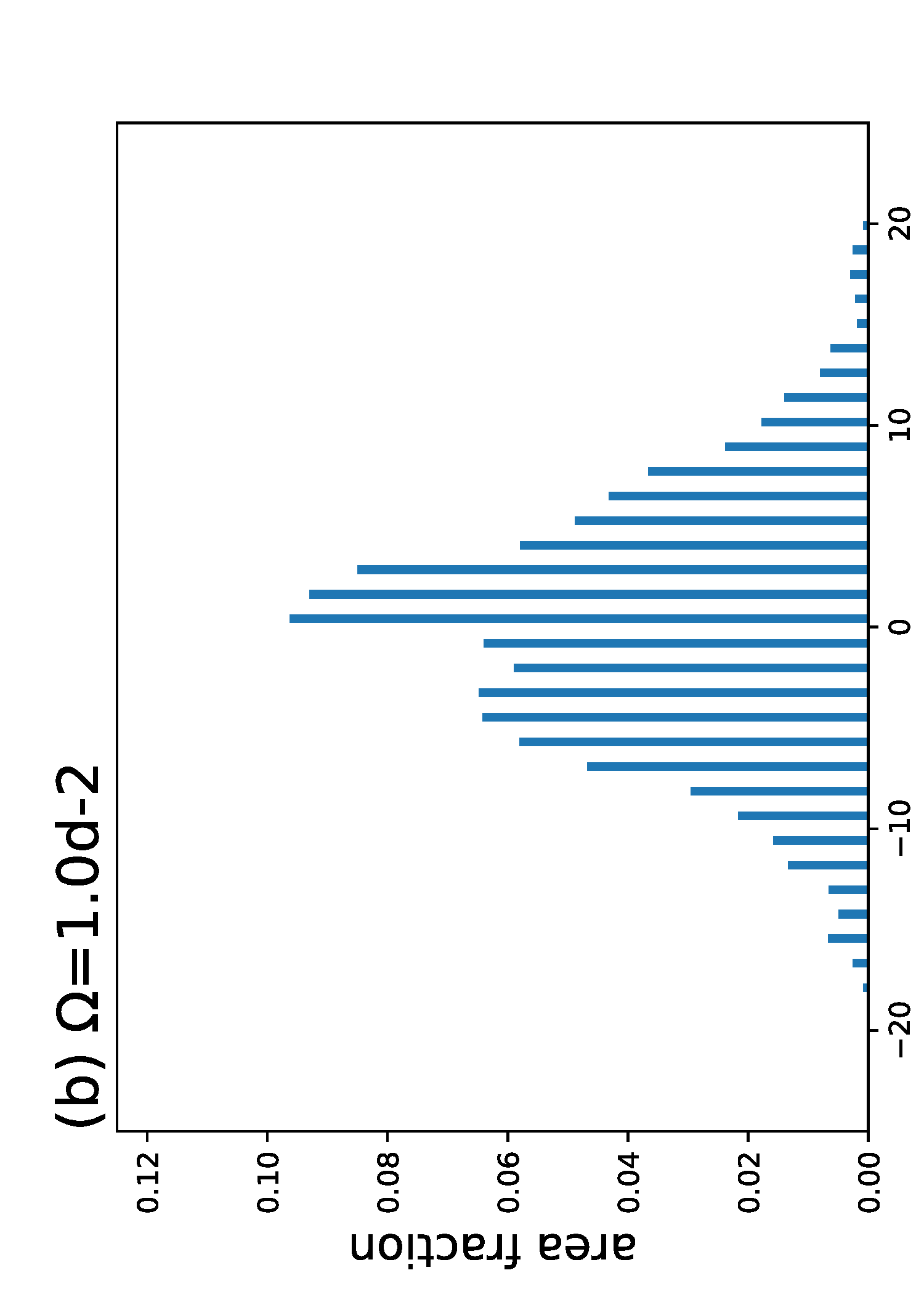}       
            \centering
        \end{minipage} \\

        \begin{minipage}[t]{0.50\hsize}
            \includegraphics[scale=0.32, angle=270]{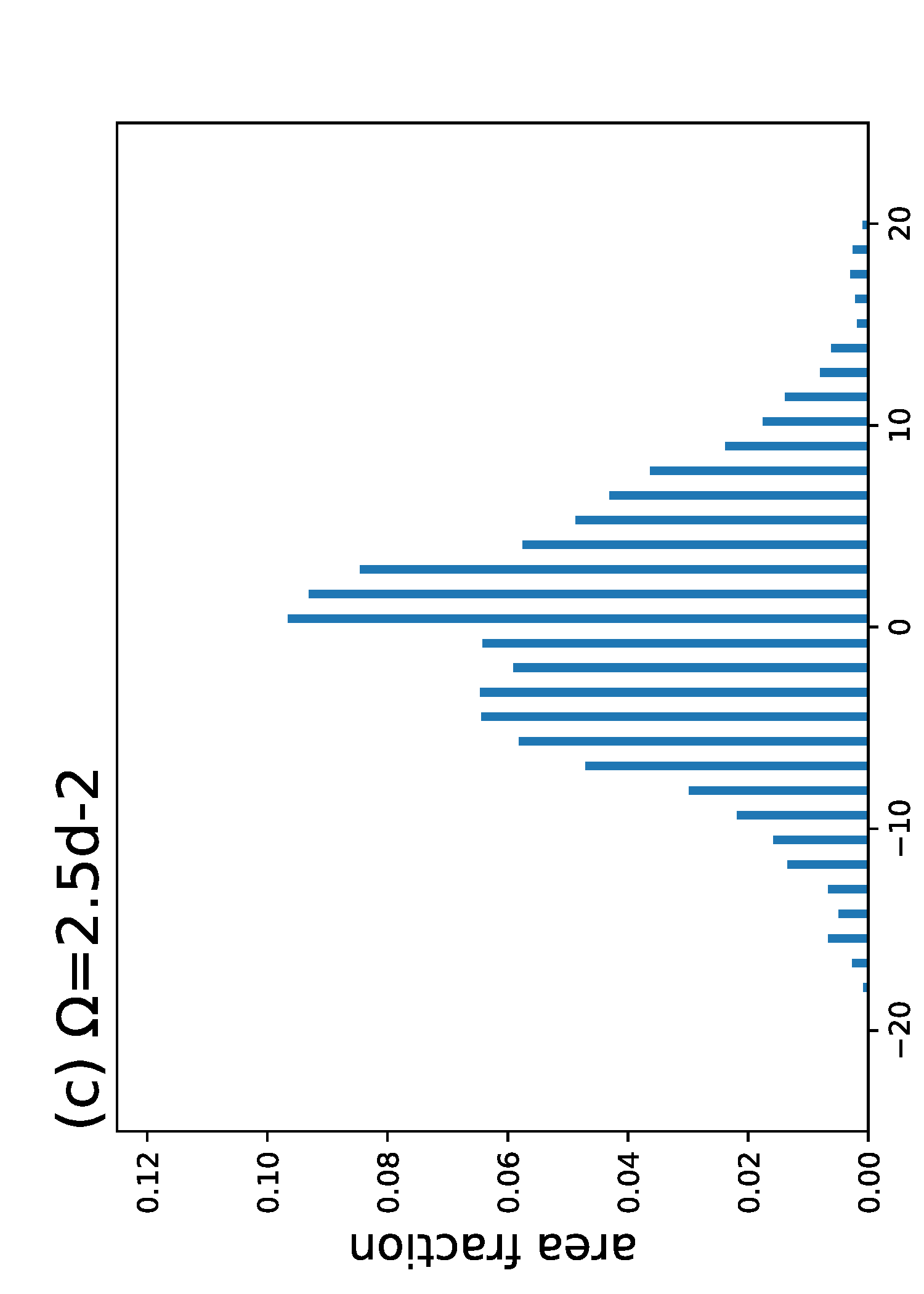}       
            \centering
        \end{minipage}
         &  
        \begin{minipage}[t]{0.50\hsize}
            \includegraphics[scale=0.32, angle=270]{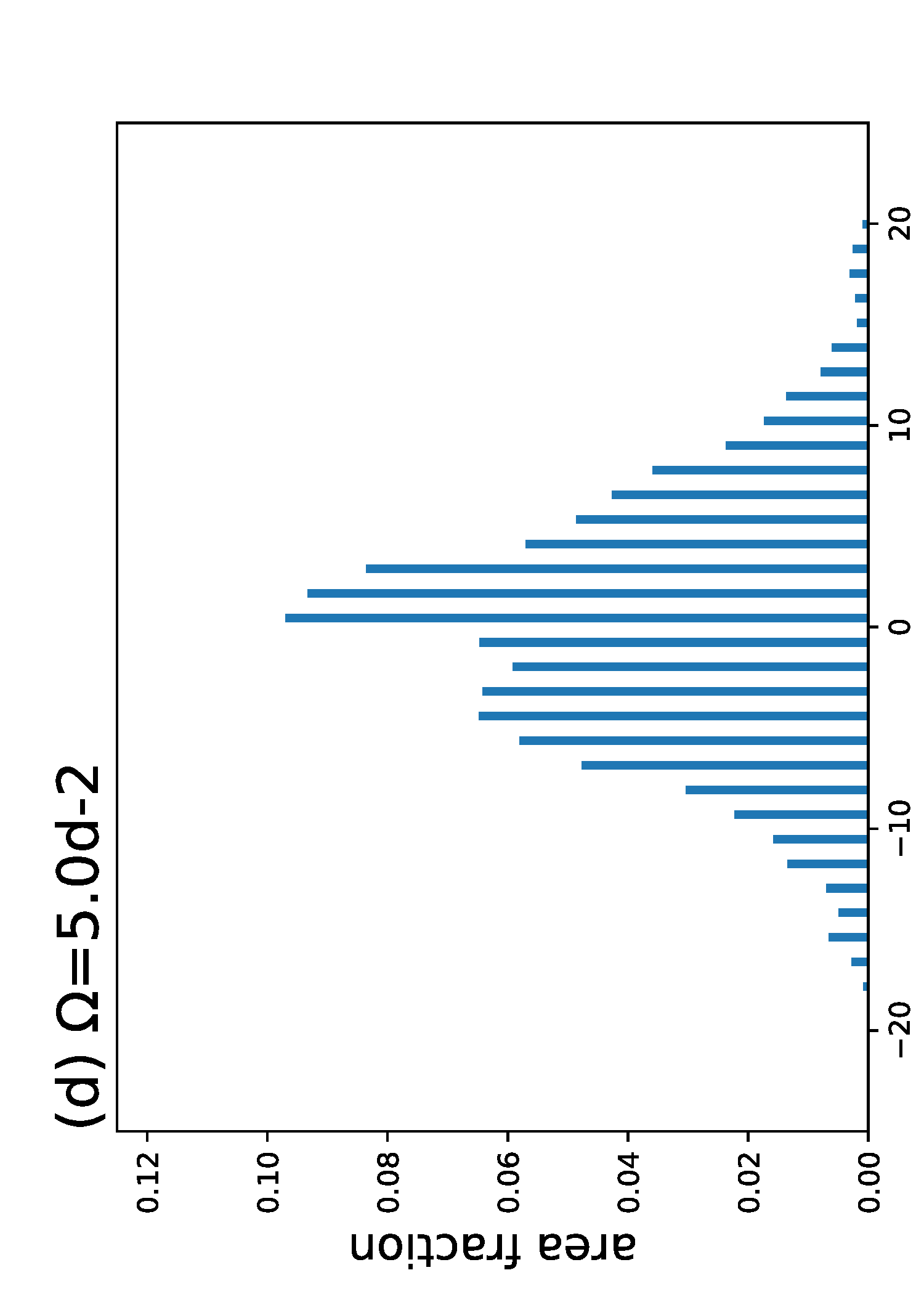}       
            \centering
        \end{minipage} \\
        
        \begin{minipage}[t]{0.50\hsize}
            \includegraphics[scale=0.32, angle=270]{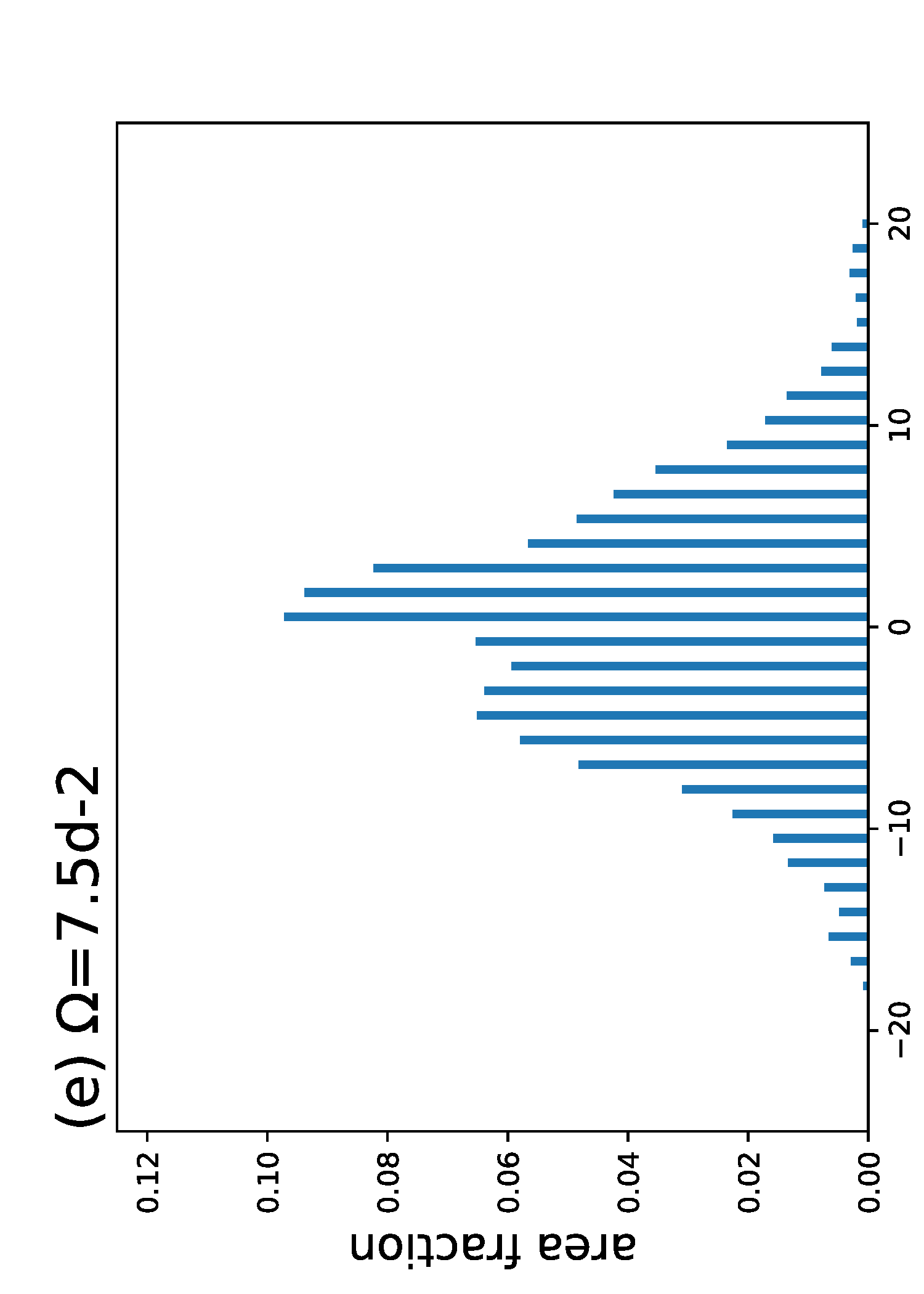}       
            \centering
        \end{minipage}
         &  
        \begin{minipage}[t]{0.50\hsize}
            \includegraphics[scale=0.32, angle=270]{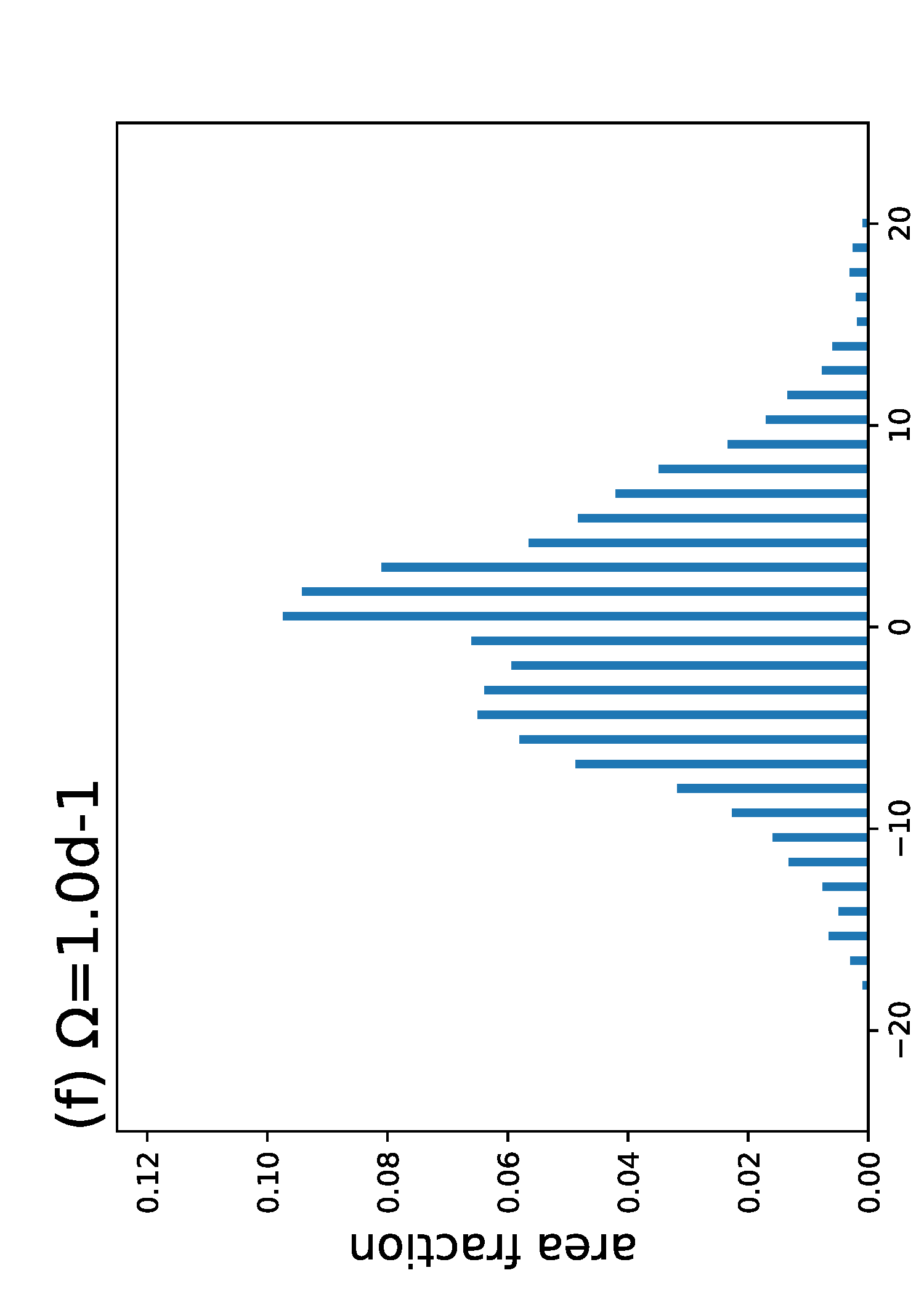}       
            \centering
        \end{minipage} \\

        \begin{minipage}[t]{0.50\hsize}
            \includegraphics[scale=0.32, angle=270]{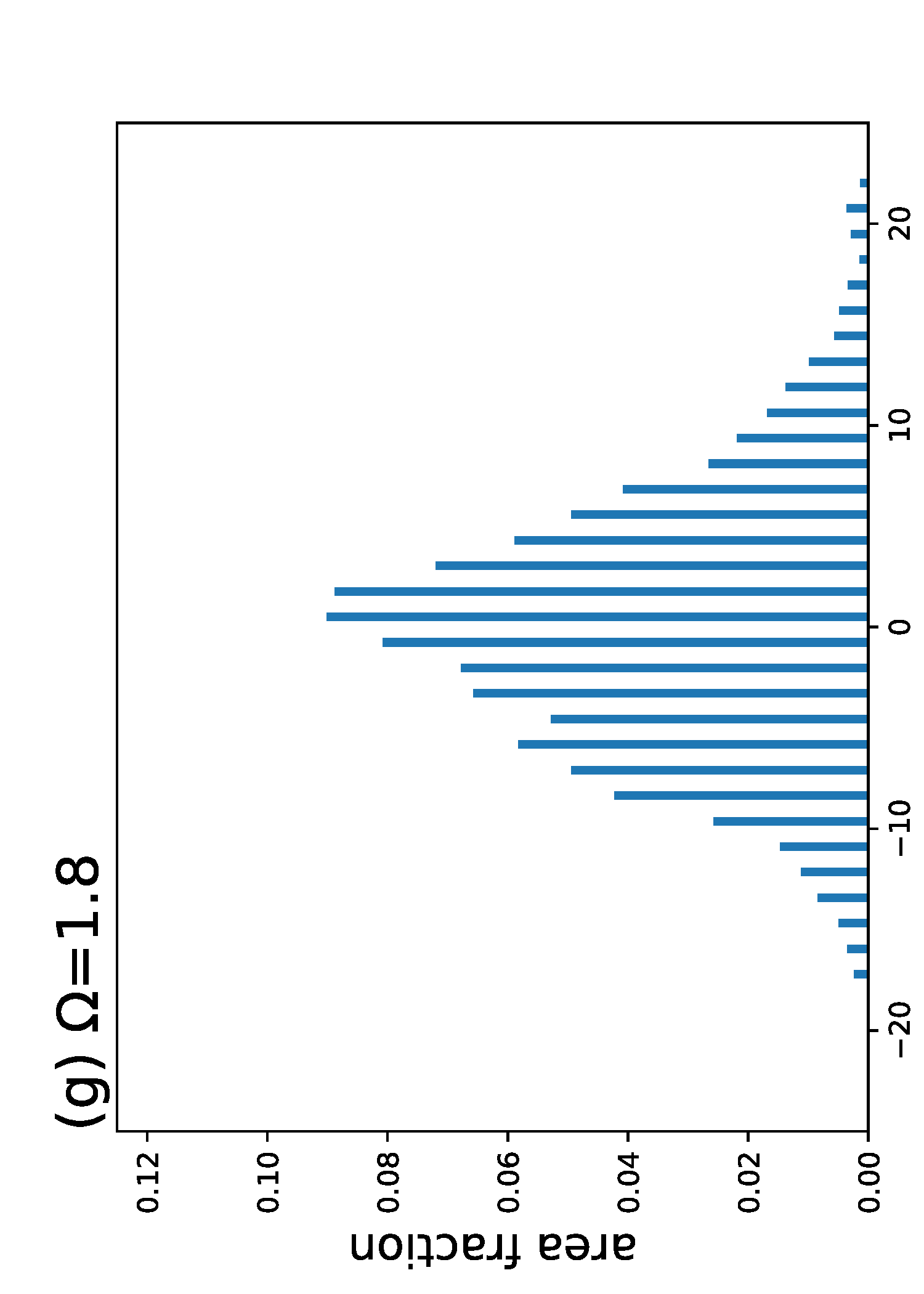}       
            \centering
        \end{minipage}
         &  
        \begin{minipage}[t]{0.50\hsize}
            \includegraphics[scale=0.32, angle=270]{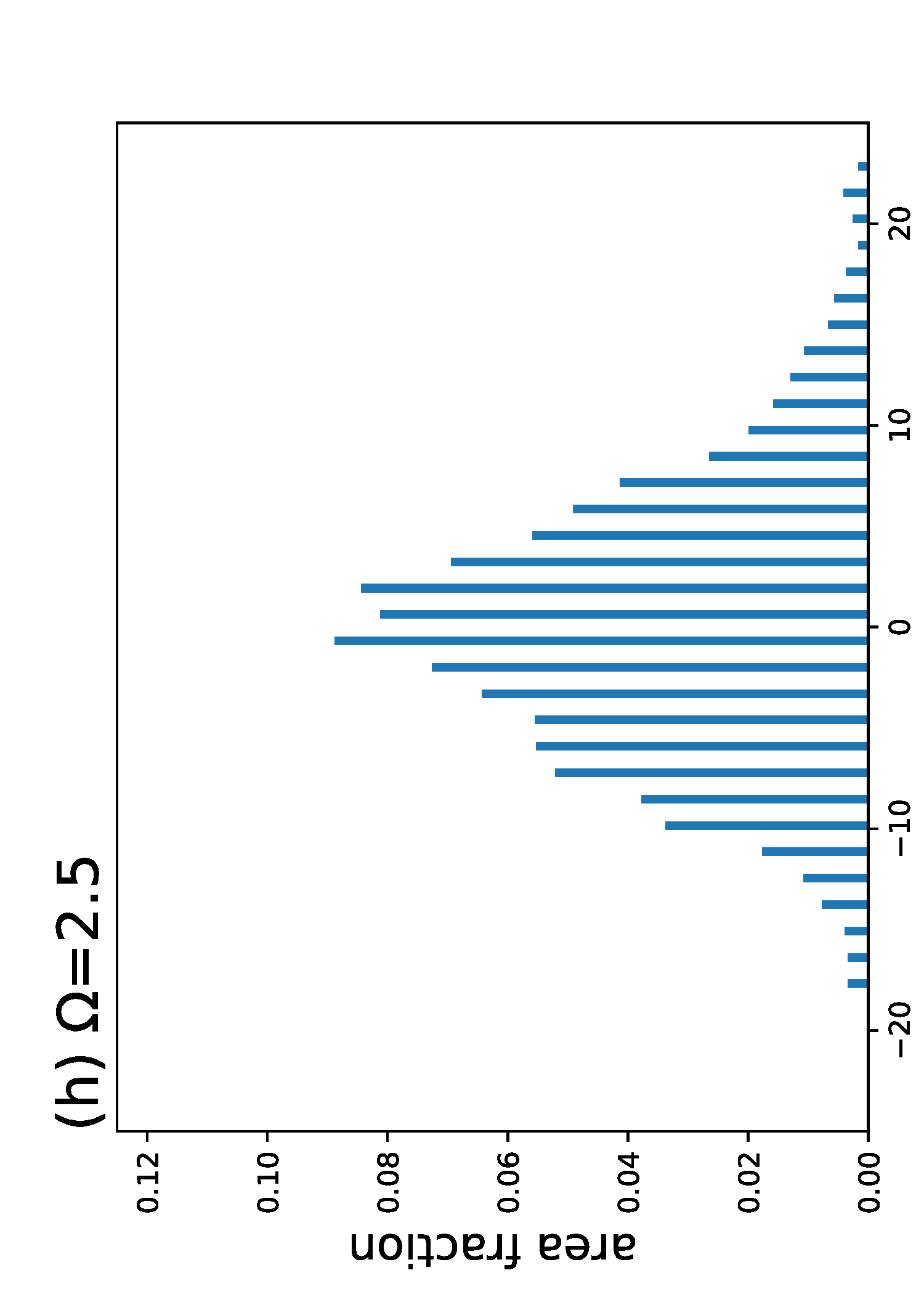}       
            \centering
        \end{minipage} 
    \end{tabular}
    \caption{The areas and vorticity values corresponding to the vorticity patch approximation for eight values of \(\Omega\). In each panel, the corresponding \(\Omega\) value is shown at the top of the panel, the horizontal axis is the value of vorticity patch \(Q_k\), and the vertical axis is the area fraction \(S_k/(4\pi)\) of the vorticity patch.}
    \label{fig_vorticitypatch}
\end{figure}

With the above settings, we search for statistical equilibria by using the gradient method on the energy level surface. The starting point for the search is set to \(Z_{\rm ini}\), as described in section \ref{subsec_patchgenerate}. {This is a natural choice for the starting point, because \(Z_{\rm ini}\) does not have particular structures such as low-wavenumber structure of vorticity field. However, we should note it is difficult to prove that the obtained equilibria are the global maximizers of the mixing entropy because the maximization problem is nonlinear and non-convex.} We also compute zonal statistical equilibria by restricting the search to the subset of zonal vorticity fields: \(P\cap \{Z=(\hat{\xi}_{0,2}, \cdots, \hat{\xi}_{N,N}, \hat{\eta}_{N,N})\in \mathbf{R}^{(N+1)^2-{ 4}}\,|\, \hat{\xi}_{m,n}=\hat{\eta}_{m,n}=0\,(m\neq 0)\}\), to trace the {branch} of {zonal statistical equilibria}. {We note that in MRS theory, the emergence of zonal symmetry generally occurs by changing some parameters like the values of conserved quantities or the inverse temperature parameter. The fundamental fact supporting this is that, if the inverse temperature \(\beta\) is larger than some value \(\beta_c>0\), the statistical equilibrium is unique and it is zonal. {(The inverse temperature \(\beta\) is defined as \(\beta = \partial S_{\rm mix}/\partial E\) for a branch of statistical equilibria, and it can be viewed as the Lagrange multiplier of the energy constraint. Note that in the statistical mechanics of two-dimensional turbulence, \(\beta\) can be negative. \cite{chorin2013vorticity} summarizes the notion of inverse temperature in the context of two-dimensional turbulence.)} In fact, we typically observe such changes of statistical equilibra when we change the value of the conserved energy. Furthermore, a statistical equilibrium obtained by a zonally restricted search is at least a critical point of the mixing entropy even when the search is not restricted.} {The step size of the gradient method we use is from \(1.0\times 10^{-5}\) to \(1.0\times 10^{-3}\), depending on the progress of the search. Furthermore, when the search point comes in the neighborhood of the equilibria, where the norm of the gradient vector is very small, we use the gradient method on a local coordinate of the energy surface as done in \cite{ryono2022new}, and we set the step size to 1.0. However, the step size is not critical for the results unless it is too large. The computations are done on a PC whose CPU was
Intel Xeon W-2135 which had 6 cores of 3.7 GHz, and each search of the statistical equilibria is completed in up to a few tens of hours.}

\section{Results}\label{section_results}
Computed statistical equilibria for the eight initial vorticity fields are shown in figure \ref{equilibria_a-d}. For the initial vorticity field with \(\Omega=0\), which is equivalent to that of \cite{dritschel2015late}, we obtained a statistical equilibrium of quadrupole form, with two positive vortices and two negative ones (figure \ref{equilibria_a-d}(a)). Note that other equivalent equilibria can be obtained due to \(O(3)\) symmetry of the sphere, since the initial angular momentum is zero. For initial vorticity fields with non-zero angular momentum, three types of equilibria are obtained: quadrupole states with two positive vortices fixed at the both poles because of the non-zero angular momentum (for \(\Omega=1.0\times 10^{-2}, 2.5\times 10^{-2}\), and \(5.0\times 10^{-2}\) shown in figure \ref{equilibria_a-d}(b)--\ref{equilibria_a-d}(d), respectively), a wavy state of wavenumber one (for \(\Omega=7.5\times 10^{-2}\) shown in figure \ref{equilibria_a-d}(e)), and zonal states (for \(\Omega=1.0\times 10^{-1}, 1.8\), and \(2.5\) shown in figure \ref{equilibria_a-d}(f)--\ref{equilibria_a-d}(h), respectively). The dependence of the type of equilibrium on the parameter \(\Omega\) is summarized as follows; as the parameter \(\Omega\) increases, the two negative vortices of the quadrupole states become weaker and eventually disappear, leading to a zonally symmetric equilibrium. {Note that, although the above statistical equilibria are at least local maximizers of the mixing entropy, it is difficult in principle to prove that they are also the global maximizers. Nevertheless, it is highly implied that they are also global maximizers, because they are very close to the global maximizers of the MRS--2 problem, whose constraint is looser than our problem, as we see later. }

\begin{figure}[htbp]
    \centering
    \begin{tabular}{cc}
        \begin{minipage}[t]{0.45\hsize}
            \includegraphics[scale=0.30]{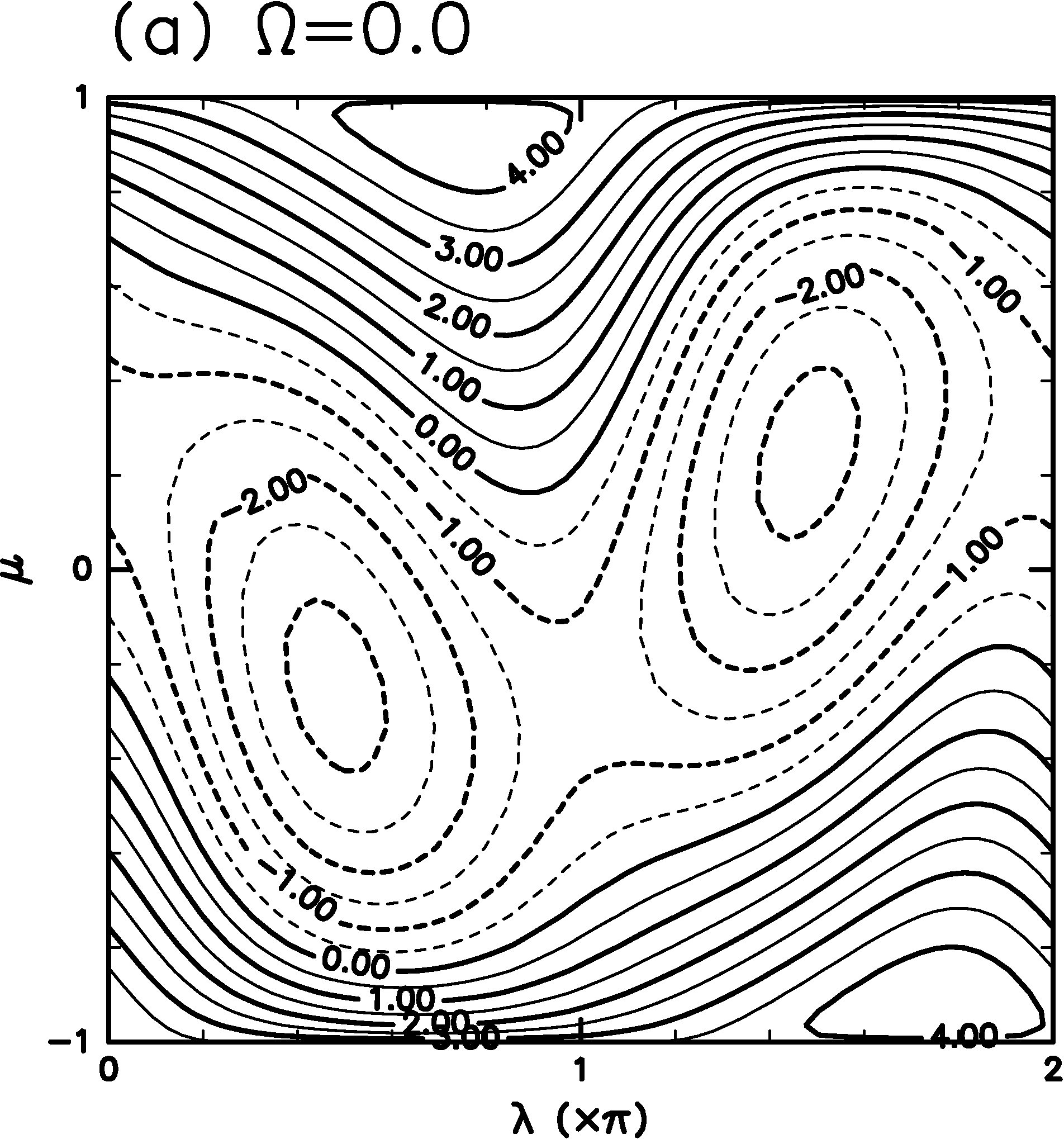}       
            \centering
        \end{minipage}
        &
        \begin{minipage}[t]{0.45\hsize}
            \includegraphics[scale=0.30]{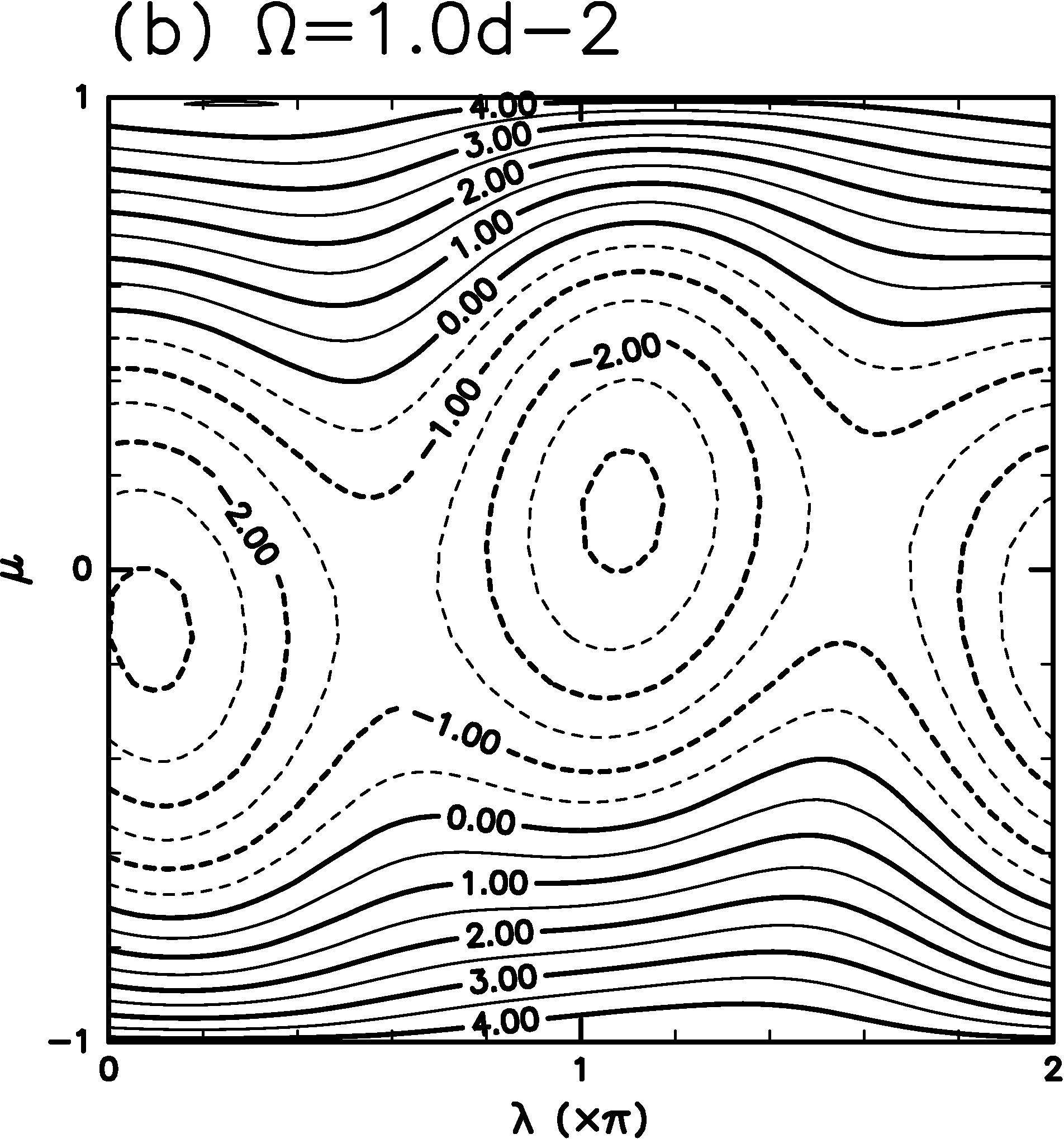}       
            \centering
        \end{minipage}
        \\
        \begin{minipage}[t]{0.45\hsize}
            \includegraphics[scale=0.30]{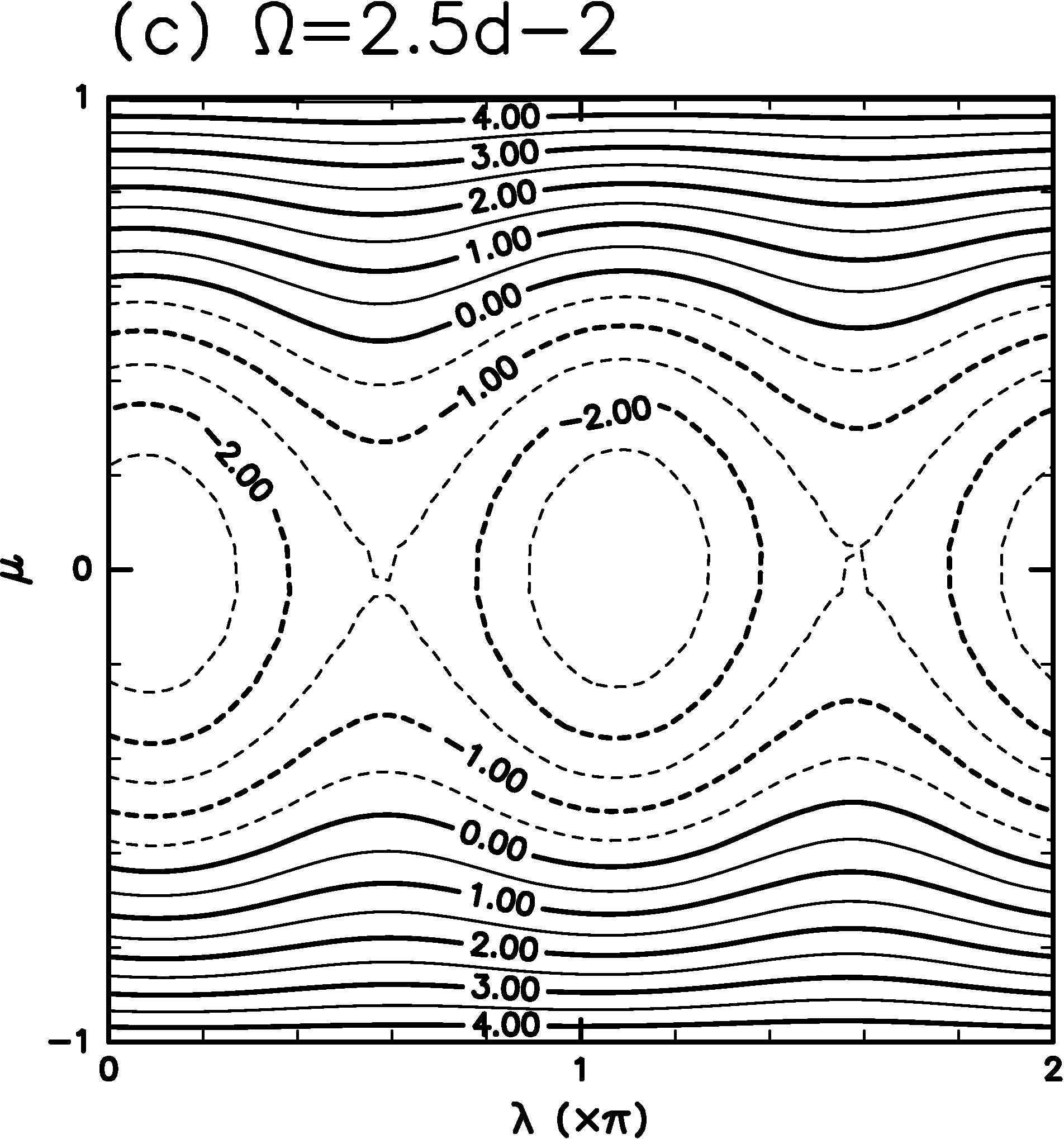}       
            \centering
        \end{minipage}
        &
        \begin{minipage}[t]{0.45\hsize}
            \includegraphics[scale=0.30]{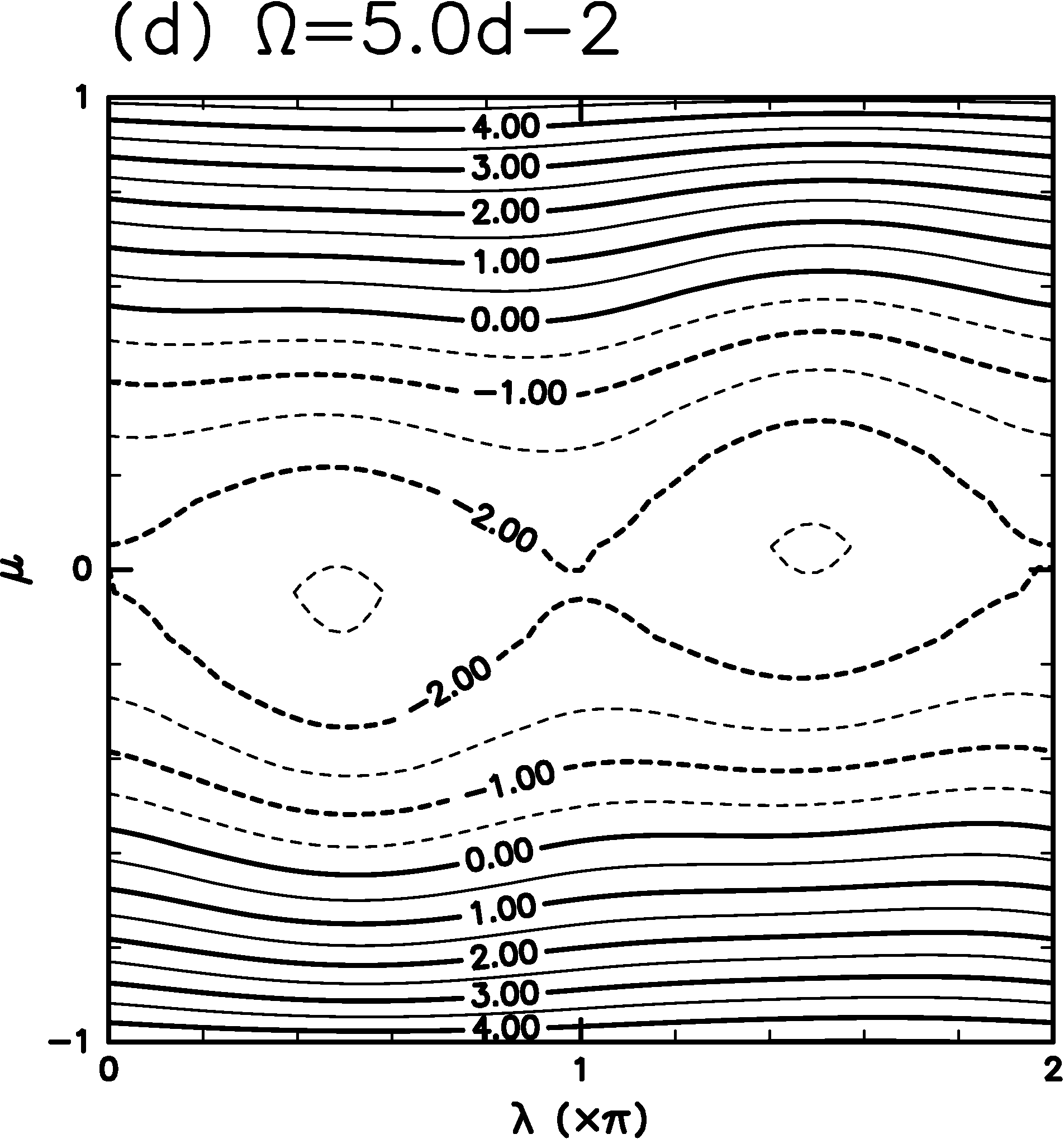}       
            \centering
        \end{minipage}
    \end{tabular}
    \caption{Macroscopic vorticity fields \(\overline{q}\) corresponding to the computed statistical equilibria for \(\Omega=0, 1.0\times 10^{-2}, 2.5\times 10^{-2}, 5.0\times 10^{-2}\) (the cases of \(\Omega=7.5\times 10^{-2}, 1.0\times 10^{-1}, 1.8\), and \(2.5\) are shown in the continued figure). In each panel, the corresponding \(\Omega\) is shown at the top of the panel, the horizontal axis is the longitude \(\lambda/(2\pi)\), and the vertical axis is the sine-latitude \(\mu\). Contour interval is set to 0.5.}
    \label{equilibria_a-d}
\end{figure}
\addtocounter{figure}{-1}

\begin{figure}[htbp]
    \centering
    \begin{tabular}{cc}
        \begin{minipage}[t]{0.40\hsize}
            \includegraphics[scale=0.3]{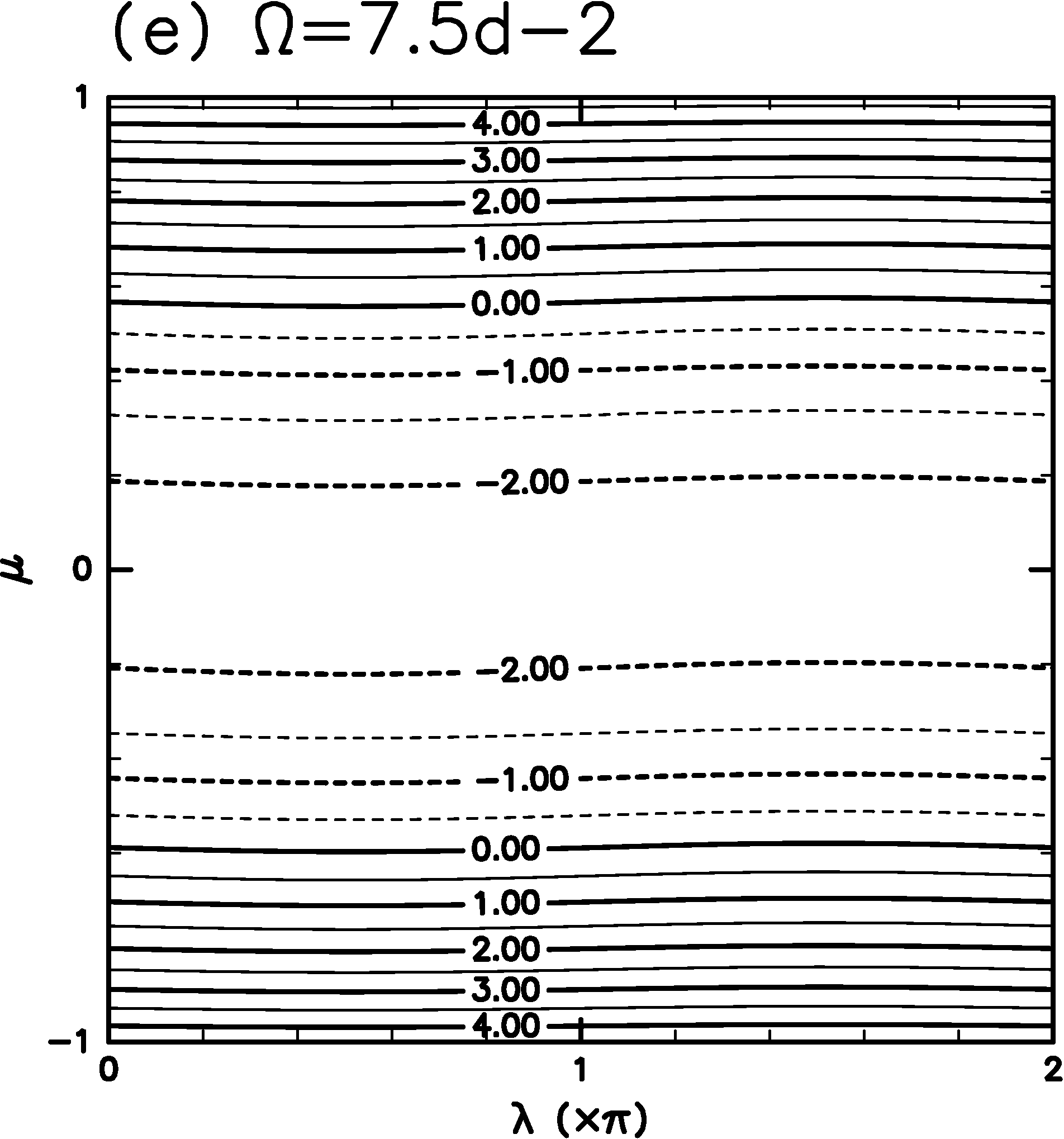}      
            \centering
        \end{minipage}
        &
        \begin{minipage}[t]{0.40\hsize}
            \includegraphics[scale=0.3]{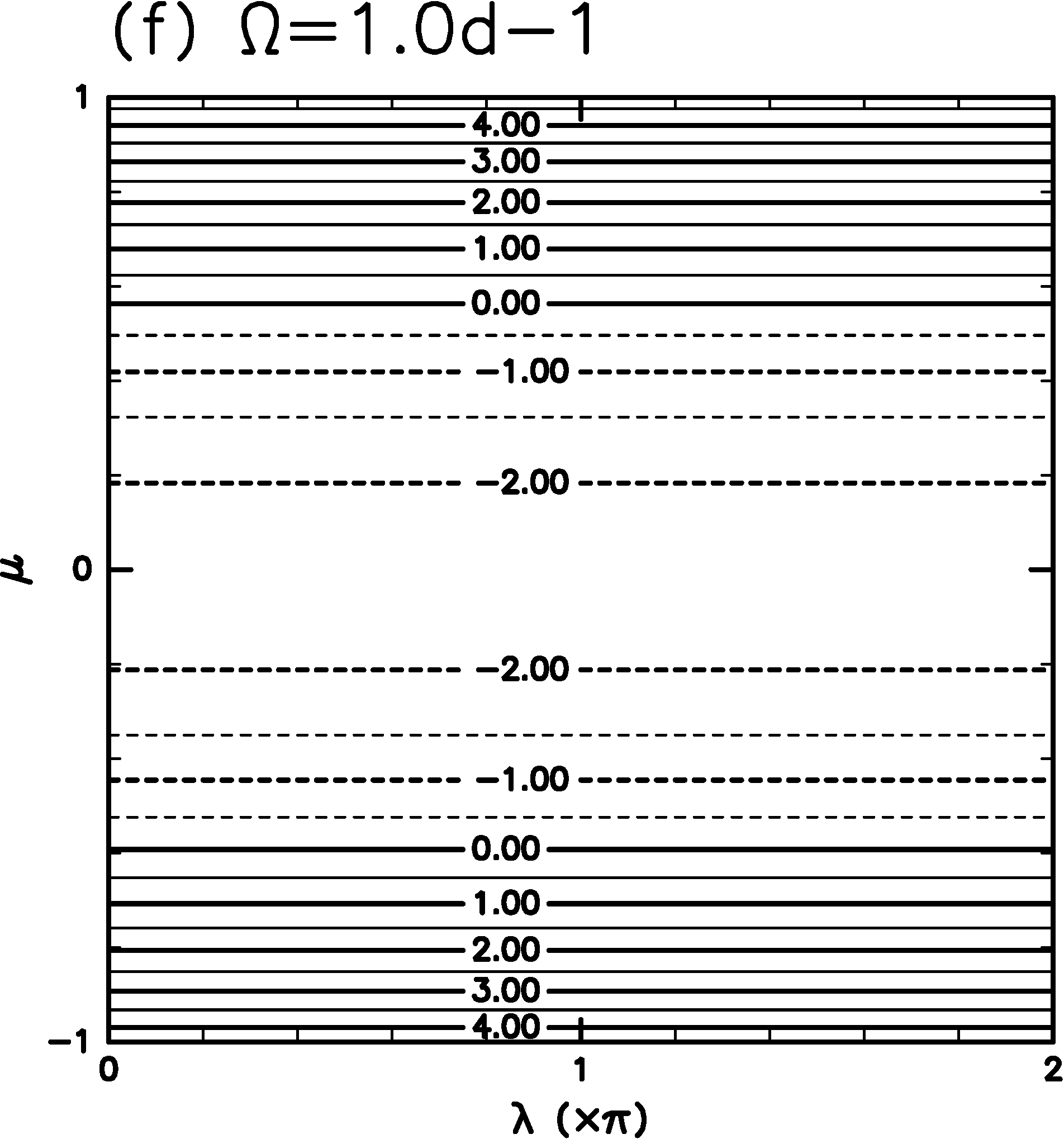}       
            \centering
        \end{minipage}
        \\
        \begin{minipage}[t]{0.40\hsize}
            \includegraphics[scale=0.3]{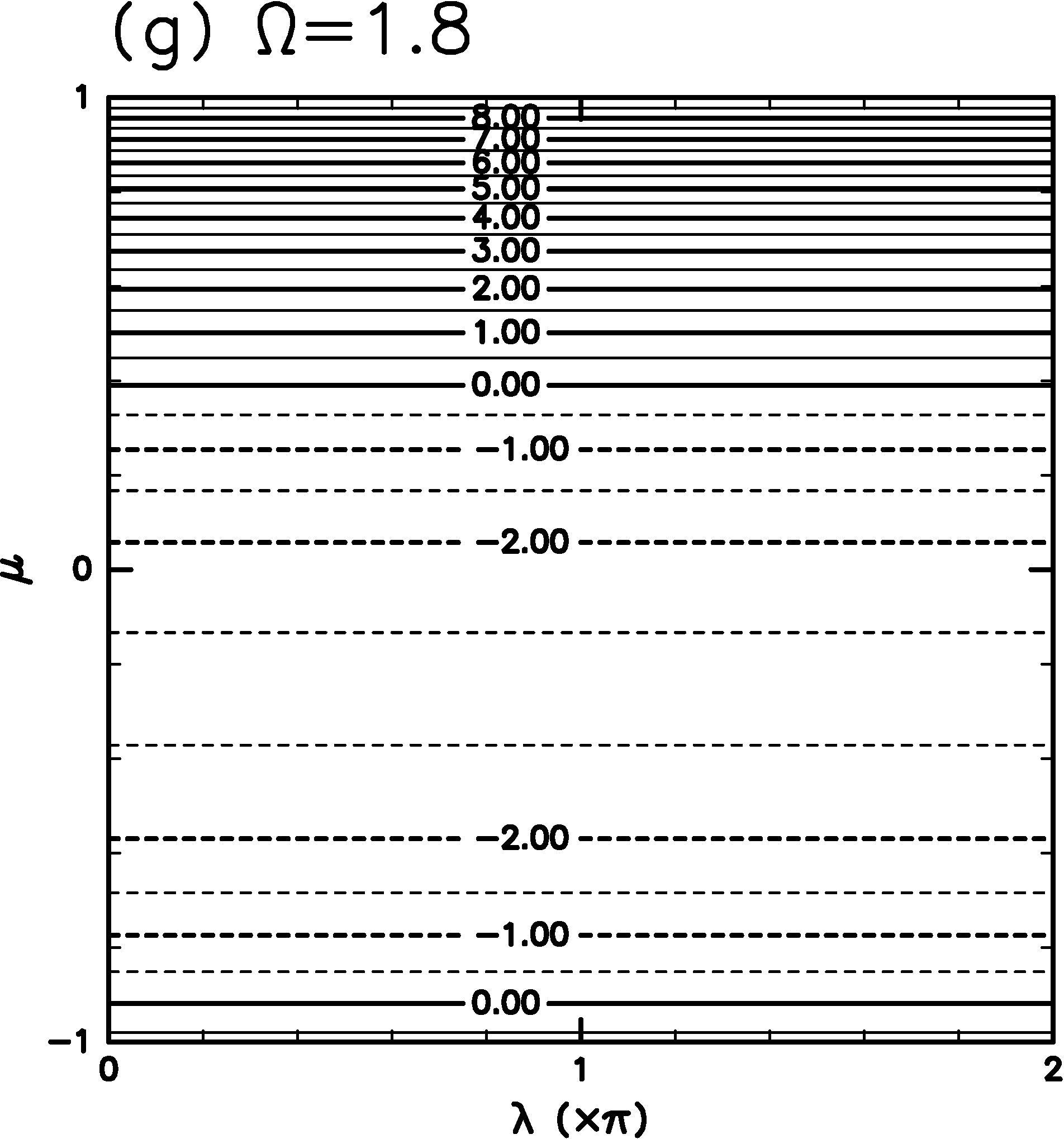}       
            \centering
        \end{minipage}
        &
        \begin{minipage}[t]{0.40\hsize}
            \includegraphics[scale=0.3]{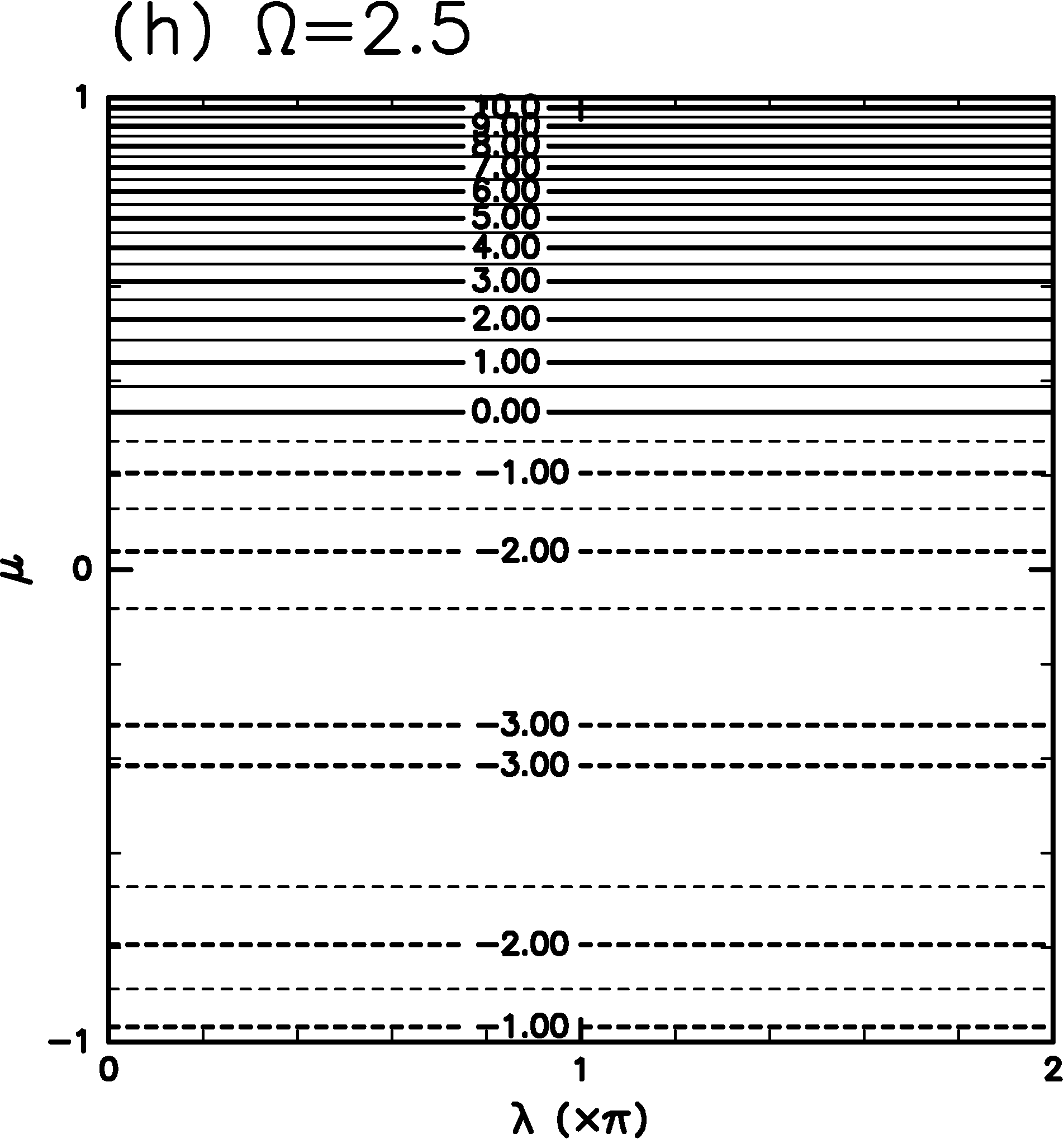}       
            \centering
        \end{minipage}

    \end{tabular}
    \caption{(continued.)}
    \label{equilibria_e-h}
\end{figure}

{For the statistical equilibria computed shown in figure \ref{equilibria_a-d}}, we examine the relationship between the macroscopic vorticity \(\overline{q}\) and the macroscopic stream function \(\overline{\psi}\). Recall that, when the initial angular momentum vector is parallel to the axis connecting the north and south poles, the statistical equilibrium has a functional relationship written as \(\overline{q} = f(\overline{\psi}+\Omega_1 \mu)\) for some constant \(\Omega_1\) (\citealp{qi2014hyperviscosity}, \citealp{ryono2022new}). This relationship means that the statistical equilibrium is a steady solution of the Euler equation for an observer rotating at the angular velocity \(\Omega_1\). 
{Indeed, if we introduce a new rotating frame of reference by \(\lambda'=\lambda-\Omega_1 t; \mu'=\mu; t'=t\), then the Euler equation on this frame of reference becomes}
\begin{align*}
    {
    \frac{\partial q}{\partial t'} - \Omega_1 \frac{\partial q}{\partial\lambda'} + \frac{\partial\psi}{\partial\lambda'}\frac{\partial q}{\partial \mu'}-\frac{\partial q}{\partial\lambda'}\frac{\partial \psi}{\partial \mu'}=0.
    }
\end{align*}
{Substituting \(q=\overline{q}=f(\overline{\psi}+\Omega_1\mu)=f(\overline{\psi} +\Omega_1\mu')\) and \(\psi = \overline{\psi}\) into the above equation, we obtain \(\partial \overline{q}/\partial t'=0\). Therefore, the statistical equilibrium looks steady for an observer on the rotating frame of reference.}
For the quadrupole statistical equilibria we computed, we plot the modified stream function \(\overline{\psi}+\Omega_1 \mu\) versus the macroscopic vorticity \(\overline{q}\) (figure \ref{qpsirel_a-d}). Here, the values of \(\Omega_1\) are determined by the least squares method introduced by \cite{ryono2022new} {for non-zonal equilibria (figure \ref{qpsirel_a-d}(a)-(e))}. {However, the method cannot be applied for zonal equilibria (figure \ref{qpsirel_a-d}(f)-(h)) because it needs non-zero \(L^2\) norm of \(\partial \overline{q}/\partial \lambda\), and therefore we determine the values of \(\Omega_1\) by another method described in \ref{sec_appendix_omega1}.} For all the cases, the functional relationships are almost linear. Linear \(\overline{q}\)--\(\overline{\psi}\) relation can be derived from the MRS--2 theory, which considers the conservation of Casimir invariants only up to the second order (\citealp{naso2010statistical}, \citealp{herbert2013additional}, etc). Note that, however, not all energy condenses to total wavenumber two modes of the vorticity field completely as predicted by the MRS--2. For the \(\Omega=0\) case in this paper, the total wavenumber four modes have about \(3.5850\times 10^{-6}\) times the energy of the total wavenumber two modes. This ratio increases as the value of \(\Omega\) increases{, which realizes as the subtle curvature of \(\overline{q}\)-\(\overline{\psi}\) curve in figure \ref{qpsirel_a-d}(g) and (h)}. For example, in the case of \(\Omega=1.8\), the total wavenumber four modes have about \(7.4750\times 10^{-4}\) times the energy of the total wavenumber two modes. {Although the small values may not seem physically significant, we address that this is highly a non-trivial result from the mathematical point of view of the entropy maximization problem, since we do not truncate any Casimir invariants as in MRS--2 and there is no a priori guarantee that the amplitudes of modes with large total wave numbers are small. Furthermore, the condensation of the energy into \(n=2\) modes leads to the dynamical stability of the statistical equilibria. Indeed, if there were an instability from the statistical equilibrium, then the enstrophy would be transported to the modes with higher total wavenumbers. According to the well known discussion of \cite{fjortoft1953changes}, such transport must be accompanied with an energy transport to modes with lower wavenumbers. This is impossible since the modes with total wavenumber \(n=1\) are fixed due to the angular momentum conservation.}

\begin{figure}[htbp]
    \centering
    \begin{tabular}{cc}
        \begin{minipage}[t]{0.45\hsize}
            \includegraphics[scale=0.21]{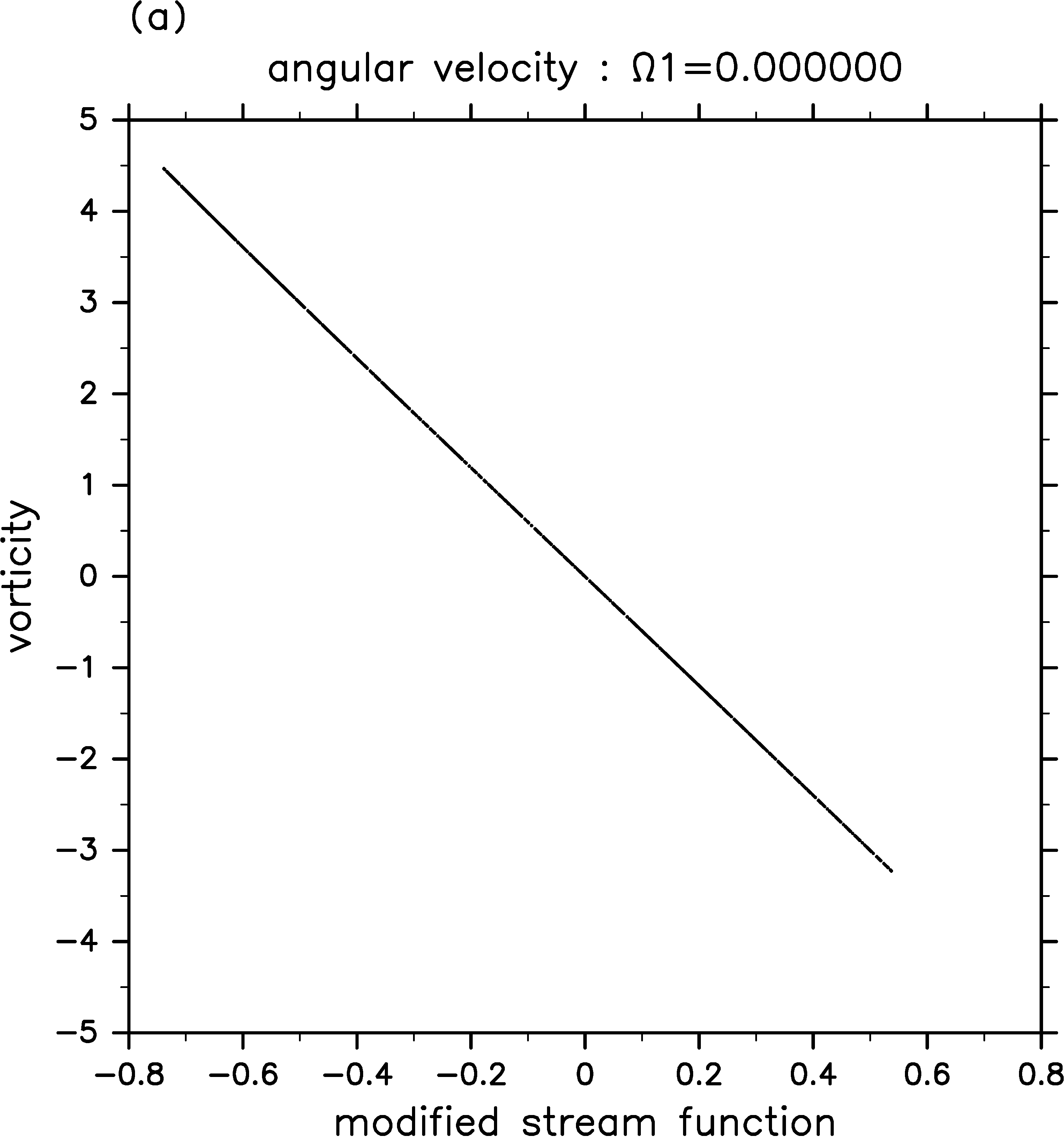}       
            \centering
        \end{minipage}
        &
        \begin{minipage}[t]{0.45\hsize}
            \includegraphics[scale=0.21]{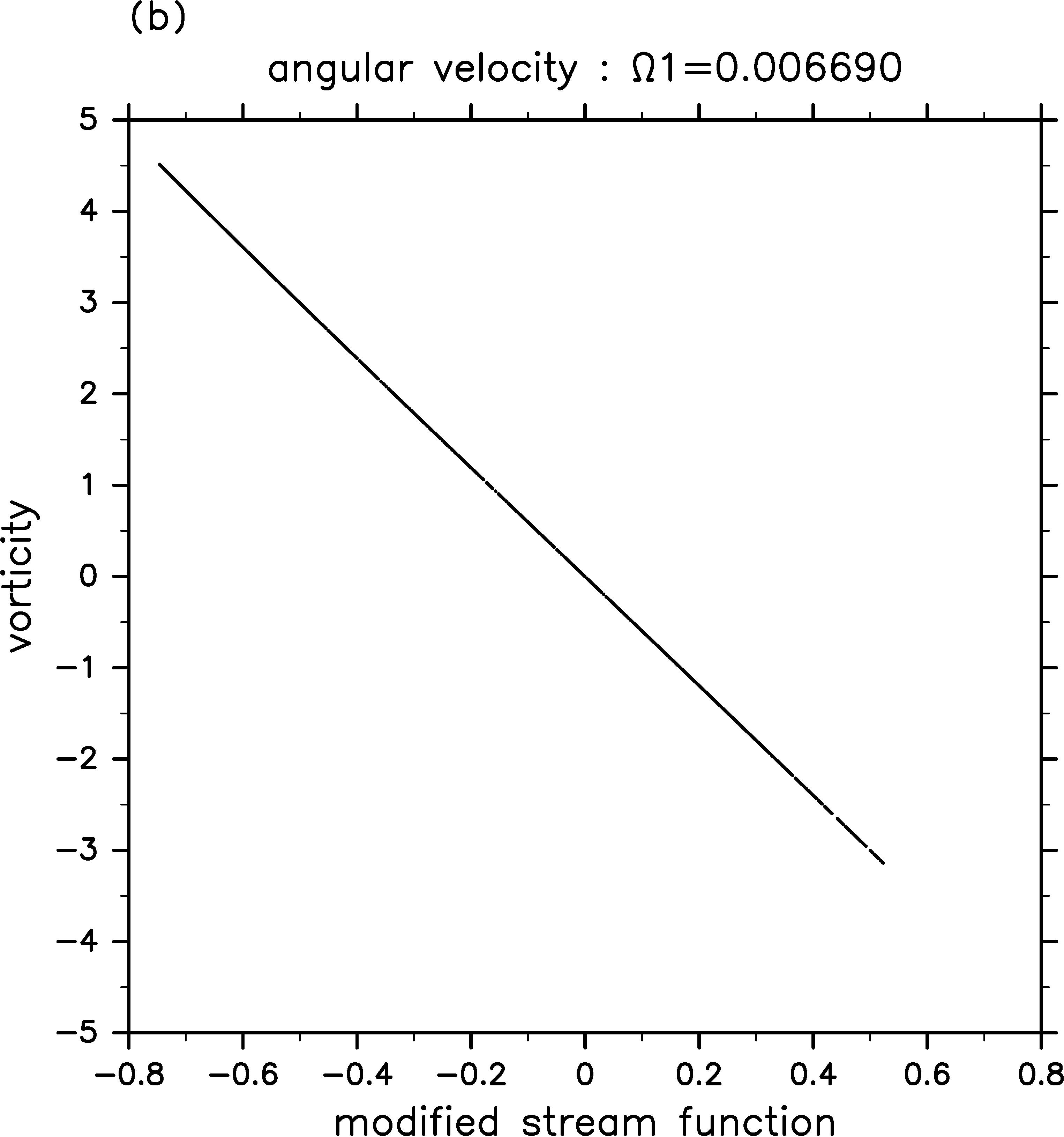}       
            \centering
        \end{minipage}
        \\
        \begin{minipage}[t]{0.45\hsize}
            \includegraphics[scale=0.21]{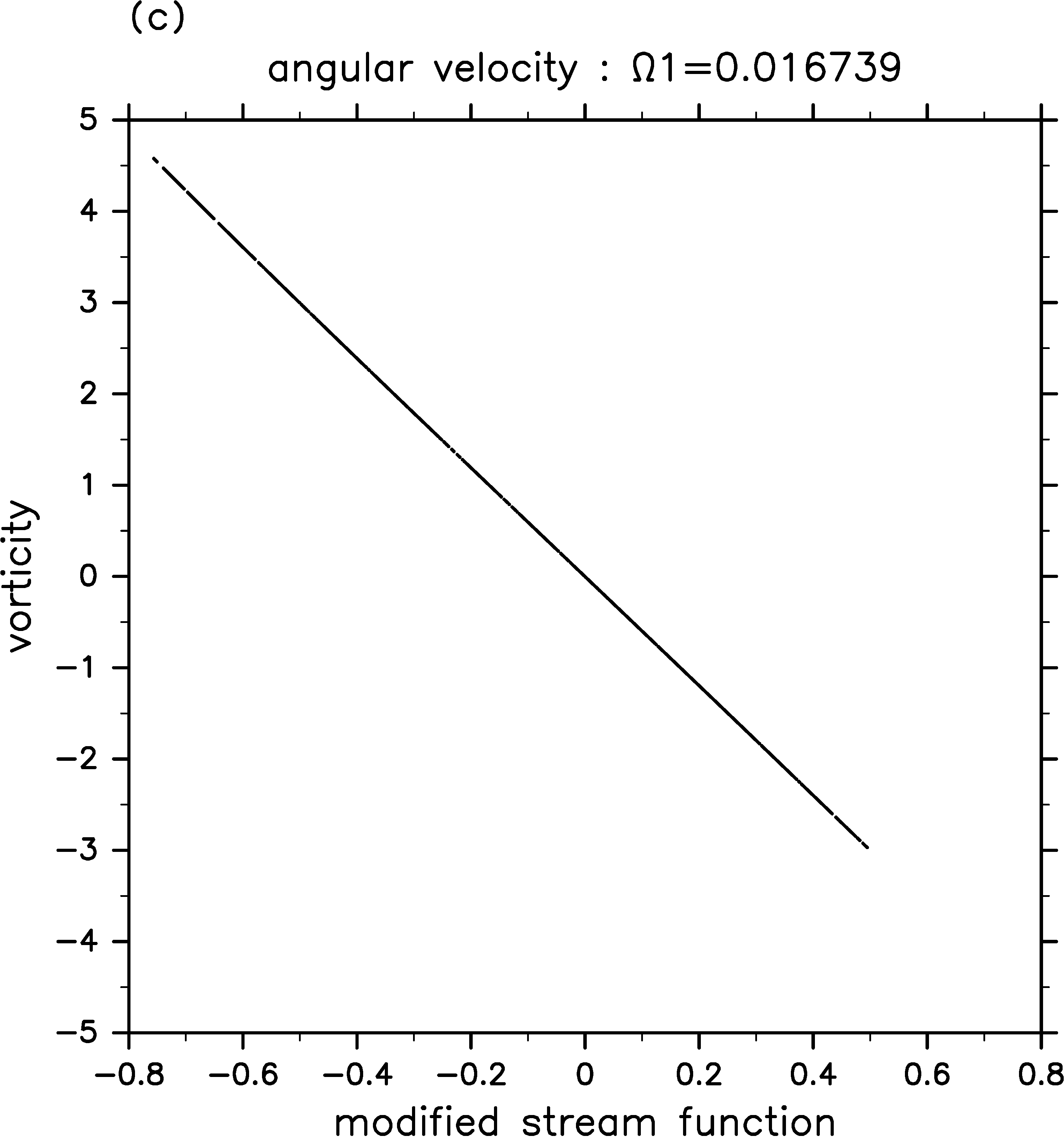}       
            \centering
        \end{minipage}
        &
        \begin{minipage}[t]{0.45\hsize}
            \includegraphics[scale=0.21]{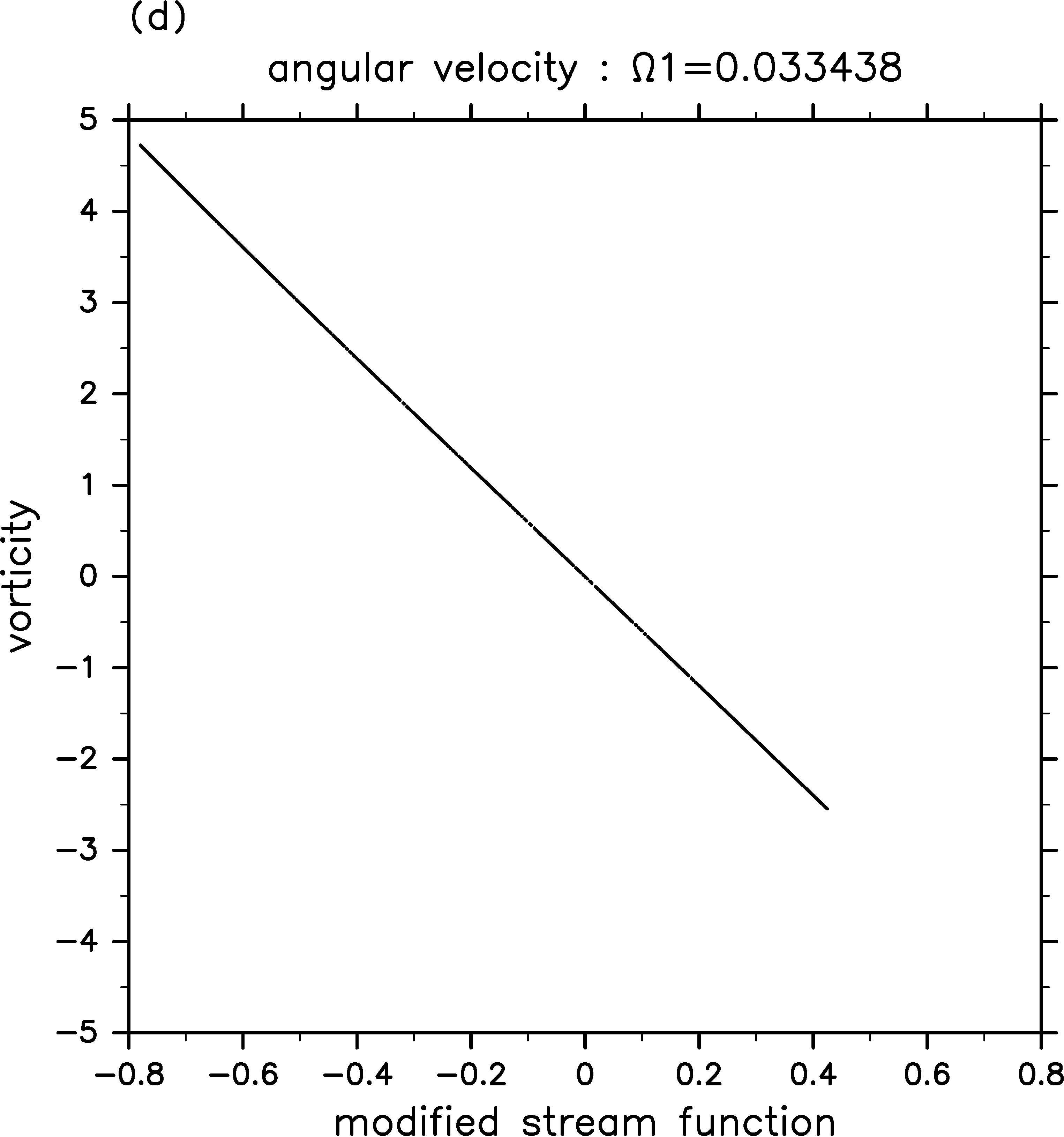}       
            \centering
        \end{minipage}
    \end{tabular}
    \caption{Plots of modified stream function \(\overline{\psi}+\Omega_1 \mu\) versus macroscopic vorticity \(\overline{q}\) for the initial vorticity fields corresponding to {\(\Omega = 0, 1.0\times 10^{-2}, 2.5\times 10^{-2}\), and \(5.0\times 10^{-2}\) (the cases for \(\Omega=7.5\times 10^{-2}, 1.0\times 10^{-1}, 1.8\), and \(2.5\) are shown in the continued figure)}. In each panel, the value of \(\Omega_1\) is shown at the top of the panel, the horizontal axis is the modified stream function \(\overline{\psi}+\Omega_1 \mu\), and the vertical axis is the macroscopic vorticity \(\overline{q}\).}
    \label{qpsirel_a-d}
\end{figure}
\addtocounter{figure}{-1}
\begin{figure}[htbp]
    \centering
    \begin{tabular}{cc}
        \begin{minipage}[t]{0.45\hsize}
            \includegraphics[scale=0.21]{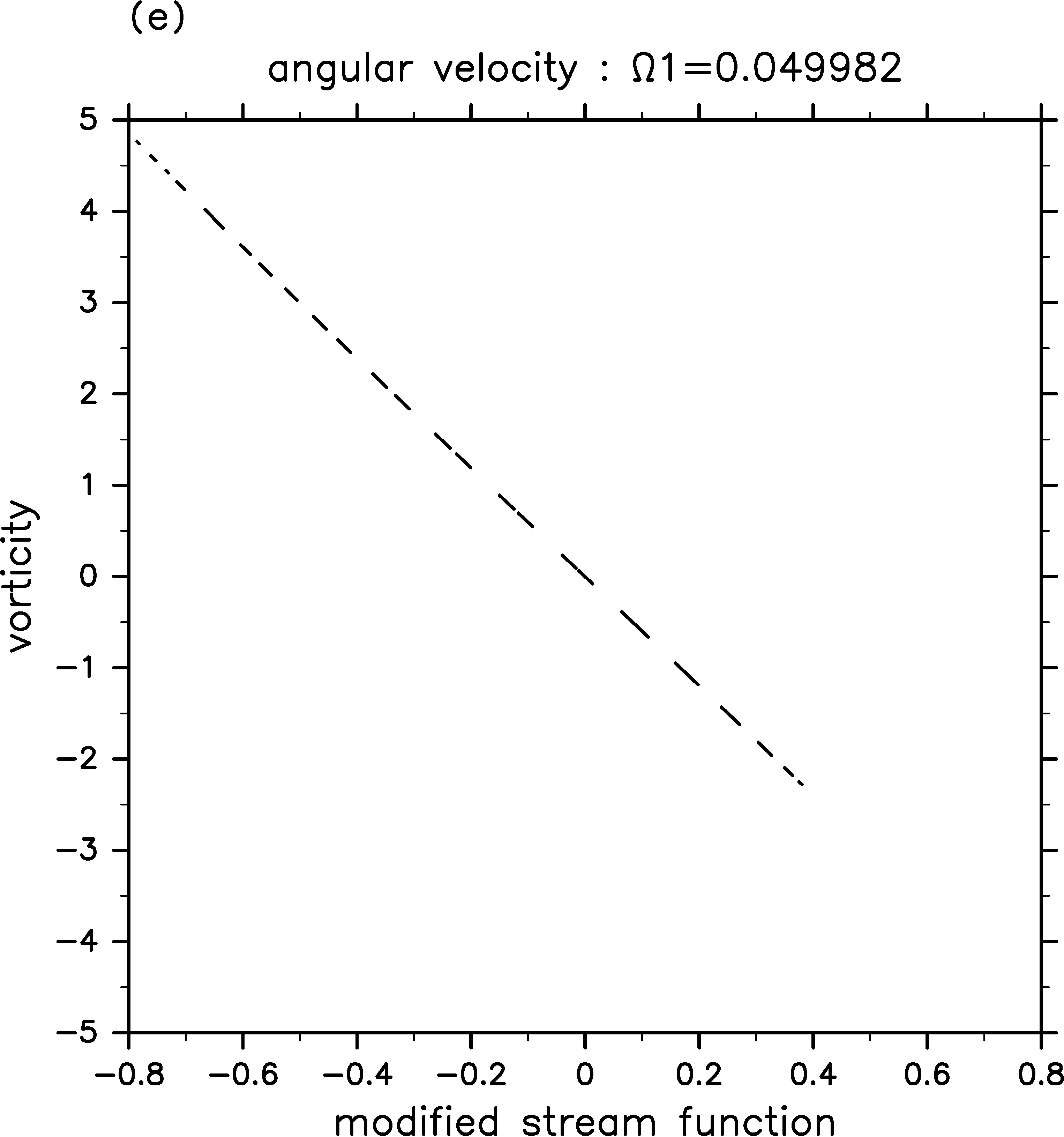}       
            \centering
        \end{minipage}
        &
        \begin{minipage}[t]{0.45\hsize}
            \includegraphics[scale=0.21]{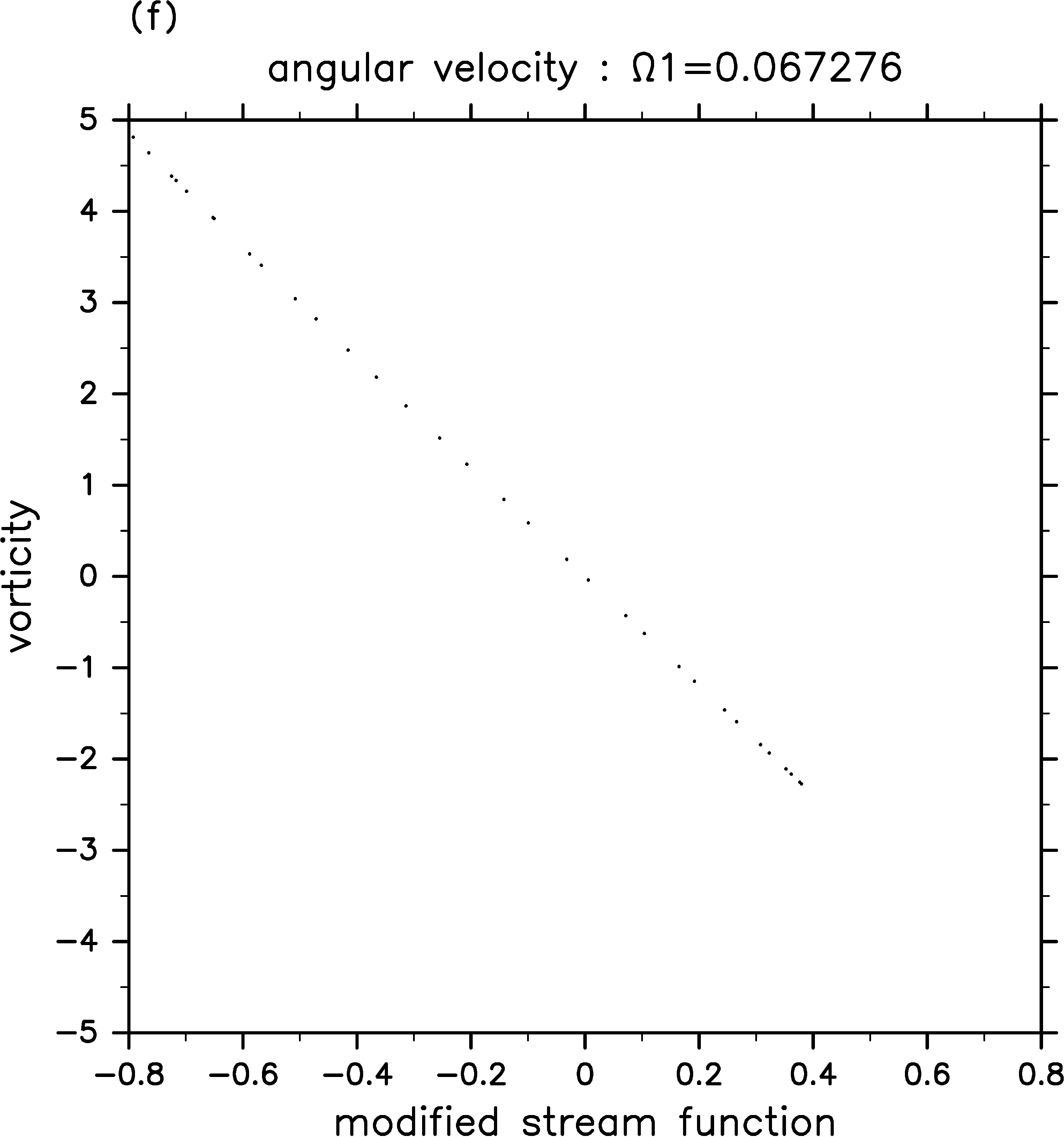}       
            \centering
        \end{minipage}
        \\
        \begin{minipage}[t]{0.45\hsize}
            \includegraphics[scale=0.21]{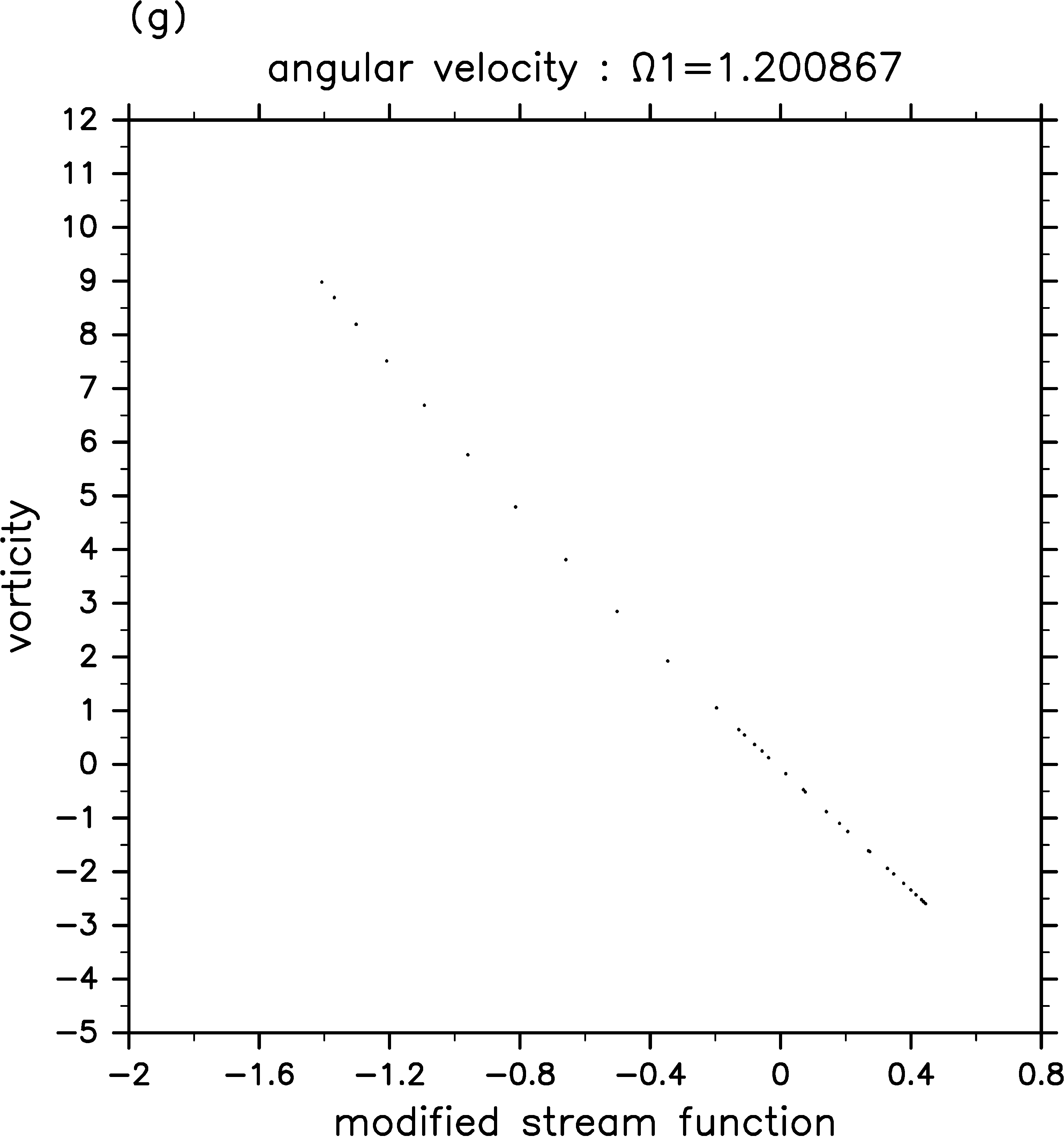}       
            \centering
        \end{minipage}
        &
        \begin{minipage}[t]{0.45\hsize}
            \includegraphics[scale=0.21]{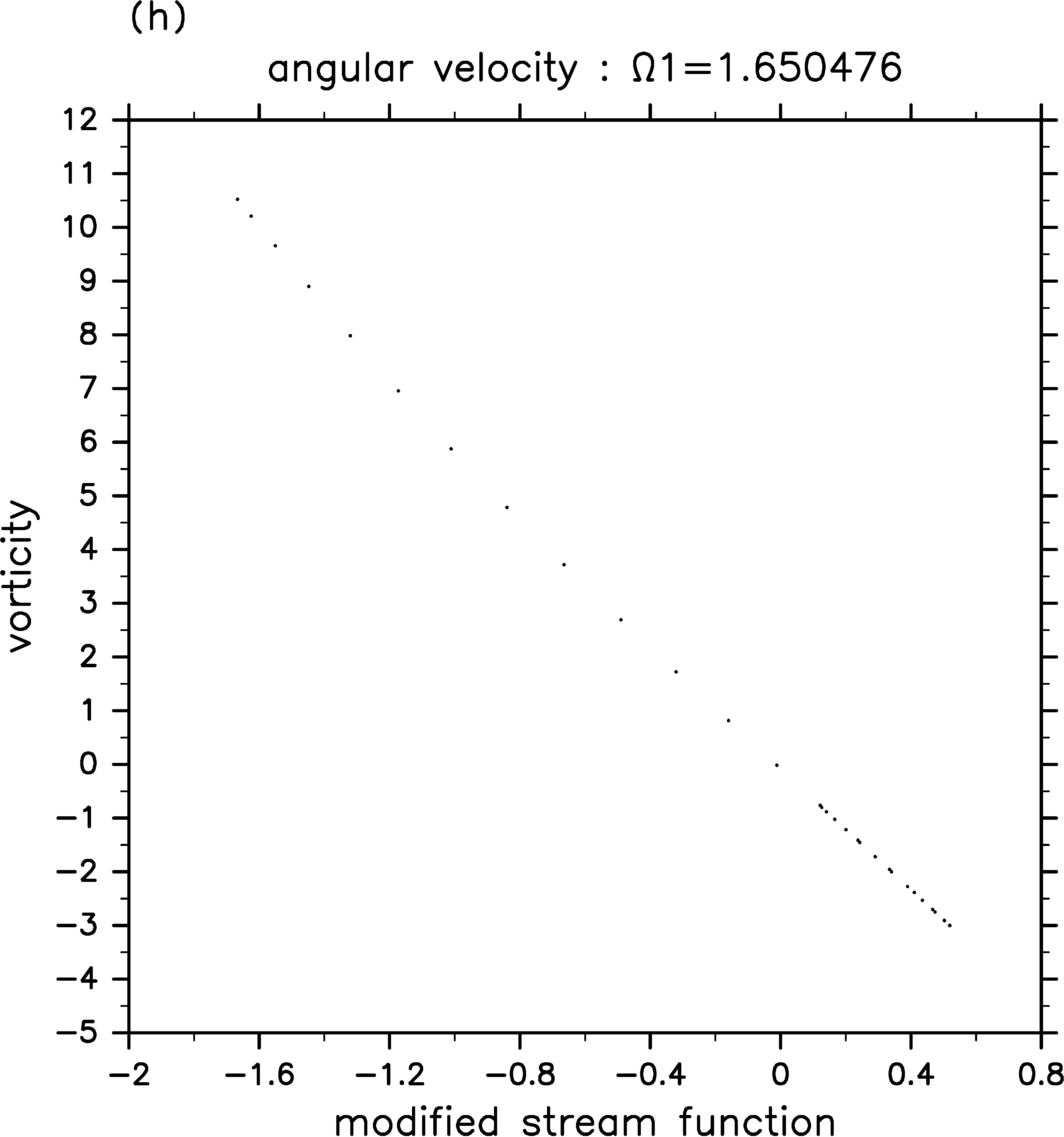}       
            \centering
        \end{minipage}
    \end{tabular}
    \caption{{(Continued.)}}
    \label{qpsirel_g-h}
\end{figure}

Now that we have described the properties of the computed statistical equilibria, let us {compare} these statistical equilibria with the time evolution results (shown in section \ref{subsection_timeevo}). For the initial vorticity field of \cite{dritschel2015late} (\(\Omega=0\) case), the MRS theory predicts a quadrupole state (figure \ref{equilibria_a-d}(a)) as the statistical equilibrium. In this case, a quadrupole state is also obtained as the final state of time evolution (figure \ref{t200_a-d}(a)), so although the shape of the vorticity distribution itself is different, it is topologically consistent in the sense of a quadrupole. The existence of quadrupole statistical equilibria was also noted by \cite{qi2014hyperviscosity}, where the MRS--2 and the perturbative MRS--4 theory were used. Note that, however, these theories do not necessarily predict a quadrupole state as a unique equilibrium. The present study provides the first rigorous confirmation that the quadrupole state is {at least a local maximum of the mixing entropy,} strictly based on the original MRS theory. On the other hand, we also confirmed that the MRS theory did not predict the concentrated vortex cores which appeared in the time evolution result (compare figure \ref{t200_a-d}(a) and \ref{equilibria_a-d}(a)). Indeed, the statistical equilibrium had a maximum \(\overline{q}\) value of \(4.4668\) and a minimum value of \(-3.2267\) (figure \ref{equilibria_a-d}(a)), while the vorticity field \(q\) at the final state of the time evolution (figure \ref{t200_a-d}(a)) had a maximum value of \(19.836\) and a minimum value of \(-16.283\). The vortices that appeared in the time evolution had concentrated cores of vorticity where large vorticity was confined to a small radius, in contrast to the spread vortices in the statistical equilibrium. For the cases of \(\Omega= 1.0\times 10^{-2}, 2.5\times 10^{-2}\), and \(5.0\times 10^{-2}\), we obtained the statistical equilibria of quadrupole states (figure \ref{equilibria_a-d}(b), \ref{equilibria_a-d}(c), and \ref{equilibria_a-d}(d), respectively), which were topologically consistent with the time integration results (figure \ref{t200_a-d}(b), \ref{t200_a-d}(c), and \ref{t200_a-d}(d)), but note that two positive vortices are not necessarily fixed at poles in the time integration results. For the case of \(\Omega=7.5\times 10^{-2}\), the macroscopic vorticity field \(\overline{q}\) corresponding to the statistical equilibrium (figure \ref{equilibria_a-d}(e)) was almost zonal, but with subtle wavenumber one distortion, whereas the time integration result (figure \ref{t200_a-d}(e)) showed a quadrupole state. For the cases of \(\Omega=1.0\times 10^{-1}, 1.8\) and \(2.5\), the macroscopic vorticity fields \(\overline{q}\) corresponding to the statistical equilibria (figure \ref{equilibria_a-d}(f), \ref{equilibria_a-d}(g), and \ref{equilibria_a-d}(h)) were zonal, but the time integration results (figure \ref{t200_a-d}(f), \ref{t200_a-d}(g), and \ref{t200_a-d}(h)) were non-zonal. In each case of \(\Omega=1.0\times 10^{-1}, 1.8\) and \(2.5\), negative vorticity is distributed in low-latitude area in the statistical equilibrium, but, for the time integration result a negative vortex roams around the south pole. The reason why the statistical equilibria and the time integration results had such differences is discussed in the next section.

 {As we have seen so far}, the type of equilibrium changes from non-zonal to zonally symmetric as the parameter \(\Omega\) {become larger than \(7.5\times 10^{-2}\)}. However, we can trace the branch of the zonally symmetric equilibria to small{er} values of \(\Omega\) by restricting the search to the subset of zonal vorticity fields as described at the end of section \ref{subsec_setting_computation}. The values of \(S_{\rm mix}\) for the zonal and non-zonal statistical equilibria are shown in Table \ref{tab:entropy_values}. {Here, we have confirmed that the values of the mixing entropy converged and were accurate up to 13 digits.} For the cases of \(\Omega=1.0\times 10^{-2}, 2.5\times 10^{-2}\), and \(5.0\times 10^{-2}\), the non-zonal (quadrupole) equilibria have higher entropy values than the zonal equilibria, but for the case of \(\Omega=7.5\times 10^{-2}\), the zonal equilibrium has a {slightly} higher {but almost equal} entropy value {to} the non-zonal equilibrium, {which implies that the loss of zonal asymmetry of the equilibrium occurs when \(\Omega\) is slightly larger than \(7.5\times 10^{-2}\). We can confirm this transition by comparing the macroscopic vorticity fields. Figure \ref{zonal_equilibria} shows the zonally symmetric equilibria obtained in this way for \(\Omega=0, 1.0\times 10^{-2}, 2.5\times 10^{-2}, 5.0\times 10^{-2}\), and \(7.5\times 10^{-2}\) cases. For the case of \(\Omega = 7.5\times 10^{-2}\), the zonal and non-zonal equilibria appear almost identical (figure \ref{equilibria_a-d}(e) and \ref{zonal_equilibria}(e))}. {For \(\Omega=0.0, 1.0\times 10^{-2}, 2.5\times 10^{-2}\), and \(5.0\times 10^{-2}\), the difference in entropy values are still small between zonal and non-zonal equilibria, although the appearance of the macroscopic vorticity fields is quite different. However, we note that however small the difference in the mixing entropy of two macroscopic states, there is a large difference in the number of corresponding microscopic states. The concentration theorem of \cite{robert1991maximum} states that, heuristically, the number of microscopic states in the neighborhood of the entropy maximizing state is huge compared to the number of microscopic states corresponding to the other macroscopic states. We discuss this point more quantitatively in the next section.}

\begin{table}[htbp]
    \centering
    \caption{Values of mixing entropy \(S_{\rm mix}\) for the zonal and non-zonal statistical equilibria computed for the initial vorticity fields corresponding to the eight values of \(\Omega\). The values are shown to 12 decimal places. {For \(\Omega=1.0\times 10^{-1}, 1.8\), and \(2.5\), the values for non-zonal statistical equilibria are not available because such equilibria are not found.}}
    \begin{tabular}{@{}ccc}
          \hline
          \(\Omega\)   &  zonal & non-zonal \\ 
          \hline\hline
         \(0.0\) & 2.949248609918 & 2.949253274176 \\
         \(1.0\times 10^{-2}\) & 2.949334821280 & 2.949337898743 \\
         \(2.5\times 10^{-2}\) & 2.949483505214 & 2.949484819565 \\
         \(5.0\times 10^{-2}\) & 2.949824780187 & 2.949824800570 \\
         \(7.5\times 10^{-2}\) & 2.950228945964 & 2.950228945819 \\
         \(1.0\times 10^{-1}\) & 2.950667559451 & not available \\
         \(1.8\) & 2.913956688301 & not available \\
         \(2.5\) & 2.881885872494 & not available \\ 
         \hline
    \end{tabular}
    \label{tab:entropy_values}
\end{table}

\begin{figure}[htbp]
    \centering
    \begin{tabular}{cc}
        \begin{minipage}[t]{0.40\hsize}
            \includegraphics[scale=0.3]{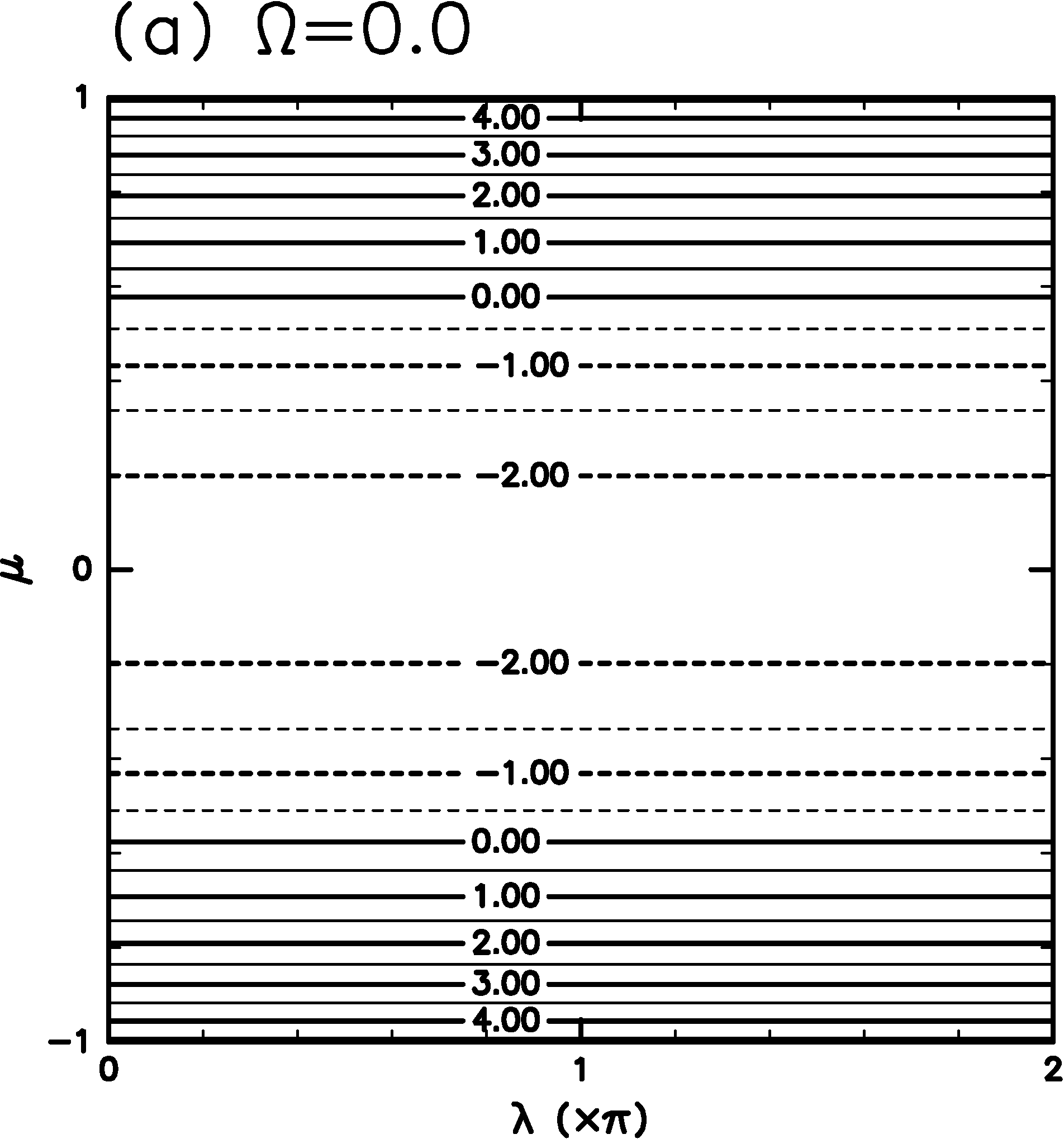}   
            \centering
        \end{minipage}
        &
        \begin{minipage}[t]{0.40\hsize}
            \includegraphics[scale=0.3]{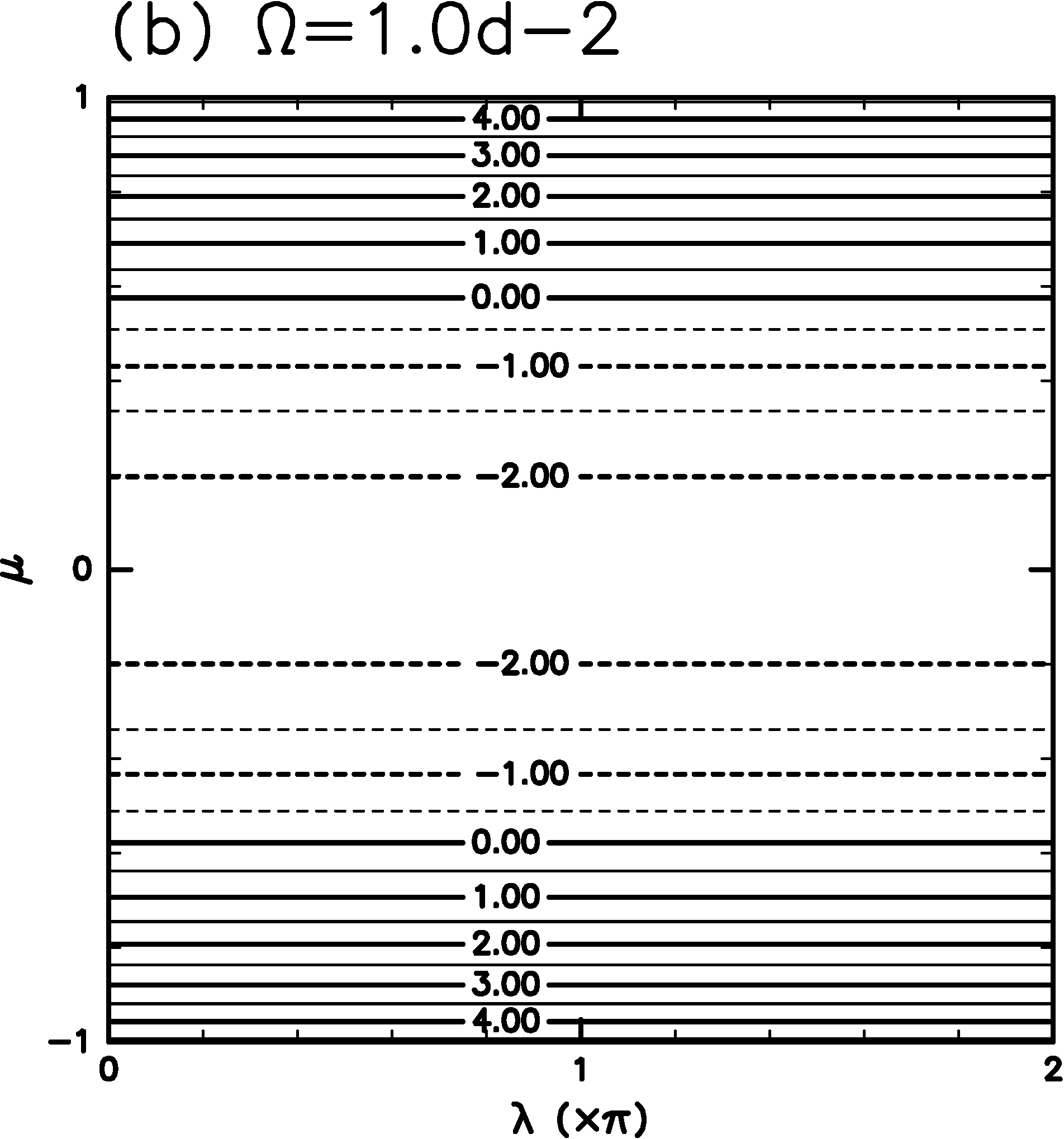}  
            \centering
        \end{minipage}
        \\
        \begin{minipage}[t]{0.40\hsize}
            \includegraphics[scale=0.3]{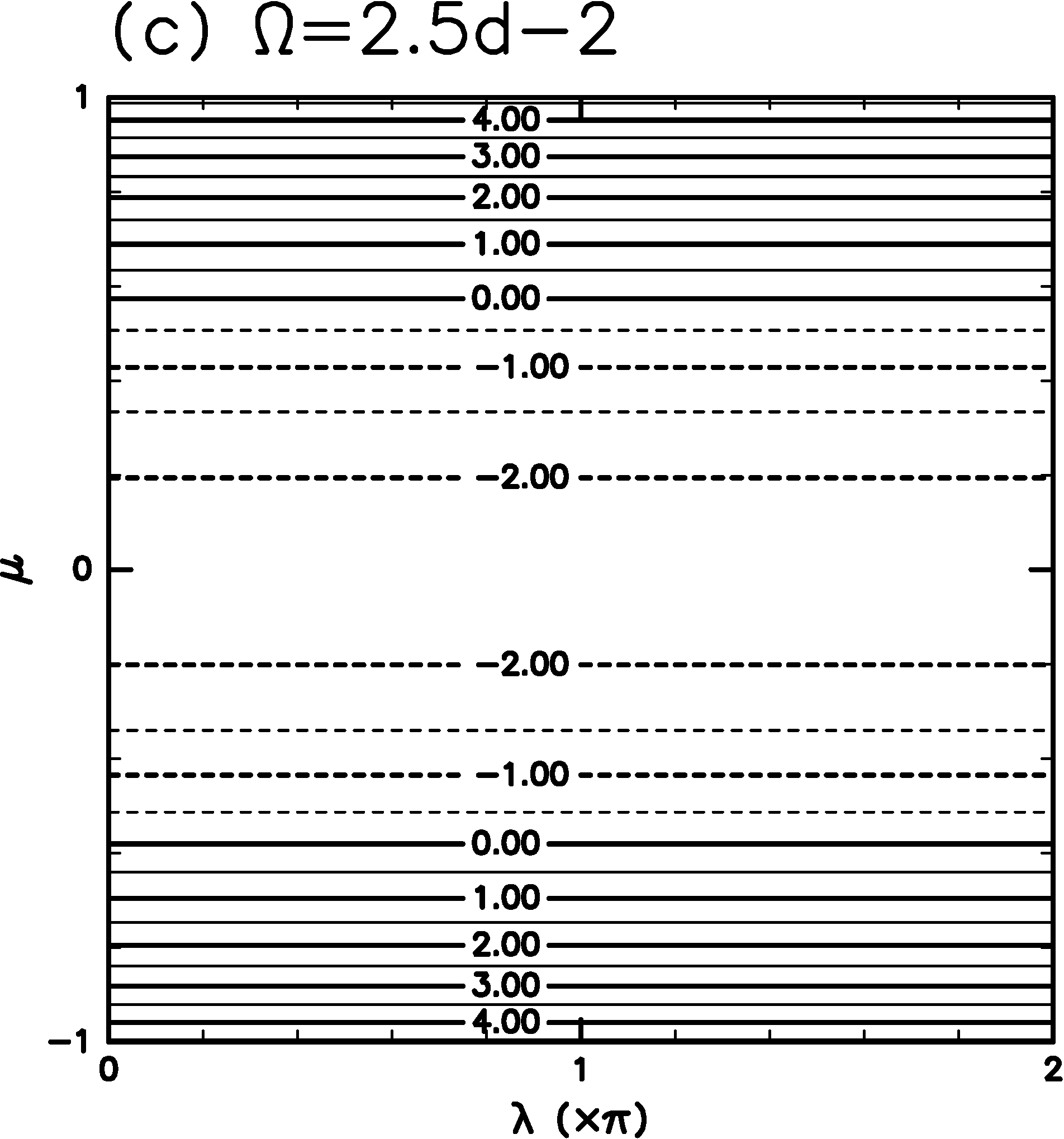}       
            \centering
        \end{minipage}
        &
        \begin{minipage}[t]{0.40\hsize}
            \includegraphics[scale=0.3]{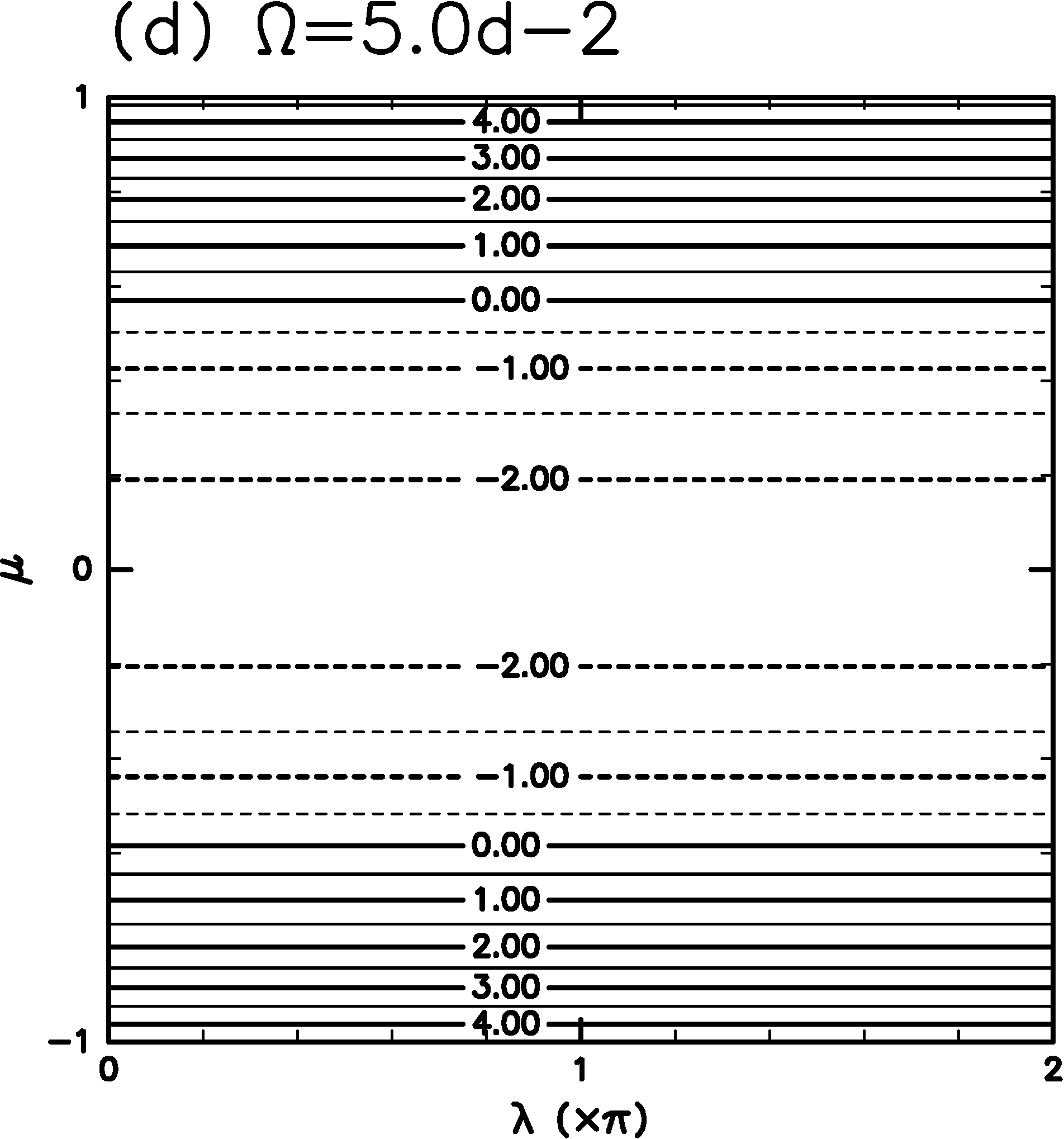}
            \centering
        \end{minipage}
        \\
        \begin{minipage}[t]{0.40\hsize}
            \includegraphics[scale=0.3]{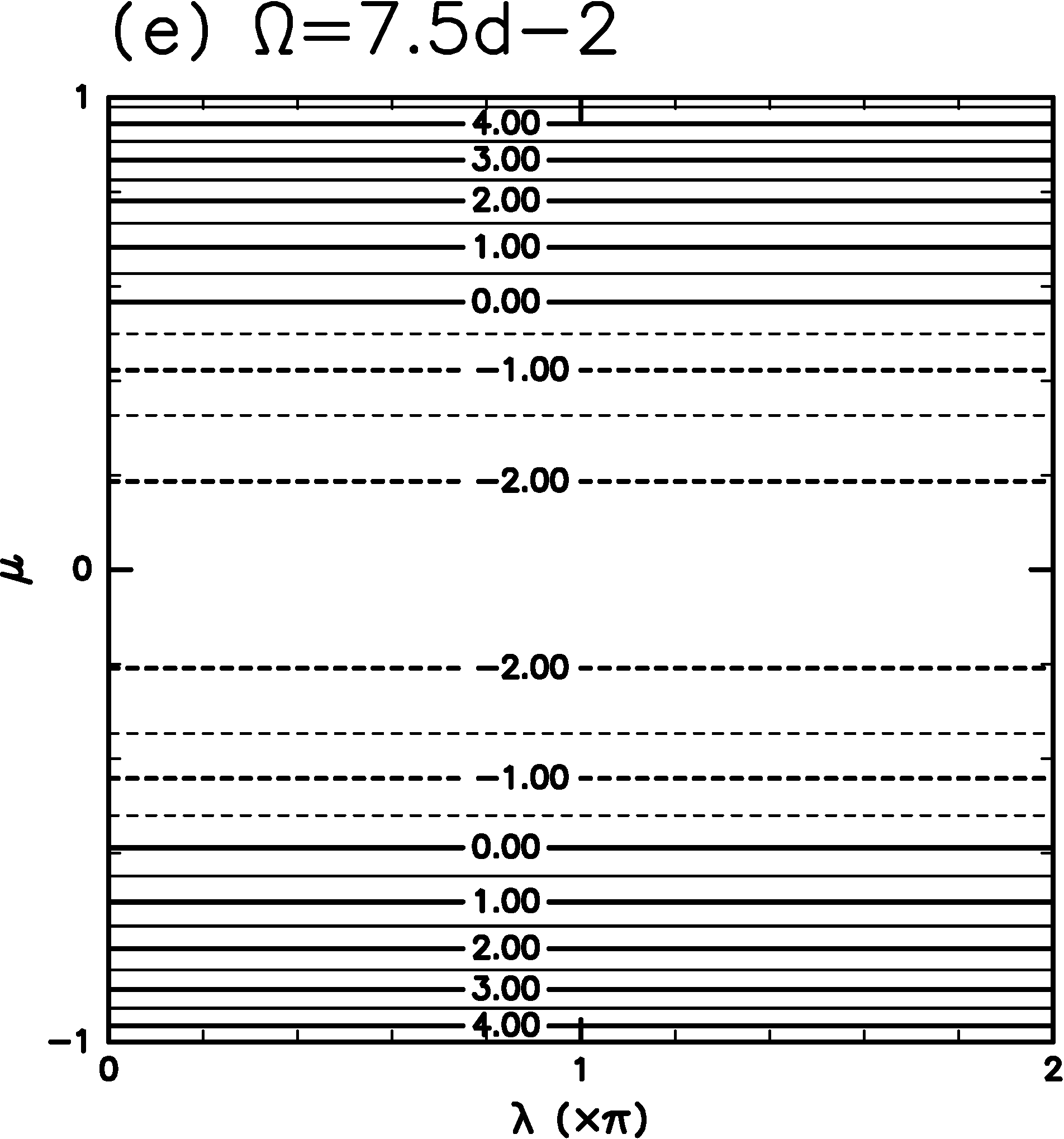}
            \centering
        \end{minipage}

    \end{tabular}
    \caption{The same as figure \ref{equilibria_a-d} but for zonal statistical equilibria only for \(\Omega=0, 1.0\times 10^{-2}, 2.5\times 10^{-2}, 5.0\times 10^{-2}\), and \(7.5\times 10^{-2}\).}
    \label{zonal_equilibria}
\end{figure}

\section{Discussion and Conclusions}\label{section_discussion}
In the present study, we computed the statistical equilibria defined by the original MRS theory (\citealp{miller1990statistical}, \citealp{robert1991statistical}) for the initial vorticity field of \cite{dritschel2015late}, as well as that with a planetary vorticity term, using the calculation method of \cite{ryono2022new}. To compute statistical equilibria for an initial vorticity field given in continuous form, we must approximate the initial vorticity field by a finite set of vorticity patches. This was not a problem when computing the statistical equilibrium in \cite{ryono2022new}, even though their initial vorticity field was continuous. This is because their initial vorticity field was zonally symmetric and the approximation was straightforward. In the present study, we proposed a new method of approximation which can be applied to non-zonal initial vorticity fields. In the new method, for an integer \(K\), which is the number of vorticity patches, the areas \(S_1, \cdots, S_K\) and the vorticity values \(Q_1, \cdots, Q_K\) are determined so that the point \(Z_{\rm ini}\) corresponding to the grid values of the initial vorticity \(q_{ij}\) belongs to the interior of the feasible region. In the computation of \cite{ryono2022new}, a geometric technique with a complex procedure was needed to generate initial search points in the interior of the feasible region. However, by using the new method introduced in this study, we can use the point \(Z_{\rm ini}\) immediately as the initial search point, which greatly simplifies the procedure of computing statistical equilibria. 

For the case of the initial vorticity field of \cite{dritschel2015late} (\(\Omega=0\)), we obtained a quadrupole state with two positive and two negative vortices as the statistical equilibrium. The quadrupole structure was in agreement with the resulting vorticity field of time integration, but the vorticity value in the core of each vortex was much smaller than that of the time integration result. There was also a large difference in the spatial distribution of the vorticity. Each vortex in the statistical equilibrium state had a broad vorticity distribution, while concentrated vortices appeared as a result of the time integration. As already pointed out by \cite{dritschel2015late} and \cite{modin2020casimir}, the resulting vorticity field of time integration was neither a stationary state nor a solid-rotating state, unlike the prediction of the MRS theory on the sphere.
\cite{qi2014hyperviscosity} pointed out the possibility that the conservation of the Casimir invariants of higher order could explain the reason for the appearance of the concentrated coherent vortices in the time evolution. They deduced this conjecture by comparing the MRS--2 theory, in which the Casimir invariants are considered up to the second order, and the perturbative MRS--4 theory, in which the Casimir invariants are considered up to the fourth order (but partially). {That is, in the MRS--2 theory, the functional relationship between \(\overline{q}\) and \(\overline{\psi}+\Omega_1 \mu\) becomes linear, whereas in the perturbative MRS--4 theory a cubic functional relation, which can be treated as a polynomial approximation of sinh-type relation, arises. The coefficient of the cubic term is determined by the Lagrange multiplier with respect to the Casimir invariant of fourth-order. Therefore, to obtain a statistical equilibrium with sinh-type \(\overline{q}\)--\(\overline{\psi}\) relation, which represents pairs of concentrated vortices, the Casimir invariants of at least fourth order should be taken into account.} However, the present study showed that concentrated vortices, as they appear in the time evolution, cannot be obtained as a statistical equilibrium, even if the conservations of all Casimir invariants are considered. The difficulty of predicting the emergence of concentrated coherent vortices from the random initial vorticity field by the MRS theory, even if we take the conservations of the higher order Casimir invariants into account, is also evident by looking at the \(\overline{q}\)--\(\overline{\psi}\) relation (figure \ref{qpsirel_a-d}). The obtained \(\overline{q}\)--\(\overline{\psi}\) relation was almost linear and we could not see a bending of the \(\overline{q}\)--\(\overline{\psi}\) curve which was predicted by the perturbative MRS--4 theory. This means that the conservation of fourth-order or higher-order Casimir invariants has virtually no effect on the determination of statistical equilibrium solutions for random initial vorticity fields such as those considered here. {Therefore, the final states of \cite{dritschel2015late} and \cite{modin2020casimir} are not even perturbed states of some statistical equilbria, even if they show quasi-periodic states and two-branched sinh-type relations between \(q\) and \(\psi\).} Note, however, that the almost linear \(\overline{q}\)--\(\overline{\psi}\) relations obtained in the present study, do not necessarily lead to the conclusion that higher-order Casimir invariants are generally unimportant for the computation of statistical equilibria. For example, \cite{ryono2022new} obtained statistical equilibria with nonlinear \(\overline{q}\)--\(\overline{\psi}\) relations, implying the importance of higher-order Casimir invariants. Furthermore, without considering higher-order Casimir invariants, the statistical equilibrium solution for the \(\Omega=0\) case in the present study (figure \ref{equilibria_a-d}(a)), a quadrupole state with two positive vortices and relatively weaker negative vortices, cannot be uniquely determined. Here, ``uniquely'' means excluding the indeterminacy due to the \(O(3)\) symmetry. In simplified theories such as MRS--2 and perturbative MRS--4, which only consider low-order Casimir invariants, there is no necessity for such asymmetric equilibrium solutions to be chosen. {We again note that the obtained statistical equilibria are not proved to be a global maximizer of the mixing entropy, however, the obtained ones are sufficient evidence that the vorticity fields can be much more mixed than the results of time integrations, as we discuss in the following.}

For the random initial vorticity field considered in the present study, there was a significant difference between the prediction by the MRS theory and the resulting vorticity field in the time evolution, in contrast to the case of \cite{ryono2022new}, but we believe that some insight into pattern formation from two-dimensional turbulence can be gained by considering the causes of this difference. The mathematical principle of \cite{robert1991statistical} states that if the time evolution of two-dimensional turbulence is ergodic, the vorticity field will evolve to the neighborhood of the statistical equilibrium. Although the initial vorticity field of \cite{dritschel2015late} differed from the type of checkerboard pattern introduced by \cite{segre1998late} and was generic in the sense that it was randomly generated \citep{modin2020casimir}, the time evolution from the initial vorticity field did not fully approach statistical equilibrium. This means the time evolution could not reach ergodicity. We believe that the structure of the coherent vortices appearing in the time evolution prevented the vorticity field from reaching a fully mixed state. \cite{dritschel2015late} pointed out that a stepped vorticity distribution (staircase structure) can be observed in the cross-section through the center of the coherent vortices, where the sharp vorticity gradient and the nearly constant vorticity are aligned alternately in the direction of the radius. The almost constant regions of vorticity are produced by mixing of the vorticity field, but further mixing is prevented by the sharp vorticity gradient, which acts as a ``mixing barrier''. We therefore consider that the concentrated coherent vortices appearing in the time evolution survive without relaxation to the statistical equilibrium by the effect of the mixing barrier. In such situations, the point vortex approximation \citep{dritschel2015late} becomes effective because the few coherent vortices that survive after a long time evolution behave like point vortices, and whether the motion of these point vortices is chaotic or integrable determines the subsequent mixing of the vorticity field \citep{modin2020casimir}. 
The fact that the point vortex approximation works well in the time evolution results may be related to the difference that for non-zero but small values of \(\Omega\), two positive vortices were fixed at both poles in the quadrupole state of the statistical equilibrium solution, but not in the time evolution results. This is because when \(\Omega\) is small, if the point vortex approximation holds, the motion of the four point vortices is close to an integrable system and there is no necessity for the vortices to have fixed positions, as they will each continue to move along their quasi-periodic trajectories without triggering chaotic mixing.  

The calculation of statistical equilibrium for initial vorticity fields with non-zero planetary vorticity in this study also revealed {the change of} the type of the statistical equilibrium from a quadrupole state to a fully zonal state via a wavenumber one state as the value of \(\Omega\) increases. There have been several previous studies on {such changes} of statistical equilibria (\citealp{sommeria1991final}, \citealp{chavanis1996classification}, \citealp{venaille2011solvable}), but for cases with two or three initial vorticity patches, or cases based on simplified MRS theories. However, to the best of the authors' knowledge, the present study is the first to investigate the {type transition} of statistical equilibria based on the original MRS theory for a generic initial vorticity field. 

To trace the branch of zonal equilibria, we have {performed the search restricted to zonal equilibria} (figure \ref{zonal_equilibria}), and it turned out that for each value of \(\Omega\) there is only a small difference between the values of \(S_{\rm mix}\) for the zonal equilibrium and the non-zonal equilibrium (table \ref{tab:entropy_values}). 
For the case of \(\Omega=0\), the non-zonal equilibrium has the value of \(S_{\rm mix}=2.949253274176\) and the zonal equilibrium has the value of \(S_{\rm mix}=2.949248609918\). If we compute the value \(S_{\rm mix}\) for the grid value \((q_{ij})\) of the initial vorticity field \(q_{\rm ini}\), we get \(S_{\rm mix}=0.504779627054\), which is non-zero because, as described in section \ref{section_method}, the initial vorticity field is approximated by a smaller number of patch levels than the number of grids. Although the two equilibria are very different in appearance (figure \ref{equilibria_a-d}(a) and \ref{zonal_equilibria}(a)), the difference between the values of \(S_{\rm mix}\) for them is very small compared to the difference between the values of \(S_{\rm mix}\) for the initial vorticity field and each of the two equilibria. {However, the small difference in mixing entropy can still lead to a large difference in the corresponding microscopic states. In the MRS theory, the large deviation principle stands (\citealp{robert1991maximum}, \citealp{michel1994large}) with an entropy functional \(\mathcal{K}=S_{\rm mix}+{\rm (const.)}\). To set up the discussion with the large deviation theory, we construct microscopic vorticity field \(q\) by partitioning the domain into \(n\) subdomains of equal area and putting a constant vorticity patch independently on each subdomain, with the probability distribution of the initial vorticity field (which is, in our setting, \(\sum S_{k}\delta_{Q_k}/(4\pi)\)). We then consider the corresponding macroscopic state given by the Young measure naturally obtained from \(q\). Under the above preparation, we can relate the difference in the mixing entropy to the difference in the number of microscopic states. That is, for any \(\alpha \in (0,1)\), any neighborhood \(W\) of a macroscopic state with \(S_{\rm mix}=S_0\) and any closed subset \(C\) in which the entropy fulfills \(S_{\rm mix}\leq S_{0}-\Delta S\), the probability of getting a macroscopic state in \(W\) is much larger than that of getting one in a neighborhood (which depends on \(\alpha\)) of \(C\), and the ratio is larger than \( \exp (n\alpha\Delta S) \) for sufficiently large \(n\). For details, see  \ref{sec_appendix_ldp}. Consider the non-zonal equilibrium and the zonal one for \(\Omega=0.0\) case, the difference in entropy value between two state{s} is about \(\Delta S = 5\times 10^{-6}\). {For example, if} we set \(n=2048\times 1024\approx 2\times 10^6\), which equals to the number of numerical grids we used for the time integration, and \(\alpha=0.9\), then \(\exp (n\alpha\Delta S) \approx \exp (9) \approx 8\times 10^3\). {Note that, the value of \(\alpha\in (0,1)\) in Proposition \ref{prop_comparestates} can be taken arbitrarily close to one, but the neighborhood \(V\) which is guaranteed to exist by the proposition depends on \(\alpha\). Of course, in our time integration, the high wavenumber components of the vorticity field are affected by the artificial viscosity. However, if we perform a time integration with many more grids and with much smaller viscosity or if we use the Casimir-preserving scheme of \cite{modin2020casimir} with higher resolution, we will be able to observe a turbulence of smaller scale. Thus, no matter how large we choose the value of \(n\), \(n\) can never be excessive to express a fully evolved turbulence as a microscopic field by the procedure of \ref{sec_app_ldp_micro}.} Therefore, the probability of getting a macroscopic state in a fixed neighborhood of a macroscopic state with \(S_{\rm mix}=S_0\) is overwhelmingly larger than the probability of getting a state such that \(S_{\rm mix} \leq S_0-\Delta S\).} 

{The above discussion is about comparing the mixing entropy values of two points, a statistical equilibrium and one of the other macroscopic states. One additional remark on the values of the mixing entropy is that, in the process of computing the optimization to find the non-zonal equilibrium before convergence, the appearance of the macroscopic vorticity field changed as the iterations progressed, but the increase in the value of \(S_{\rm mix}\) per iteration was very small (not shown). This implies that there are many macroscopic states which have as high entropy value as the entropy maximizing state has. Nevertheless, the priority of the statistical equilibrium is again robust, since the concentration theorem guarantees the following: for any neighborhood of the statistical equilibria, the probability of getting a macroscopic state outside the neighborhood is less than \(\exp(-\alpha' n)\), where \(n\) is the number of domain partition, which is sufficiently large, and \(\alpha'>0\) is a constant depending on the choice of the neighborhood.} The discussion{s in the previous and present paragraphs} assume the ergodicity of the system. Therefore, if the ergodicity holds, the statistical equilibrium is overwhelmingly more likely to emerge than the other states. Conversely, we can conclude that the time evolutions in the present manuscript are not ergodic.

As we have seen above, the present study has explicitly calculated the statistical equilibria for generic initial vorticity fields on a sphere, and shown how the statistical equilibria and the time integration results differ.   
The discrepancy found in the present study may again cast doubt on the predictive ability of the MRS theory, as did \cite{dritschel2015late} and \cite{modin2020casimir}. However, we believe that the results obtained in the present study provide a new perspective on the pattern formation in two-dimensional turbulence, emphasizing the formation of a mixing barrier that prevents the relaxation of the vorticity fields to the entropy maximizing state. It should also be noted that the MRS theory is not completely ineffective, recalling that the initial vorticity field treated by \cite{ryono2022new}, which was bartropically unstable, evolved to the final state in good agreement with the statistical equilibrium. In this case, the time evolution of the vorticity field may have been close to ergodicity due to the strong global mixing caused by the barotropic instability. It will be our future task to investigate in detail the differences between initial vorticity fields for which the predictions of the MRS theory are in good agreement with the time integration results, and those for which they are not. Such a study will provide new insights into the phenomena of mixing and pattern formation of vorticity fields in two-dimensional turbulence.

\appendix

\section{Proof of the inequality \eqref{casimir_error}}\label{sec_appendix_casimir_error}
For grid values \((q_{ij})\) of the initial vorticity field \(q_{\rm ini}\) and a Lipschitz function \(f:[q_{\rm min}, q_{\rm max}] \longrightarrow \mathbf{R}\) (where \(q_{\rm min}={\rm min}\, q_{ij}, q_{\rm max} = {\rm max}\, q_{ij}\)), the Casimir invariant for \(f\), calculated using the grid values \((q_{ij})\) is written as:
\begin{align*}
    C_{f,\rm grid} = \frac{1}{2}\sum_{i,j} w_j f(q_{ij}).
\end{align*}
Note that, this value already has an error of the Casimir invariant \(C_f\) from the true value, which is defined by the integral, because of the discretization. However, we regard \(C_{f, \rm grid}\) as the best available estimate of \(C_f\) value that we can know. On the other hand, the value of Casimir invariant for \(f\) calculated by using the determined set of vorticity patches is written as:
\begin{align*}
    C_{f,\rm patch} = \frac{1}{4\pi} \sum_{k=1}^K S_k f(Q_k) = \frac{1}{2}\sum_{i,j,k} w_j f(Q_k) \tilde{r}_{ijk}.
\end{align*}

For each \((i,j)\), let \(k_{ij}\) be the unique integer \(k\) which fulfills \(Q_k < q_{ij} \leq Q_{k+1}\). By using the definition of \(\tilde{r}_{ijk}\), we obtain:
\begin{align*}
    C_{f,\rm grid} &= \frac{1}{2}\sum_{i,j}\sum_{k=1}^K w_j f(q_{ij}) \tilde{r}_{ijk}\\
    &= \frac{1}{2}\sum_{i,j} \left\{(K-2)\varepsilon f(q_{ij})+w_j f(q_{ij}) (\tilde{r}_{ij,k_{ij}}+\tilde{r}_{ij,k_{ij}+1})\right\}, \\
    C_{f,\rm patch} &= \frac{1}{2} \sum_{i,j} \left[\sum_{k\neq k_{ij},k_{ij}+1} \varepsilon f(Q_k) + w_j \left\{f(Q_{k_{ij}}) \tilde{r}_{ij,k_{ij}}+f(Q_{k_{ij}+1})\tilde{r}_{ij,k_{ij}+1}\right\}\right].
\end{align*}
Therefore, we get
\begin{align*}
    |C_{f,\rm grid}-C_{f,\rm patch}|&\leq  \frac{1}{2} \sum_{i,j} \sum_{k\neq k_{ij},k_{ij}+1} \varepsilon |f(q_{i,j})-f(Q_{k_{ij}})| \\
    &\qquad +\frac{1}{2}\sum_{i,j}\Big\{ w_j |f(q_{ij})-f(Q_{k_{ij}})| \tilde{r}_{ij,k_{ij}}\\
    &\qquad\qquad+ w_j |f(q_{ij})-f(Q_{k_{ij}+1})| \tilde{r}_{ij,k_{ij}+1}\Big\}\\
    &\leq  \frac{1}{2} \sum_{i,j} \sum_{k\neq k_{ij},k_{ij}+1} \varepsilon\cdot 2M \\
    &\qquad + \frac{1}{2}\sum_{i,j} \Big\{w_j c\,|q_{ij}-Q_{k_{ij}}| \tilde{r}_{ij,k_{ij}} \\
    &\qquad\qquad+ w_j c\,|q_{ij} - Q_{k_{ij}+1} | \tilde{r}_{ij,k_{ij}+1}\Big\}\\
    &\leq  IJ(K-2)M\varepsilon  + \frac{1}{2}c(\Delta Q +\alpha) \sum_{i,j} w_j (\tilde{r}_{ij,k_{ij}}+  \tilde{r}_{ij,k_{ij}+1} )\\
    &\leq  IJ(K-2)M\varepsilon + \frac{1}{2}c (\Delta Q +\alpha) \sum_{i,j} w_j \\
    &= IJ(K-2) M\varepsilon + c(\Delta Q+\alpha).
\end{align*}
Note that, we used the following inequality:
\begin{align*}
|f(Q_k)-f(q_{ij})|\leq c\, |Q_k - q_{ij}| \leq c(\Delta Q+\alpha)\qquad (k=k_{ij}, k_{ij}+1).
\end{align*}
The inequality \eqref{casimir_error} has been proved.  

\section{ Determination of \(\Omega_1\) value for statistical equilibria with zonal symmetry}\label{sec_appendix_omega1}

{From a general discussion on the MRS theory, there is a parameter \(\Omega_1\) and a functional relation between the macroscopic vorticity \(\overline{q}\) in equilibrium and the corresponding stream function \(\overline{\psi}\), that is, \(\overline{q} = f(\overline{\psi }+ \Omega_1 \mu)\). We here describe a method for determining \(\Omega_1\) from a finite number of grid data of \(\overline{q}\) and \(\overline{\psi}\) is given, especially when \(\overline{q}\) is zonally symmetric.}

{Suppose \((\overline{q}_j, \overline{\psi}_j)=(\overline{q}(0, \mu_j),\overline{\psi}(0,\mu_j))_{j=1,\cdots,J}\), grid values of \(\overline{q}\) and \(\overline{\psi}\) are given. We assume \(\overline{q}_j \neq \overline{q}_k\) if \(k\neq j\). First, recall that \(f\) is a smooth function and a monotone function, and therefore \(f\) has a single-valued inverse function \(g\). However, if we take a wrong value for \(\Omega_1\), the plot of \(\overline{q}\) versus \(\overline{\psi}+ \Omega_1 \mu\) looks like a graph of a multi-valued function. To exclude such cases, we consider a function of \(\Omega_1\) defined as follows:}

\begin{align*}
    F(\Omega_1) = \sum_{k=1}^{J-1} \frac{[\overline{\psi}_{j(k+1)}+\Omega_1 \mu_{j(k+1)}-(\overline{\psi}_{j(k)}+\Omega_1 \mu_{j(k)})]^2}{\overline{q}_{j(k+1)}-\overline{q}_{j(k)}},
\end{align*}

\noindent
{where the indices \(j(k)\) are defined so that \(q_{j(1)}<q_{j(2)}<\cdots< q_{j(J)}\). If the plot is single-valued, the value of \(F(\Omega_1)\) is an approximation of the value of}

\begin{align*}
    \int_{\overline{q}_{\rm min}}^{\overline{q}_{\rm max}} \{g'(q)\}^2 dq.
\end{align*}

\noindent
{Otherwise, the value of \(F(\Omega_1)\) becomes very large because of the branched plot of \(\overline{\psi}+\Omega_1 \mu\) versus \(\overline{q}\). Therefore, we can determine the value of \(\Omega_1\) by minimizing \(F(\Omega_1)\).} {Note that, if there exist two point \((\lambda_{\rm a}, \mu_{\rm a})\) and \((\lambda_{\rm b}, \mu_{\rm b})\) such that \(\mu_{\rm a}\neq \mu_{\rm b}\) and \(\overline{q}(\lambda_{\rm a},\mu_{\rm a}) = \overline{q}(\lambda_{\rm b}, \mu_{\rm b})\), then \(\Omega_1\) such that \(\overline{q}\) becomes a monotone function of \(\overline{\psi}+\Omega_1\mu\) is unique. Indeed, if such a functional relationship exists, \(\overline{\psi}+\Omega_1 \mu\) is determined by the value of \(\overline{q}\). Therefore, by using \(g(\overline{q}(\lambda_{\rm a}, \mu_{\rm a}))= \overline{\psi}_{\rm a} + \Omega_1\mu_{\rm a} = \overline{\psi}_{\rm b} + \Omega_1\mu_{\rm b} = g(\overline{q}(\lambda_{\rm b}, \mu_{\rm b}))\), we obtain \((\mu_{\rm a}-\mu_{\rm b})\Omega_1=-(\overline{\psi}_{\rm a}-\overline{\psi}_{\rm b})\).}

\section{On the large deviation principle in the MRS theory}\label{sec_appendix_ldp}

In the following, we consider an initial vorticity field \(q_0 \in L^\infty (S)\) and let \(M = ||q_0||_{\infty} = \sup_{x\in S} |q_0(x)|\).

Let \(q \in L^\infty (S)\), then it defines a linear functional \(\pi_q\) on \(C(S)\) by
\begin{align*}
    \langle \pi_q, f \rangle = \frac{1}{4\pi}\int_S f(q(x)) dx.
\end{align*}
Note that, \(\pi_q\) is non-negative (if \(f\geq 0\) then \(\langle \pi_q, f\rangle \geq 0\)) and continuous on \(C(S)\), and therefore defines a probability measure \(\mu_q\) such that
\begin{align*}
    \langle \pi_q, f\rangle = \int_\mathbf{R} f(y) d\mu_q (y),
\end{align*}
by virtue of Riesz's representation theorem. We denote the probability measure as \(\pi_q\) rather than \(\mu_q\) by abuse of notation.

\subsection{Macroscopic fields}
In the context of the large deviation theory, the macroscopic vorticity field is a Young measure on \(S\times [-M, M]\). Here, a Young measure \(\nu\) is a measure on \(S\times [-M,M]\) such that, for each \(x\in S\) a probability measure \(\nu_x\) on \([-M,M]\) corresponds, and for a function \(f\in C(S\times [-M,M])\) 
\begin{align*}
    \int_{S\times[-M,M]} f(x,y) d\nu(x,y) = \int_S \langle \nu_x, f(x,\cdot) \rangle dx = \int_S \int_{-M}^{M} f(x,y) d\nu_x (y) dx
\end{align*}
stands. Note that in the following the letter \(\nu\) is used for Young measures, rather than for viscosity coefficients as in Section \ref{subsection_timeevo}. {We also note that we use the term ``macroscopic vorticity field" for Young measures here, rather than for the scalar field \(\overline{q}\) as in the main part of the present manuscript, to emphasize the theoretical correspondence with the microscopic vorticity field, which will be defined in the next subsection.} In the following, the set of Young measures \(Y\) is endowed with the narrow topology (which is defined by weak convergence about continuous bounded test functions on \(S\times[-M, M]\)). Note that we are considering a compact flow domain \(S\), and therefore the weak star topology (defined by weak convergence about compactly supported continuous test functions) and the narrow topology coincide. Furthermore, since \(C(S\times[-M,M])\) is a separable Banach space, \(Y\) is metrizable. By using Banach-Alaoglu's theorem, \(Y\) is also compact.

Macroscopic vorticity field defined above is not given as a function, but as a field of probability measures. Indeed, it gives a mathematical expression of coarse graining process of a scalar field on \(S\). For a given \(q\in L^\infty (S), ||q||_\infty \leq M\), we associate a Young measure \(\delta_q \in Y\) defined by \((\delta_q)_x = \delta_{q(x)}\) with it. Here the right hand side means the Dirac mass at \(q(x)\).

In the following, we assume that the initial field \(q_0\) consists of a finite number of constant vorticity patches, as we considered so far, then \(\pi_{q_0}\) can be expressed as:
\begin{align*}
    \pi_{q_0} = \frac{1}{4\pi}\sum_{k=1}^K S_k \delta_{Q_k}.
\end{align*}
If a macroscopic field \(\nu\) is absolute continuous with respect to \(dx\otimes\pi_{q_0}\), then for a.e. \(x\in S\), \(\nu_x\) is absolute continuous with respect to {\(\pi_{q_0}\)}. The Radon-Nikodym derivative \(d\nu_x/d\pi_{q_0}\) is defined at a.e. \(x\in S\) and it can be expressed as:
\begin{align}
    \frac{d\nu_x}{d\pi_{q_0}} (y) = \left\{
    \begin{array}{cc}
     \frac{4\pi r_k(x)}{S_k}    &  ({\rm for}\, y=Q_k)\\
     0    &  ({\rm else})
    \end{array}
    \right. \label{dnu_dpi}
\end{align}
where \(r_k \in L^1 (S)\). By definition, \(r_k \geq 0, \sum r_k = 1\) a.e. and \(\int_S r_k(x) dx = S_k\). Note that they correspond to the macroscopic states that we defined in Section \ref{section_theory}.  

\subsection{Microscopic fields}\label{sec_app_ldp_micro}
We define the microscopic vorticity field as a random variable that takes its value in \(L^\infty (S)\), defined by the following procedure.

First, we partition the sphere into \(n\) subdomains with equal areas \(D_1, \cdots, D_n\), so that the maximal diameter of the subdomains tends to zero when \(n\to \infty\). Let \(y_1,\cdots, y_n\in [-M, M]\) be \(n\) random variables with independent identical distributions \(\pi_{q_0}\). For \(y_1,\cdots, y_n\), define a simple function \(\mathcal{Q}(\cdot\,; y_1,\cdots,y_n) : S\longrightarrow [-M,M]\) by \(\mathcal{Q}(x; y_1,\cdots, y_n) = y_k\) for \(x\in D_k\). We call each \(\mathcal{Q}(\cdot\,; y_1,\cdots,y_n)\) a microscopic vorticity field in the sense of large deviation theory. 

As explained in the above subsection, we can correspond a macroscopic vorticity field \(\delta_{\mathcal{Q}(\cdot\,; y_1,\cdots,y_n)}\) to a microscopic vorticity field \(\mathcal{Q}(\cdot\,; y_1,\cdots,y_n)\). We denote the probability of  \(\mathcal{Q}(\cdot\,; y_1,\cdots,y_n)\) being in a Borel subset \(A\subset Y\) of macroscopic fields by \({\rm Prob}(\mathcal{Q}\in A)\).

\subsection{Large deviation principle}
\cite{michel1994large} proved that the large deviation principle stands in the MRS theory.
\begin{thm}
For any Borel subset \(A\subset Y\), the following inequality holds:
\begin{align*}
\sup_{\nu \in A^\circ} \mathcal{K}(\nu) &\leq \liminf_{n\to \infty} \frac{1}{n}\log {\rm Prob}(\mathcal{Q}\in A)\\ &
\leq \limsup_{n\to \infty} \frac{1}{n}\log {\rm Prob}(\mathcal{Q}\in A) \leq \sup_{\nu\in \overline{A}} \mathcal{K}(\nu),
\end{align*}
where \(\mathcal{K}\) is the entropy functional defined as
\begin{align*}
    \mathcal{K}(\nu) = -\frac{1}{4\pi} \int_S \sum_{k=1}^K r_k(x)\log r_k(x) dx + \frac{1}{4\pi} \sum_{k=1}^K S_k \log \frac{S_k}{4\pi}
\end{align*}
if \(\nu\in Y\) is absolute continuous with respect to \(dx\otimes \pi_{q_0}\), and \(\mathcal{K}(\nu) = -\infty\) otherwise. Here, we denote the interior of \(A\) by \(A^\circ\) and the closure of \(A\) by \(\overline{A}\) .
\end{thm}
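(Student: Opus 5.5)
This theorem is the large deviation principle of \cite{michel1994large}, and I would prove it by Sanov's theorem for empirical measures followed by the contraction principle. Each microscopic field $\mathcal{Q}(\cdot\,;y_1,\dots,y_n)$ is encoded by the empirical measure
\begin{align*}
L_n = \frac{1}{n}\sum_{l=1}^n \delta_{(\xi_l, y_l)}
\end{align*}
on $S\times[-M,M]$, where $\xi_l$ is a point of $D_l$. Since the $y_l$ are i.i.d. with law $\pi_{q_0}$ and the sites are deterministic with equal areas, the natural tool is a Sanov theorem for non-identically indexed samples, or equivalently a G\"artner--Ellis argument in the narrow topology on $Y$.

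\textbf{Step 1 (reduction to a log-moment computation).} For $f\in C(S\times[-M,M])$ I would compute
\begin{align*}
\Lambda_n(f) = \frac{1}{n}\log \mathbf{E}\exp\Big( n\int f\, d\delta_{\mathcal{Q}}/(4\pi)\Big) = \frac{1}{n}\sum_{l=1}^n \log\int e^{\frac{n}{4\pi}\int_{D_l} f(x,y)dx}\, d\pi_{q_0}(y).
\end{align*}
Because $|D_l|=4\pi/n$ and the diameters of the $D_l$ tend to zero, uniform continuity of $f$ gives the Riemann-sum limit
\begin{align*}
\Lambda(f) = \frac{1}{4\pi}\int_S \log\int_{[-M,M]} e^{f(x,y)}\, d\pi_{q_0}(y)\,dx .
\end{align*}
This limit is finite and G\^ateaux differentiable in $f$.

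\textbf{Step 2 (identification of the rate function).} The Legendre transform $\Lambda^*(\nu)=\sup_f\{\int f\,d\nu/(4\pi)-\Lambda(f)\}$ should be computed pointwise in $x$ using the Donsker--Varadhan variational formula for relative entropy. This gives
\begin{align*}
\Lambda^*(\nu)=\frac{1}{4\pi}\int_S H(\nu_x\,|\,\pi_{q_0})\,dx
\end{align*}
when $\nu\ll dx\otimes\pi_{q_0}$, and $+\infty$ otherwise. Substituting the density \eqref{dnu_dpi} yields
\begin{align*}
H(\nu_x|\pi_{q_0})=\sum_k r_k\log\frac{4\pi r_k}{S_k}.
\end{align*}
Integrating and using $\int_S r_k\,dx=S_k$ gives exactly $-\mathcal{K}(\nu)$.

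\textbf{Step 3 (upper and lower bounds).} The upper bound for compact sets follows from the exponential Chebyshev inequality together with a finite covering argument. Since $Y$ is compact and metrizable, the bound for closed sets follows at once, and hence for $\overline{A}$. The lower bound on $A^\circ$ would be obtained by exponential tilting: for $\nu$ with $\mathcal{K}(\nu)>-\infty$, first approximate $\nu$ by measures whose densities are bounded and continuous in $x$. The tilted product measure then satisfies a law of large numbers concentrating on $\nu$, and its relative-entropy cost per site converges to $-\mathcal{K}(\nu)$.

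The main obstacle is this lower bound, in particular the approximation step. One must show that $\mathcal{K}$ is upper semicontinuous along the smoothing of $r_k(x)$, for example by averaging over the partition cells. It also requires that the exposed points of $\Lambda^*$ be dense in the sense needed for G\"artner--Ellis. The cleanest way I see is to use convexity of $r\mapsto r\log r$ together with Jensen's inequality on the cells $D_l$. This shows that the piecewise-constant averages of $r_k$ have entropy at least that of $r_k$ and converge to $\nu$ narrowly. For each such piecewise-constant density, the lower bound is then the finite-dimensional Sanov theorem applied cell by cell.
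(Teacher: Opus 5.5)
The paper gives no proof of this theorem; it quotes it from Michel and Robert. Your route (limit of the scaled log-moment functional, Legendre transform via Donsker--Varadhan, then bounds on the compact metrizable $Y$) is the standard Sanov/G\"artner--Ellis way to obtain it. Step~1 is correct, and so is the computation $H(\nu_x|\pi_{q_0})=\sum_k r_k\log(4\pi r_k/S_k)$.

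The step that fails is the last sentence of Step~2. The identity $\int_S r_k\,dx=S_k$ does not follow from $\nu\ll dx\otimes\pi_{q_0}$. The paper asserts it ``by definition'', but it is exactly the Casimir constraint $\frac{1}{4\pi}\int_S\nu_x\,dx=\pi_{q_0}$. For example, $\nu_x\equiv\delta_{Q_1}$ is absolutely continuous yet has $\int_S r_1\,dx=4\pi$. For general $\nu$, what you have actually computed is
\[
I(\nu)=\frac{1}{4\pi}\int_S\sum_{k=1}^K r_k\log\frac{4\pi r_k}{S_k}\,dx=-\mathcal{K}(\nu)+\frac{1}{4\pi}\sum_{k=1}^K\Big(S_k-\int_S r_k\,dx\Big)\log\frac{S_k}{4\pi}.
\]
This is the relative entropy $H(\nu\,|\,\frac{dx}{4\pi}\otimes\pi_{q_0})$, and it equals $-\mathcal{K}(\nu)$ only on the Casimir set.

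No argument can close this step, because with the literal formula for $\mathcal{K}$ the theorem is false. Take $A=Y$: then $\frac{1}{n}\log{\rm Prob}(\mathcal{Q}\in A)=0$ for all $n$. But $r_k\equiv 1/K$ gives $\mathcal{K}=\log K+\sum_k p_k\log p_k>0$, with $p_k=S_k/(4\pi)$, as soon as the areas are not all equal. This also contradicts the later remark $\mathcal{K}\le 0$, which holds only for $-I$.

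What your argument correctly proves is the large deviation principle with $\mathcal{K}$ replaced by $-I$. State it that way, and note that $-I=\mathcal{K}$ wherever $\frac{1}{4\pi}\int_S\nu_x\,dx=\pi_{q_0}$. Since $-I$ is still concave and upper semicontinuous, the subsequent proposition and concentration theorem go through with $-I$ in place of $\mathcal{K}$. Their reference states $\nu_0$ lie on the Casimir set, so the entropy values there are unchanged.

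A smaller point on Step~3: the lower-bound construction you call the main obstacle can be bypassed. You have already shown that $\Lambda$ is finite and G\^ateaux differentiable on all of $C(S\times[-M,M])$, and $Y$ is compact, so exponential tightness is automatic. The infinite-dimensional G\"artner--Ellis (Baldi) theorem then gives both bounds with rate $\Lambda^*$ at once; G\^ateaux differentiability is exactly what makes the exposed points of $\Lambda^*$ suffice.

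If you keep the tilting route instead, be careful. Prescribing only the histogram of the $y_l$ inside each cell where the smoothed $\bar\nu_x$ is constant does not put $\delta_{\mathcal{Q}}$ in a narrow neighbourhood of $\bar\nu$, because the values can be spatially segregated within the cell. You can fix this in either of two ways:
\begin{itemize}
\item subdivide into cells that are small relative to the modulus of continuity of the finitely many test functions defining the neighbourhood (this does not change the entropy cost);
\item use the tilted product law with site-dependent marginals $\bar\nu_{\xi_l}$ together with a law of large numbers.
\end{itemize}
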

Note that, in fact \(\mathcal{K}\) is upper-semicontinuous, concave function on \(Y\), and \(\mathcal{K}\leq 0\). When \(\nu\) is absolute continuous with respect to \(dx\otimes \pi_{q_0}\), the entropy functional is the mixing entropy \(S_{\rm mix}\) plus an additional constant. 
  
\subsection{Concentration of microscopic fields}
The concentration theorem, which states that a great majority of microscopic states concentrate near the set of the maximizers of the mixing entropy, follows from the above theorem \citep{michel1994large, robert1991statistical}. We describe it here in a more elementary form than its original form. In the derivation, we obtain Proposition \ref{prop_comparestates}, which relates the difference in the entropy between two macroscopic states to the ratio of the numbers of the corresponding microscopic states to them. It is useful for quantitative discussion on computed statistical equilibria.

The conserved quantities of the Euler equation can be considered as functions on \(Y\): The energy \(E(\nu)\) is defined as:
\begin{align*}
    E(\nu) = -\frac{1}{4\pi}\int_S \overline{q}(x) (-\triangle)^{-1} \overline{q}(x) dx, 
\end{align*}
where 
\begin{align}
    \overline{q}(x) = \int_{-M}^M y d\nu_x (y) - \frac{1}{4\pi} \int_S \int_{-M}^M y d\nu_x (y) dx\label{macroscopic_minus_mean}
\end{align}
is the macroscopic vorticity minus its mean value over \(S\). The angular momentum is \(M_i (\nu) = (4\pi)^{-1}\int_S L_i(x) \overline{q}(x) dx\, (i=1,2,3)\), where \(L_1(x)=\mu, L_2(x)=\sqrt{1-\mu^2}\cos\lambda\), and \(L_3(x)= \sqrt{1-\mu^2}\sin\lambda\). The energy and the three components of angular momentum are all continuous on \(Y\). To show the continuity of the energy, the spherical harmonics expansion is useful. We define
\begin{align*}
    A_{m,n}(\nu)=\frac{1}{4\pi}\int_S \overline{q}(x) Y_{m,n}(x)^* dx\qquad (m=-n, -n+1, \cdots, n;\,n=0,1,\cdots)
\end{align*}
for \(\nu\in Y\), where \(\overline{q}\in L^2(S)\) represents a function defined by \eqref{macroscopic_minus_mean}.
We consider \(\nu_0\in Y\) and the corresponding \(\overline{q_0}\), and let \(\varepsilon\) be an arbitrary positive number. Since \(-M\leq \overline{q_0}(x) = \int_{-M}^M y (\nu_0)_x(dy)\leq M\), we have \(\sum_{m,n}|A_{m,n}(\nu_0)|^2= \|\overline{q_0}\|^2_{L^2}\leq M^2\), and we can take a large integer \(N>0\) such that \(M^2/(N(N+1))<\varepsilon\) and 
\begin{align*}
    \sum_{n>N}\sum_{m=-n}^n |A_{m,n}(\nu_0)|^2 < \varepsilon.
\end{align*}
Let \(U_{N,\varepsilon}(\nu_0)\) be an open neighborhood of \(\nu_0\in Y\) in the weak star topology:
\begin{align*}
    U_{N,\varepsilon}(\nu_0) = \left\{\nu\in Y \mid |A_{m,n}(\nu)-A_{m,n}(\nu_0)|<\frac{\varepsilon}{(N+1)^2},\,(|m|\leq n, n=1,\cdots,N) \right\}.
\end{align*}
Let \(\nu \in U_{N,\varepsilon}(\nu_0)\). By using \(|A_{m,n}(\nu)|\leq M\),
\begin{align*}
    \sum_{n=1}^N \sum_{m=-n}^n \left|\frac{|A_{m,n}(\nu)|^2-|A_{m,n}(\nu_0)|^2}{n(n+1)}\right| < M\varepsilon^2.
\end{align*}
By using \(\sum_{n>N}|A_{m,n}(\nu)|^2\leq \|q\|^2_{L^2}\leq M^2\),
\begin{align*}
    \sum_{n>N}\sum_{m=-n}^n \frac{|A_{m,n}(\nu)|^2}{n(n+1)} < \frac{M^2}{N(N+1)} < \varepsilon.
\end{align*}
Therefore,
\begin{align*}
    |E(\nu)-E(\nu_0)|=\left|\frac{1}{2}\sum_{n=1}^\infty \sum_{m=-n}^n \frac{|A_{m,n}(\nu)|^2}{n(n+1)}-\frac{1}{2}\sum_{n=1}^\infty \sum_{m=-n}^n \frac{|A_{m,n}(\nu_0)|^2}{n(n+1)}\right|\leq \frac{1}{2}M\varepsilon^2+\varepsilon.
\end{align*}
This completes the proof.

The conservation of Casimir invariants corresponds to the equation of Young measure: \(\frac{1}{4\pi}\int_S \nu_x dx = \pi_{q_0}\). We define a closed subset of \(Y\) by
\begin{align*}
    A &= \Bigg\{ \nu\in Y \bigg| |E(\nu)-E_0|\leq\varepsilon_E, \\
    &\qquad \qquad |M_i(\nu)-M_i^{\rm ini}|\leq\varepsilon_{M_i}\,(i=1,2,3), d\left(\frac{1}{4\pi}\int_S \nu_x dx, \pi_{q_0}\right)\leq\varepsilon_C \Bigg\},
\end{align*}
where \(\varepsilon_E, \varepsilon_{M_i}\,(i=1,2,3)\), and \(\varepsilon_C\) are small positive constants, and \(d(\cdot,\cdot)\) in the right-hand side is a distance of the space of probability measures on \([-M,M]\) with weak star topology (or, equivalently narrow topology). {The following proposition allows us to estimate how relatively rare a macroscopic state is compared to the other state by using the difference of the entropy value \(\Delta S\).}
\begin{prop}\label{prop_comparestates}
    Let \(\alpha \in (0,1)\) {and \(\Delta S > 0\)}. Let \(\nu_{0}\in Y\) be an element of 
    \begin{align*}
        A_0=\left\{\nu\in Y \mid E(\nu)=E_0, M_i(\nu)=M_i^{\rm ini}\,(i=1,2,3), \frac{1}{4\pi}\int_S \nu_x dx = \pi_{q_0}\right\},
    \end{align*}
    and assume that \(\mathcal{K}_0 = \mathcal{K}(\nu_0) > -\infty\){,} and let \(C\) be a closed subset included in
    \begin{align*}
        B = \{\nu\in Y\mid \mathcal{K}(\nu) \leq \mathcal{K}_0-\Delta S \}.
    \end{align*}
    Then, there exists an open neighborhood \(V\) of \(C\), such that for any open neighborhood \(W\) of \(\nu_{0}\), the following inequality holds for sufficiently large \(n\):
    \begin{align*}
        \frac{{\rm Prob}(\mathcal{Q}\in A\cap W)}{{\rm Prob}(\mathcal{Q}\in A\cap V)}\geq e^{n\alpha\Delta S}.
    \end{align*}
\end{prop}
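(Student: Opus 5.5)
The plan is to apply the large deviation theorem of \cite{michel1994large} stated above twice. The lower bound will be applied to the set \(A\cap W\) and the upper bound to the set \(A\cap V\). Both exponential rates are then compared with \(\mathcal{K}_0\). The slack \((1-\alpha)\Delta S\) is split equally between the two estimates: set \(\delta = (1-\alpha)\Delta S/2>0\).

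\emph{Step 1 (lower bound for \(A\cap W\)).} I first show that \(\nu_0\) lies in the interior of \(A\cap W\). Consider
\begin{align*}
A' = \Big\{\nu\in Y \,\Big|\, |E(\nu)-E_0|<\varepsilon_E,\ |M_i(\nu)-M_i^{\rm ini}|<\varepsilon_{M_i},\ d\Big(\tfrac{1}{4\pi}\textstyle\int_S \nu_x dx,\pi_{q_0}\Big)<\varepsilon_C\Big\}.
\end{align*}
This set is open, since \(E\) and \(M_i\) are continuous on \(Y\) as shown above. The map \(\nu\mapsto \frac{1}{4\pi}\int_S\nu_x dx\) is also narrowly continuous, because testing it against \(f\in C([-M,M])\) amounts to testing \(\nu\) against the continuous function \((x,y)\mapsto f(y)\). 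Moreover \(A'\subset A\), and \(\nu_0\in A'\) because \(\nu_0\in A_0\). Hence \(\nu_0\in A'\cap W\subset (A\cap W)^\circ\). The lower bound of the theorem then gives \(\liminf_n \frac1n\log{\rm Prob}(\mathcal{Q}\in A\cap W)\ge \mathcal{K}_0\). So for all large \(n\),
\[
{\rm Prob}(\mathcal{Q}\in A\cap W)\ge e^{n(\mathcal{K}_0-\delta)}.
\]

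\emph{Step 2 (construction of \(V\), independent of \(W\)).} Let \(U=\{\nu\in Y\mid \mathcal{K}(\nu)<\mathcal{K}_0-\Delta S+\delta\}\). This set is open because \(\mathcal{K}\) is upper semicontinuous; this also covers the points where \(\mathcal{K}=-\infty\). It contains \(B\supset C\). Since \(Y\) is compact and metrizable, \(C\) is compact and \(Y\) is normal. I therefore choose an open \(V\) with \(C\subset V\subset\overline{V}\subset U\), for instance \(V=\{\nu : {\rm dist}(\nu,C)<r\}\) with \(r\) less than half the distance from \(C\) to \(Y\setminus U\). This \(V\) depends only on \(C\), \(\alpha\) and \(\Delta S\), not on \(W\).

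\emph{Step 3 (upper bound and conclusion).} Since \(\overline{A\cap V}\subset\overline V\subset U\), the upper bound of the theorem gives \(\limsup_n\frac1n\log{\rm Prob}(\mathcal{Q}\in A\cap V)\le \mathcal{K}_0-\Delta S+\delta\). So for large \(n\),
\[
{\rm Prob}(\mathcal{Q}\in A\cap V)\le e^{n(\mathcal{K}_0-\Delta S+2\delta)}.
\]
If this probability is zero, the claimed inequality holds trivially, with the ratio read as \(+\infty\). Dividing the bound of Step 1 by the bound above yields the ratio \(e^{n(\Delta S-2\delta)}\), which by the choice of \(\delta\) equals the required \(e^{n\alpha\Delta S}\), for all \(n\) beyond the larger of the two thresholds.

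The main point needing care is Step 1, namely verifying that \(\nu_0\) is an interior point of \(A\) and not merely an element of it. This is exactly where the positive tolerances \(\varepsilon_E,\varepsilon_{M_i},\varepsilon_C\) and the continuity of the conserved quantities on \(Y\) are used. The second delicate point is obtaining a neighborhood \(V\) whose \emph{closure} still lies in the sublevel set \(U\), which follows from upper semicontinuity together with compactness.
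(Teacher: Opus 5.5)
Your route is the same as the paper's. You apply the lower bound of the large deviation theorem at \(\nu_0\in(A\cap W)^\circ\), build a neighborhood \(V\) of \(C\) kept away from a superlevel set of \(\mathcal{K}\) via upper semicontinuity and normality of \(Y\), and apply the upper bound on \(A\cap V\). The final division, however, does not give what you claim, because \(\delta\) is spent three times, not twice:
\begin{itemize}
\item once in Step 1, from \(\liminf\ge\mathcal{K}_0\) to \({\rm Prob}\ge e^{n(\mathcal{K}_0-\delta)}\);
\item once in the threshold defining \(U\);
\item once more in Step 3, from \(\limsup\le\mathcal{K}_0-\Delta S+\delta\) to \({\rm Prob}\le e^{n(\mathcal{K}_0-\Delta S+2\delta)}\).
\end{itemize}
Your two bounds therefore give
\[
\frac{{\rm Prob}(\mathcal{Q}\in A\cap W)}{{\rm Prob}(\mathcal{Q}\in A\cap V)}\geq \frac{e^{n(\mathcal{K}_0-\delta)}}{e^{n(\mathcal{K}_0-\Delta S+2\delta)}}=e^{n(\Delta S-3\delta)},
\]
not \(e^{n(\Delta S-2\delta)}\). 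With \(\delta=(1-\alpha)\Delta S/2\) the exponent is \(n(3\alpha-1)\Delta S/2\), which is strictly less than \(n\alpha\Delta S\); for \(\alpha\le 1/3\) the bound does not even exceed \(1\). So, as written, the proposition is not reached.

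The repair is easy, and there are two ways to do it. The first is to take \(\delta=(1-\alpha)\Delta S/3\); nothing else changes, and \(U\) and \(V\) still do not depend on \(W\). The second is to remove the third loss. The upper semicontinuous \(\mathcal{K}\) attains its maximum on the compact set \(\overline{A\cap V}\subset U\), so \(\sup_{\overline{A\cap V}}\mathcal{K}<\mathcal{K}_0-\Delta S+\delta\) strictly. Hence \({\rm Prob}(\mathcal{Q}\in A\cap V)\le e^{n(\mathcal{K}_0-\Delta S+\delta)}\) for large \(n\), and your original split of the slack works.

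With either fix, your proof is more complete than the paper's. The paper writes \({\rm Prob}(\mathcal{Q}\in A\cap W)\ge e^{n\mathcal{K}_0}\) and \({\rm Prob}(\mathcal{Q}\in A\cap V)\le e^{n(\mathcal{K}_0-\alpha\Delta S)}\) as immediate consequences of the \(\liminf/\limsup\) bounds, without accounting for this slack. It also never checks that \(\nu_0\) is an interior point of \(A\cap W\). Your Step 1 does check this, via the continuity of \(E\), of the \(M_i\), and of \(\nu\mapsto\frac{1}{4\pi}\int_S\nu_x\,dx\).
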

\begin{proof}
    From the large deviation principle, one obtains 
    \begin{align*}
        {\rm Prob}(\mathcal{Q}\in A\cap W) \geq \exp\left(n\mathcal{K}_0\right)
    \end{align*}
    for sufficiently large \(n\). 
    \begin{align*}
        B' = \{\nu\in Y \mid \mathcal{K}(\nu) \geq \mathcal{K}_0-\alpha\Delta S\}
    \end{align*}
    is also closed due to the upper-semicontinuity of \(\mathcal{K}\). 
    Note that \(Y\) is a metrizable space, and therefore it is a normal topological space, i.e., any two closed sets with empty intersection can be separated by their open neighborhoods. Since now we have \(C\cap B'=\emptyset\) (compare the value of \(\mathcal{K}\)), there are open subsets \(V\) and \(V'\) such that \(C\subset V\), \(B'\subset V'\), and \(V\cap V'=\emptyset\).
    Since we have \(\overline{A\cap V} \subset \overline{V} \subset {(V')^c}\), 
    \begin{align*}
        \sup_{\nu\in\overline{A\cap V}} \mathcal{K}(\nu) \leq \sup_{\nu\in (V')^c} \mathcal{K}(\nu) \leq \mathcal{K}_0-\alpha\Delta S.
    \end{align*}
    Therefore, by the large deviation principle,
    \begin{align*}
        {\rm Prob}(\mathcal{Q}\in A\cap V) \leq \exp\left(n (\mathcal{K}_0-\alpha\Delta S)\right)
    \end{align*}
    for sufficiently large \(n\), The claim readily follows. 
\end{proof}
Note that, we can obtain the concentration theorem (in the case of our setting) from the above proposition. 
\begin{thm}\label{theorem_concentration}
    Let \(U\) be any open neighborhood of the set of all maximizers of \(S_{\rm mix}\) under the constraints of the energy, the angular momentum, and the Casimir invariants. Then, for sufficiently small \(\varepsilon_E, \varepsilon_{M_i}\,(i=1,2,3)\), and \(\varepsilon_C\), there exists a positive number \(\alpha'>0\) such that
    \begin{align*}
        \frac{{\rm Prob}(\mathcal{Q}\in A\setminus U)}{{\rm Prob}(\mathcal{Q}\in A^\circ)}\leq e^{-\alpha'n}
    \end{align*}
    holds for sufficiently large \(n\).
\end{thm}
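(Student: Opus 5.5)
The plan is to obtain Theorem \ref{theorem_concentration} from the large deviation principle (the theorem of \cite{michel1994large} quoted above), in the same way as Proposition \ref{prop_comparestates}. We bound the numerator from above and the denominator from below.

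Let $\mathcal{K}^* = \sup_{\nu\in A_0}\mathcal{K}(\nu)$, and let $\mathcal{M}\subset A_0$ be the set of maximizers. This set is nonempty and compact, because $A_0$ is closed in the compact space $Y$ and $\mathcal{K}$ is upper semicontinuous. We may assume $\mathcal{K}^*>-\infty$; this is witnessed by $\delta$-type or mixed states, e.g. $\nu_x=\pi_{q_0}$ modified to match $E_0$. Since $\mathcal{M}\subset A_0\subset A^\circ$ (the energy, angular momentum and $\nu\mapsto \frac{1}{4\pi}\int_S\nu_x dx$ are continuous), the lower bound of the large deviation principle gives
\begin{align*}
\liminf_{n\to\infty}\frac1n\log{\rm Prob}(\mathcal{Q}\in A^\circ)\geq \sup_{A^\circ}\mathcal{K}\geq \mathcal{K}^*.
\end{align*}
Hence for any $\eta>0$ we have ${\rm Prob}(\mathcal{Q}\in A^\circ)\geq e^{n(\mathcal{K}^*-\eta)}$ for large $n$.

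For the numerator, $A\setminus U$ is closed, so the upper bound gives
\begin{align*}
\limsup_{n\to\infty}\frac1n\log{\rm Prob}(\mathcal{Q}\in A\setminus U)\leq \sup_{A\setminus U}\mathcal{K}.
\end{align*}
The key step is to show that for sufficiently small $\varepsilon_E,\varepsilon_{M_i},\varepsilon_C$ there is $\delta>0$ with $\sup_{A\setminus U}\mathcal{K}\leq\mathcal{K}^*-\delta$. First, $A_0\setminus U$ is compact, and $\mathcal{K}<\mathcal{K}^*$ on it. An upper semicontinuous function attains its maximum on a compact set, so $\sup_{A_0\setminus U}\mathcal{K}=\mathcal{K}^*-2\delta_0$ for some $\delta_0>0$ (or the set is empty). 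Second, we need stability as the tolerances shrink. Denote by $A_\varepsilon$ the set $A$ with common tolerance $\varepsilon$; these are compact, decrease as $\varepsilon\downarrow0$, and satisfy $\bigcap_\varepsilon A_\varepsilon=A_0$. Suppose the claim fails. Then there are $\varepsilon_j\to0$ and $\nu_j\in A_{\varepsilon_j}\setminus U$ with $\mathcal{K}(\nu_j)>\mathcal{K}^*-\delta_0$. By compactness of $Y$ a subsequence converges to some $\nu\in A_0\setminus U$, using that $U^c$ is closed and the constraint functionals are continuous. Upper semicontinuity then gives $\mathcal{K}(\nu)\geq\limsup\mathcal{K}(\nu_j)\geq\mathcal{K}^*-\delta_0$, a contradiction. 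So $\delta=\delta_0$ works.

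Combining the two bounds with $\eta=\delta/3$ yields, for large $n$,
\begin{align*}
\frac{{\rm Prob}(\mathcal{Q}\in A\setminus U)}{{\rm Prob}(\mathcal{Q}\in A^\circ)}\leq e^{n(\mathcal{K}^*-2\delta/3)}e^{-n(\mathcal{K}^*-\delta/3)}=e^{-n\delta/3},
\end{align*}
so $\alpha'=\delta/3$. Alternatively, one can phrase the argument via Proposition \ref{prop_comparestates} with $\nu_0\in\mathcal{M}$, $C=\overline{A\setminus U}$ and $\Delta S=\delta$, taking $W=A^\circ$.

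The main obstacle is the compactness/semicontinuity step: ensuring that relaxing the constraints does not allow states outside $U$ to have entropy approaching $\mathcal{K}^*$. This relies on $Y$ being compact and metrizable and on the continuity of $E$, established above via the spherical harmonic estimate. A secondary point is checking that $\mathcal{K}^*>-\infty$, i.e. that the constrained set contains a state absolutely continuous with respect to $dx\otimes\pi_{q_0}$; the initial field $\delta_{q_0}$ itself qualifies.
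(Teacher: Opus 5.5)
Your proof is correct and is essentially the paper's proof. The decisive step is the same entropy gap $\sup_{A\setminus U}\mathcal{K}<\mathcal{K}^*$ for small tolerances, obtained from upper semicontinuity of $\mathcal{K}$ and compactness of $Y$ (which you justify more explicitly than the paper's one-line remark), followed by the large deviation bounds with $W=A^\circ$; applying the upper bound directly to the closed set $A\setminus U$ simply inlines Proposition~\ref{prop_comparestates}, and your explicit $\eta$-slack in the large deviation bounds is slightly more careful than the paper's version.
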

\begin{proof}
     
     Let \(k = \sup_{\nu \in A_0\cap U^c} \mathcal{K}(\nu)\) and \(\nu_{\rm max}\) be one of the maximizers of \(\mathcal{K}\) on \(A_0\). If we take sufficiently small \(\varepsilon_E, \varepsilon_{M_i}\,(i=1,2,3)\), and \(\varepsilon_C\), then the inequality
    \begin{align*}
        k\leq \sup_{\nu\in A\cap U^c} \mathcal{K}(\nu) < \mathcal{K}(\nu_{\rm max})
    \end{align*}
    holds (use the upper semicontinuity of \(\mathcal{K}\), and compactness of \(Y\)). Applying the above proposition to \(\Delta S = \mathcal{K}(\nu_{\rm max})-\sup_{\nu\in A\cap U^c} \mathcal{K}(\nu), C=A\setminus U\), \(\nu_0=\nu_{\rm max}\), and \(W = A^\circ\), we obtain the inequality
    \begin{align*}
        \frac{{\rm Prob}(\mathcal{Q}\in A\cap V)}{{\rm Prob}(\mathcal{Q}\in A^\circ)} \leq e^{-n\alpha \Delta S}
    \end{align*}
    for a{n} open neighborhood \(V\) of \(A\setminus U\). Since \(A\setminus U \subset A\cap V\), we obtain the result (with \(\alpha'= \alpha \Delta S\)).
\end{proof}

\section*{Acknowledgment}
We are grateful to {two} anonymous reviewer{s} for helpful comments. This work was supported by JST SPRING, Grant Number JPMJSP2110, and JSPS KAKENHI, Grant Number {JP}20K04061 {and JP24KJ1340}. The GFD-DENNOU Library (\url{http://www.gfd-dennou.org/arch/dcl/}) was used to draw the figures.

\bibliographystyle{jphysicsB}
\bibliography{ref}

\end{document}